%&latex
\documentclass[11pt]{article}
\newcommand{\rr}{{r}}
\newcommand{\hh}{C}

\newcommand{\bb}{H}
\renewcommand{\leq}{\leqslant}
\renewcommand{\ge}{\geqslant}
\renewcommand{\geq}{\geqslant}
\usepackage{amsmath,amssymb,amsfonts,amsthm,epsfig,bm,xspace,epsfig,latexsym}
\usepackage[dvipsnames,usenames]{color}
\usepackage[pdftex,pagebackref,letterpaper=true,colorlinks=true,pdfpagemode=none,urlcolor=blue,linkcolor=blue,citecolor=BrickRed,pdfstartview=FitH]{hyperref}

\usepackage{fullpage}

%\setlength{\textwidth}{6.5 in}
%\setlength{\textheight}{9 in}
%\setlength{\oddsidemargin}{0in}
%\setlength{\topmargin}{0in}
%\addtolength{\voffset}{-.5in}
\newtheorem{theorem}{Theorem}[section]
\newtheorem{lemma}[theorem]{Lemma}
\newtheorem{claim}[theorem]{Claim}
\newtheorem{proposition}[theorem]{Proposition}

\newtheorem{definition}[theorem]{Definition}

\newtheorem{conjecture}[theorem]{Conjecture}
\newtheorem{observation}[theorem]{Observation}

\newcommand{\ignore}[1]{}

\renewcommand{\Pr}{{\bf Pr}}
\renewcommand{\P}{\mathrm{P}}

\newcommand{\Prx}{\mathop{\bf Pr\/}}
\newcommand{\E}{{\bf E}}
\newcommand{\Ex}{\mathop{\bf E\/}}
\newcommand{\Var}{{\bf Var}}

\newcommand{\R}{\mathbb R}
\newcommand{\N}{\mathbb N}

\newcommand{\kuniquelabelcover}{$k$-\textsc{Unique Label Cover}\xspace}
\newcommand{\klabelcover}{$k$-\textsc{Label Cover}\xspace}
\newcommand{\labelcover}{\textsc{Label Cover}\xspace}
\newcommand{\etal}{{\em et al.\ }}

\newcommand{\bn}{\bits^n}

\newcommand{\eps}{\epsilon}

\newcommand{\sgn}{\mathrm{pos}}

\newcommand{\cL}{{\cal L}}
\newcommand{\calE}{{\cal{E}}}

\newcommand{\calP}{{\cal P}}

\newcommand{\calX}{{\cal X}}
\newcommand{\calY}{{\cal Y}}
\newcommand{\calQ}{{\cal Q}}

\newcommand{\calR}{{\cal R}}
\newcommand{\NP}{\mathrm{NP}}

\newcommand{\x}{{\boldsymbol{x}}}

\newcommand{\y}{{\boldsymbol{y}}}

\newcommand{\z}{{\boldsymbol{z}}}

\newcommand{\w}{{\boldsymbol{w}}}

\newcommand{\mcl}[1]{\mathcal {#1}}

\newcommand{\cD}{\mcl{D}}
\newcommand{\cP}{\mcl{P}}
% boldface vectors
\renewcommand{\vec}[1]{{\bm{#1}}}
% prime/tilde vector

% parentheses

% brackets

% set braces

% absolute value sign

% norm

% declare function f by $f \from X \to Y$

% some symbols
%\newcommand{\supp}{\mathrm{supp}}

% define symbol for definition

% big vertical space

% linear algebra
\newcommand{\iprod}[1]{\langle #1\rangle}
% norm

% L2 norm

% L1 norm

\newcommand{\bits}{\{0,1\}}

\newcommand{\vnote}[1]{}
\newcommand{\pnote}[1]{}
\newcommand{\yinote}[1]{}
\newcommand{\vinote}[1]{}

    \newcommand{\truncate}{\mathsf{Truncate}}
    \newcommand{\bvec}[2]{\vec{#1}^{\{#2\}}}
    \newcommand{\bsup}[2]{#1^{\{#2\}}}

    \newcommand{\cA}{\mathcal{A}}
        \newcommand{\cB}{\mathcal{B}}

\parskip=1ex

\title{{\bf Agnostic Learning of Monomials by Halfspaces is Hard}\thanks{An extended abstract appeared in the Proceedings of the 50th IEEE Symposium on Foundations of Computer Science, 2009.}}

\author{{Vitaly Feldman}\thanks{IBM Almaden Research Center, San Jose, CA. {\tt vitaly@post.harvard.edu}.} \and {Venkatesan Guruswami}\thanks{Computer Science Department, Carnegie Mellon University, Pittsburgh, PA. {\tt guruswami@cmu.edu}.} \and {Prasad
  Raghavendra}\thanks{College of Computing, Georgia Institute of Technology, Atlanta, GA. {\tt praghave@cc.gatech.edu.} Some of this work was done when visiting Carnegie Mellon University.} \and {Yi Wu}\thanks{IBM Almaden Research Center, San Jose, CA. {\tt wuyi@us.ibm.com}. Most of this work was done when the author was at Carnegie Mellon University.}}

\begin{document}

\maketitle
\thispagestyle{empty}
\begin{abstract}
  We prove the following strong hardness result for learning: Given a
  distribution of labeled examples from the hypercube such that there
  exists a {\em monomial} consistent with $(1-\eps)$ of the examples,
  it is $\NP$-hard to find a {\em halfspace} that is correct on
  $(1/2+\eps)$ of the examples, for arbitrary constants $\eps
  > 0$. In learning theory terms, weak agnostic learning of monomials
  is hard, even if one is allowed to output a hypothesis from the much
  bigger concept class of halfspaces. This hardness result  subsumes a long line of  previous results, including two  recent   hardness results for the proper learning of monomials and halfspaces. As an immediate corollary of our result we show that weak agnostic learning of {\em decision lists} is  $\NP$-hard.

  Our techniques are quite different from previous hardness proofs for
  learning. We define distributions on positive and negative
  examples for monomials whose first few moments match. We use the
  {\em invariance principle} to argue that {\em regular}  halfspaces (all of whose
  coefficients have small absolute value relative to the total
  $\ell_2$ norm) cannot distinguish between distributions whose first
  few moments match. For highly non-regular subspaces, we use a
  structural lemma from recent work on fooling halfspaces to argue
  that they are ``junta-like'' and one can zero out all but the top
  few coefficients without affecting the performance of the
  halfspace. The top few coefficients form the natural list decoding
  of a halfspace in the context of dictatorship tests/Label Cover
  reductions.

 We note that unlike previous invariance principle based proofs which
  are only known to give Unique-Games hardness, we are able to reduce
  from a version of Label Cover problem that is known to be NP-hard.
 This has inspired follow-up work on bypassing the Unique Games conjecture in some optimal geometric inapproximability results.

 \end{abstract}

\newpage
\section{Introduction}

Boolean conjunctions (or {\em monomials}), decision lists, and halfspaces are among the
most basic concept classes in learning theory.  They are all
long-known to be efficiently PAC learnable, when the given examples are guaranteed to be consistent with a function from any of these concept classes~\cite{valiant,BEHW-occam,Rivest:87}. However,
in practice data is often noisy or too complex to be consistently explained by a simple concept. A common practical approach to such problems is to find a predictor in a certain space of hypotheses that best fits the given examples.
A general model for learning that addresses this scenario is the {\em  agnostic} learning model ~\cite{haussler,KSS94}. An {\it agnostic} learning algorithm for a class of functions $\mathcal{C}$ using a hypothesis space $\mathcal{H}$ is required to perform the following task:  Given examples drawn from some unknown distribution, the algorithm must find a hypothesis in $\mathcal{H}$ that classifies the examples nearly as well as is possible by a hypothesis from $\mathcal{C}$.   The algorithm is said to be a {\it proper} learning algorithm if $\mathcal{C}=\mathcal{H}$.

%Dealing with noise and other inconsistencies is thus one of the most significant issues in learning theory.

In this work we address the complexity of agnostic learning of monomials by algorithms that output a halfspace as a hypothesis. Learning methods that output a halfspace as a hypothesis such as Perceptron \cite{Rosenblatt:58}, Winnow \cite{littlestone}, Support Vector Machines \cite{Vapnik:98} as well as most boosting algorithms are well-studied in theory and widely used in practical prediction systems. These classifiers are often applied to labeled data sets which are not linearly separable. Hence it is of great interest to determine the classes of problems that can be solved by such methods in the agnostic setting. In this work we demonstrate a strong negative result on agnostic learning by halfspaces. We prove that non-trivial agnostic learning of even the relatively simple class of monomials by halfspaces is an NP-hard problem.

\begin{theorem}\label{thm:main}
For any constant $\eps > 0$, it is $\NP$-hard
to find a halfspace that correctly labels $(1/2+\eps)$-fraction of
given examples over $\bn$ even when there exists a
monomial that agrees with a $(1-\eps)$-fraction of the
examples.
\end{theorem}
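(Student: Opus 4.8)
The plan is to build a dictatorship-test style reduction from a suitably robust, $\NP$-hard version of \labelcover (the ``smooth'' label cover of Khot, or equivalently a Label Cover with a projection property that survives the usual parallel-repetition / Raz-verifier hardness). From an instance of this problem we construct a distribution of labeled examples on a hypercube $\bn$, where the ``long code'' block of each label-cover vertex becomes a block of coordinates. The monomial witness in the completeness case is the natural dictator: if the label-cover instance is satisfiable, picking a satisfying labeling yields a single literal (the dictator coordinate in the appropriate block) that agrees with a $(1-\eps)$-fraction of the generated examples; this is straightforward to arrange by putting a small amount of adversarial noise into the distribution so that the monomial need not be perfect. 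The soundness case is where all the work is: we must show that \emph{any} halfspace $h(x)=\sgn(\sum_i a_i x_i - \theta)$ that labels more than $(1/2+\eps)$ of the examples correctly yields, via a decoding procedure, a labeling of the label-cover instance with non-negligibly large value, contradicting $\NP$-hardness.

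The key structural idea, as the abstract indicates, is a dichotomy on the halfspace's coefficient vector restricted to each block. First I would design the positive-example and negative-example distributions on each long-code block so that their first $k$ moments agree (and so that on the ``matched moments'' part the two distributions are indistinguishable to any low-degree / regular test), while their behavior on the dictator coordinate is different. Then, given the halfspace $h$, for each block one of two cases holds: either the coefficients of $h$ on that block are \emph{regular} (every $|a_i|$ is small relative to the $\ell_2$-norm of the block's coefficient vector), in which case the invariance principle (Mossel--O'Donnell--Oleszkiewicz style, comparing the Boolean test distribution to a matching Gaussian) shows that $h$'s acceptance probability is essentially the same under the positive and the negative distributions, so $h$ gains essentially nothing on that block; or the block is \emph{non-regular}, and here I would invoke the structural lemma from the fooling-halfspaces literature (Diakonikolas--Gopalan--Jaiswal--Servedio--Viola / Servedio's ``critical index'' argument) to conclude that $h$ restricted to that block behaves like a junta on the top few (large-coefficient) coordinates: one can zero out all but $O_\eps(1)$ coefficients without changing the behavior of $h$ by more than $\eps$. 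Those surviving top coordinates in a block are exactly the ``list decoding'' of the halfspace at that vertex — a constant-size list of candidate labels — and the matched-moments design of the distributions forces these decoded labels across vertices to be consistent across label-cover edges with non-negligible probability, exactly as in a standard dictatorship-test soundness analysis. Combining: if $h$ beats $1/2+\eps$, it must be non-regular on a noticeable fraction of blocks, the decoded lists are consistent, and we recover a good labeling.

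Putting this together requires three technical pieces, which I would present in order: (i) the moment-matching distribution construction on a single block, including a verification that a single dictator literal achieves $1-\eps$ agreement while no low-influence function does much better than $1/2$; (ii) the regular case, where the main lemma is a quantitative invariance statement saying that a $\tau$-regular halfspace cannot distinguish two product distributions whose first $k$ moments match, to within an error $o_{\tau,k}(1)$ — this is essentially a multi-dimensional Berry--Esseen / invariance bound and I would cite the invariance principle rather than reprove it; and (iii) the non-regular case, where I would invoke the critical-index structural lemma to extract the junta, and then run the Label Cover decoding. The main obstacle — and the genuinely novel point over prior invariance-based hardness results — is step (ii)--(iii) interface combined with the \emph{choice of starting problem}: because the test must only ``fool'' halfspaces (not all low-degree functions), we can afford the weaker matched-moments guarantee, and this is precisely what lets us reduce from $\NP$-hard (smooth) Label Cover rather than from Unique Games. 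Making the smoothness parameters, the number of matched moments $k$, the regularity threshold $\tau$, and the noise rate $\eps$ all fit together consistently is the delicate bookkeeping that the full proof must carry out.
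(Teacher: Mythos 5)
Your proposal follows essentially the same route as the paper: moment-matched distributions on long-code blocks (the paper matches four moments), the invariance principle to show a regular halfspace cannot distinguish them, the critical-index structural lemma to truncate non-regular blocks to a small list of top coordinates, and a reduction from smooth Label Cover whose smoothness controls the fourth-moment error term of the invariance principle, which is exactly how the paper bypasses Unique Games. The only cosmetic deviations (invoking a Gaussian comparison rather than a direct hybrid between the two Boolean ensembles, and saying ``first $k$ moments'' rather than four) do not change the argument.
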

Note that this hardness result is essentially optimal since
it is trivial to find a hypothesis with agreement rate $1/2$
--- output either the function that is always $0$ or the function that
is always $1$. Also note that  Theorem \ref{thm:main} measures agreement of a halfspace and a monomial with the given set of examples rather than the probability of agreement of $h$ with an example drawn randomly from an unknown distribution. Uniform convergence results based on the VC dimension imply that these settings are essentially equivalent (see for example \cite{haussler,KSS94}).

The class of monomials is a subset of the class of decision lists which in turn is a subset of the class of halfspaces.  Therefore our result immediately implies an optimal hardness result for proper agnostic learning of decision lists.
%In addition, a similar hardness result for proper agnostic learning of majority functions can be obtained via a simple reduction. \vinote{We should add a reference to the actual statement later.}

\subsection*{Previous work}

Before describing the details of the prior body of work on hardness results for learning, we note that our result {\em subsumes all these results} with just one exception (the hardness of learning monomials by $t$-CNFs~\cite{KS-dnf}). This is because we obtain the optimal inapproximability factor
{\em and} allow learning of monomials by the much richer class of halfspaces.

The results of the paper are noteworthy in the broader context of hardness of approximation.  Previously, hardness proofs based on the invariance principle were only known to give Unique-Games hardness.  In this work, we are able to harness invariance principles to show NP-hardness result by working with a version of Label Cover whose projection functions are only required to be {\em unique-on-average}. This could be one potential approach to revisit the many strong inapproximability results conditioned on the Unique Games conjecture (UGC), with an eye towards bypassing the UGC assumption. Such a goal was achieved for some geometric problems recently~\cite{GRSW10}; see Section \ref{sec:bypass-ugc}.

Agnostic learning of monomials, decision lists and halfspaces has been studied in a number of previous works. Proper agnostic learning of a class of functions $\mathcal{C}$ is equivalent to the ability to come up with a function in  $\mathcal{C}$ which has the optimal agreement rate with the given set of examples and is also referred to as the {\em Maximum
Agreement} problem for a class of function $\mathcal{C}$.

The Maximum Agreement problem for halfspaces is equivalent to the so-called Hemisphere problem and is long known to be NP-complete \cite{JohnsonPreparata:78,GareyJohnson:79}.
Amaldi and Kann \cite{amaldi-kann} showed that Maximum Agreement for halfspaces is NP-hard to approximate within $\frac{261}{262}$ factor. This was later improved by Ben-David \etal\cite{BenDavidEL:03}, and Bshouty and Burroughs \cite{BshoutyBurroughs:06} to approximation factors $\frac{415}{418}$, and $\frac{84}{85}$, respectively.
An optimal inapproximability result was established independently by Guruswami and Raghavendra~\cite{GR06J}
and Feldman \etal ~\cite{FGKP06J} showing NP-hardness of approximating the Maximum Agreement problem for halfspaces within
$(1/2+\eps)$ for every constant $\eps > 0$. The reduction in \cite{FGKP06J} requires examples with real-valued coordinates, whereas the proof in \cite{GR06J} also works for examples drawn from the Boolean hypercube.

The Maximum Agreement problem for monotone monomials was shown to be
NP-hard by Angluin and Laird~\cite{angluin-laird}, and NP-hardness for general monomials was shown by Kearns and Li~\cite{kearns-li}. The hardness of approximating the
maximum agreement within $\frac{767}{770}$ was shown by Ben-David \etal \cite{BenDavidEL:03}. The factor was subsequently improved to $58/59$ by Bshouty and Burroughs~\cite{BshoutyBurroughs:06}. Finally, Feldman \etal ~\cite{feldman,FGKP06J} showed a tight inapproximability
result, namely that it is
NP-hard to distinguish between the instances where $(1-\eps)$-fraction of the
labeled examples are consistent with some monomial and instances
where every monomial is consistent with at most $(1/2+\eps)$-fraction of the
examples. Recently, Khot and Saket~\cite{KS-dnf} proved a similar
hardness result even when a $t$-CNF is allowed as output hypothesis
for an arbitrary constant $t$ (a $t$-CNF is the conjunction of several
clauses, each of which has at most $t$ literals; a monomial is thus a
$1$-CNF).

For the concept class of decisions lists, APX-hardness (or hardness to approximate within some constant factor) of the Maximum Agreement problem was shown by Bshouty and Burroughs~\cite{BshoutyBurroughs:06}.
As mentioned above, our result subsumes all these results with the exception of \cite{KS-dnf}.

A number of hardness of approximation results are also known for the complementary problem of minimizing disagreement for each of the above concept classes~\cite{KSS94,HoffgenHS:95,ABSS,BshoutyBurroughs:02min,feldman,FGKP06J}. Another well-known evidence of the hardness of agnostic learning of monomials is that even a non-proper agnostic learning of monomials would give an algorithm for learning DNF --- a major open problem in learning theory~\cite{LeeBW:95}. Further, Kalai \etal proved that even agnostic learning of halfspaces with respect to the uniform distribution implies learning of parities with random classification noise --- a long-standing open problem in learning theory and coding \cite{KalaiKMS:08}.

Monomials, decision lists and halfspaces are known to be efficiently learnable in the presence of more benign {\em random} classification noise ~\cite{angluin-laird,Kearns:98,kearns-schapire,Bylander:94,BFKV,cohen}. Simple online algorithms like Perceptron and Winnow learn halfspaces when the examples can be separated with a significant {\em margin} (as is the case if the examples are consistent with a monomial) and are known to be robust to a very mild amount of adversarial noise \cite{Galant:90,AuerWarmuth:98,GentileWarmuth:98}. Our result implies that these positive results will not hold when the adversarial noise rate is $\eps$ for any constant $\eps >0$.

Kalai \etal gave the first non-trivial algorithm for agnostic learning monomials in time $2^{\tilde{O}(\sqrt{n})}$ \cite{KalaiKMS:08}. They also gave a breakthrough result for agnostic learning of halfspaces with respect to the uniform distribution on the hypercube up to any constant accuracy (and analogous results for a number of other settings). Their algorithms output linear thresholds of parities as hypotheses. In contrast, our hardness result is for algorithms that output a halfspace (which is a linear threshold of single variables).
\ignore{An agnostic algorithm for learning decision trees under the uniform
distribution, albeit using membership queries, was recently discovered
in \cite{GKK}.}
\paragraph{Organization of the paper:} We sketch the idea of our proof in Section~\ref{sec:overview}. We define some  probability and analytical  tools in Section~\ref{sec:tools}. In Section~\ref{sec:test} we define the \textit{dictatorship test}, which is an important gadget for the  hardness reduction. For the purpose of illustration, we also show why this dictatorship test already suffices  to prove Theorem~\ref{thm:main} assuming the Unique Games Conjecture~\cite{Khot-UGC}. In Section 5, we describe a reduction from a variant of the  \labelcover problem to prove Theorem~\ref{thm:main} under the assumption that $\P\ne \NP$.

\paragraph{Notation:}
We use $0$ to encode ``False'' and $1$ to encode ``True''. We denote $\sgn(t) : \R\to \{0,1\}$ as the  indicator function of whether $t\geq 0$; i.e., $\sgn(t) = 1$ when $t\geq 0$ and $\sgn(t)= 0$ when $t< 0$.

For $\vec{x} =(x_1,x_2,\dots,x_n) \in \bits^n$, $\vec{w}\in\R^n$, and $\theta\in \R$, a halfspace $h(\vec{x})$ is a Boolean function of the form $\sgn(\vec{w}\cdot \vec{x}-\theta)$; a monomial  (conjunction) is a function of the form $\bigwedge_{i\in S} s_i$, where $S\subseteq [n]$ and $s_i$ is the literal of $x_i$ which can represent either  $x_i$ or $\neg x_i$; a disjunction is a function of the form  $\bigvee_{i\in S} s_i$. One special case of monomials is the  function $f(x) = x_i$ for some $i\in [n]$, also referred to as the $i$-th {\it dictator} function.

\section{Proof Overview}\label{sec:overview}
We prove Theorem \ref{thm:main} by exhibiting a reduction from
the \klabelcover problem, which is a particular variant of the \labelcover
problem.  The \klabelcover problem is defined as follows:
\begin{definition}\label{def:klabelcover}
        For positive integer $M,N$ that $M\geq N$ and $k\geq 2$, an instance of \klabelcover
        $\cL(G(V,E),M,N,\{\pi^{v,e}|e\in E, v\in e\})$ consists of a
        $k$-uniform connected (multi-)hypergraph $G(V,E)$ with vertex set $V$ and
        an edge multiset $E$; a set of functions $\{\pi^{v_i,e}\}_{i=1}^{k}$.
        Every hyperedge $e = (v_1,\ldots,v_{k})$ is associated with a $k$-tuple
        of projection functions $\{\pi^{v_i,e}\}_{i=1}^{k}$ where
        $\pi^{v_i,e}: [M] \to [N]$.

        A vertex labeling $\Lambda$ is an assignment of labels to vertices
        $\Lambda : V\to [M]$. A
        labeling $\Lambda$ is said to {\em strongly} satisfy an edge $e$ if
        $\pi^{v_i,e}(\Lambda(v_i)) = \pi^{v_j,e}(\Lambda(v_j)))$ for every
        $v_i,v_j\in e$. A labeling $\Lambda$ {\em weakly} satisfies edge $e$ if
        $\pi^{v_i,e}(\Lambda(v_i)) = \pi^{v_j,e}(\Lambda(v_j)))$ for some $v_i,
        v_j\in e$, $v_i \neq v_j$.
\end{definition}
The goal in \labelcover is to find a vertex labeling that satisfies as many
edges (projection constraints) as possible.

\subsection{Hardness assuming the Unique Games conjecture}

For the sake of clarity, we first sketch the proof of
Theorem \ref{thm:main} with a reduction from the  \kuniquelabelcover
problem which is a special case of \klabelcover where $M= N$ and all the projection functions $\{\pi^{v,e}| v \in e, e \in E\}$ are bijections.  The following inapproximability result  ~\cite{KhotR08} for
\kuniquelabelcover is  equivalent to the Unique
Games Conjecture of Khot \cite{Khot-UGC}.
\begin{conjecture}\label{ugconj}
        For every constant $\eta > 0$ and a positive integer $k$, there exists an integer
        $R_0$ such that for all positive integers $R > R_0$, given an instance $\cL(G(V,E),R,R,\{\pi^{v,e} | e\in E, v\in e\})$ it is
        $\NP$-hard to distinguish between,
        \begin{itemize}\itemsep=0ex
                \item strongly satisfiable instances: there exists a
                        labeling $\Lambda : V \to [R]$ that {\it strongly
                        satisfies} $1-k\eta$ fraction of the edges $E$.
                \item almost unsatisfiable instances: there is no labeling that weakly
                        satisfies $\frac{2k^2}{R^{\eta/4}}$ fraction of the edges.
        \end{itemize}
\end{conjecture}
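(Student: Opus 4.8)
The plan is to prove both directions of the claimed equivalence between Conjecture~\ref{ugconj} and Khot's Unique Games Conjecture~\cite{Khot-UGC}. The direction ``Conjecture~\ref{ugconj} $\Rightarrow$ UGC'' is immediate: taking $k=2$ makes every hyperedge an ordinary edge with a bijective constraint and makes strong and weak satisfaction coincide, so the statement reduces to the standard hardness of unique games. For the converse I would assume UGC in its standard ``strong'' form (which is itself equivalent to UGC, and is stated for bipartite two-prover games): for every constants $\eta>0$ and $c>0$ there is $R_0$ such that for all $R>R_0$ it is $\NP$-hard to distinguish a bipartite unique games instance $\mathcal U$ on $\mathcal X\sqcup\mathcal Y$ with label set $[R]$ and bijective constraints $\pi_{(x,y)}\colon[R]\to[R]$ (edge $(x,y)$ satisfied by $\Phi$ iff $\Phi(y)=\pi_{(x,y)}(\Phi(x))$) of value $\ge 1-\eta$ from one of value $\le R^{-c}$. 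I would reduce this, for the given $\eta,k$ and with $c=\eta/4$, to an instance of \kuniquelabelcover over the same alphabet $[R]$.

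\textbf{The reduction and completeness.} Build a $k$-uniform instance $\cL$ on vertex set $\mathcal X$ and label set $[R]$ by ``star'' sampling: pick a center $w\in\mathcal Y$ and $k$ i.i.d.\ neighbors $v_1,\dots,v_k$ of $w$, under the weighting that makes each $v_i$ marginally a uniformly random edge endpoint, and output the hyperedge $e=\{v_1,\dots,v_k\}$ with projections $\pi^{v_i,e}\defeq\pi_{(v_i,w)}$; degenerate samples are kept, so $\cL$ is a multi-hypergraph, and $\cL$ is connected whenever $\mathcal U$ is, since two $\mathcal X$-vertices sharing a $\mathcal Y$-neighbor lie in a common hyperedge. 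If $\Lambda$ satisfies $\ge 1-\eta$ of the edges of $\mathcal U$, then $\Lambda$, viewed as a labeling of $\mathcal X$, strongly satisfies a hyperedge exactly when all $k$ underlying edges $(v_i,w)$ are satisfied; since each $(v_i,w)$ is marginally a uniform edge, a union bound gives strong satisfaction with probability $\ge 1-k\eta$, matching the completeness clause of Conjecture~\ref{ugconj}.

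\textbf{Soundness.} Suppose some $\Lambda\colon\mathcal X\to[R]$ weakly satisfies at least $\frac{2k^2}{R^{\eta/4}}$ of the hyperedges of $\cL$. A hyperedge is weakly satisfied only if one of its $\binom k2\le k^2/2$ coordinate pairs agrees, and the $k$ slots are exchangeable, so by averaging one pair --- say slots $1,2$ --- agrees with probability $\ge \frac{2}{k^2}\cdot\frac{2k^2}{R^{\eta/4}}=\frac{4}{R^{\eta/4}}$. Writing $p_w(\ell)\defeq\Pr_{v\sim N(w)}[\pi_{(v,w)}(\Lambda(v))=\ell]$ for the label distribution that $\Lambda$ induces at $w$, this says $\E_w\snormt{p_w}=\E_w\sum_\ell p_w(\ell)^2\ge \frac{4}{R^{\eta/4}}$. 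Now extend $\Lambda$ to $\mathcal Y$ by drawing $\Phi(w)\sim p_w$ independently and keeping $\Phi|_{\mathcal X}=\Lambda$; then edge $(v,w)$ is satisfied with probability $p_w(\pi_{(v,w)}(\Lambda(v)))$, and averaging over $w$ and $v\sim N(w)$ gives expected value $\E_w\sum_\ell p_w(\ell)^2\ge \frac{4}{R^{\eta/4}}>R^{-\eta/4}$, so some fixed extension has value exceeding $R^{-\eta/4}$ --- contradicting the soundness side of $\mathcal U$. Taking the contrapositive yields the ``almost unsatisfiable'' clause of Conjecture~\ref{ugconj}.

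\textbf{Main obstacle.} Everything above is a union bound plus a second-moment (pushforward-distribution) calculation; the one substantive input is the strong form of UGC invoked at the start, namely soundness that is polynomially small in the label-set size $R$ \emph{for every large $R$}, rather than an arbitrarily small constant attained at a single $R$ depending on it. The reduction genuinely needs soundness $\approx R^{-\eta/4}$ at the \emph{same} alphabet in order to undercut the $\frac{2k^2}{R^{\eta/4}}$ threshold, so one must first amplify the soundness of unique games without spoiling near-perfect completeness. That amplification (and the bookkeeping that turns the ``star'' sampling into an explicit weighted hypergraph) is exactly the packaging supplied in~\cite{KhotR08}, and it is the step I expect to require the most care.
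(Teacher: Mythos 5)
There is an important mismatch of expectations here: the paper does not prove Conjecture~\ref{ugconj} at all. It is stated as a conjecture, asserted to be equivalent to the Unique Games Conjecture, with the formulation imported from \cite{KhotR08}; no argument is given. So the only fair comparison is with the standard Khot--Regev-style packaging that the citation stands for. Your packaging is indeed that argument, and the routine parts of it are correct: the $k=2$ observation gives one direction, and for the other direction the star construction (sample a center $w$ and $k$ independent neighbours, pull back the bijections onto the hyperedge) plus the union bound gives the $1-k\eta$ completeness, while the second-moment/recommended-label decoding converts a labeling that weakly satisfies an $\alpha$ fraction of hyperedges into an assignment of the bipartite game of value at least $\tfrac{2\alpha}{k^2}$. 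The paper itself runs exactly this construction and decoding, in the Label Cover setting, in its proof of Theorem~\ref{thm:sml} (Appendix~\ref{sec:sm}); your variant that samples $\Phi(w)\sim p_w$ rather than taking the most recommended label is a clean and quantitatively slightly better version of the same step.

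The genuine gap is the ingredient you flag at the end and then simply assume: a form of UGC whose soundness is polynomially small in the alphabet size $R$ (namely $R^{-\eta/4}$) and which is hard for \emph{every} sufficiently large $R$, not merely for some $R$ depending on the target soundness. Calling this ``itself equivalent to UGC'' and deferring it to \cite{KhotR08} concedes essentially all of the content of Conjecture~\ref{ugconj}: once that strong form is granted, the star construction and the averaging argument are bookkeeping. And the strengthening is not routine. Padding the alphabet preserves completeness but leaves the soundness stuck at the original constant, so it cannot produce a bound decaying with $R$; and parallel repetition with the Raz-type bounds available shrinks the value like $2^{-\Omega(t/\log R_0)}$ while the alphabet grows like $R_0^t$, which is far too slow to reach $R^{-\eta/4}$ once the base alphabet $R_0$ must grow (as it does, because the base completeness has to be taken as $1-\eta/t$). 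The soundness-in-$R$ issue matters for this paper precisely because its decoded label lists have size $t=O(k^{16}\ln R)$, so a soundness threshold that is merely a small constant would not suffice. In short: your reduction is the right one and matches what the cited source does, but as a standalone proof it reduces the statement to an unproved assumption of the same strength, which is the same move the paper makes by citing \cite{KhotR08}; the amplification you postpone is where all the work (and the actual equivalence claim) lives.
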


Given an instance $\cL$ of \kuniquelabelcover, we will produce a distribution $\cD$ over labeled examples such that the following holds: if $\cL$ is a
strongly satisfiable instance,
then there is a disjunction that agrees with the label on  a randomly chosen example  with probability at least $1-\eps$,  while if $\cL$ is an almost unsatisfiable
instance then no halfspace agrees with the label on  a random example from $\cD$ with probability more than $\frac{1}{2}+\eps$. Clearly, such a reduction
implies Theorem \ref{thm:main} assuming the Unique Games
Conjecture but with disjunctions in place of conjunctions. De Morgan's law and the fact that a negation of a halfspace is a halfspace then imply that the statement is also true for monomials (we use disjunctions only for convenience).

Let $\cL$ be an instance of \kuniquelabelcover on
hypergraph $G = (V,E)$ and a set of labels $[R]$.  The examples we
generate will have $|V| \times R$ coordinates, i.e., belong to
$\{0,1\}^{|V| \times R}$.  These coordinates are to be thought of as
one block of $R$ coordinates for every vertex $v \in V$.  We will
index the coordinates of $\vec{x} \in \{0,1\}^{|V| \times R}$ as
$\vec{x} = (x_{v}^{(\rr)})_{v \in V, \rr\in [R]}$.

For every labeling $\Lambda : V \to [R]$ of the instance, there is a
corresponding disjunction over $\{0,1\}^{|V| \times R}$
given by,
$$h(\vec{x}) = \bigvee_{v} x_{v}^{(\Lambda(v))}.$$
Thus, using a label $\rr$ for a vertex $v$ is encoded as including
the literal $x_v^{(\rr)}$ in the disjunction.
Notice that an arbitrary halfspace over $\{0,1\}^{|V| \times
R}$ need not correspond to any labeling at all.  The idea
would be to construct a distribution on examples which ensures that any halfspace agreeing with at least $\frac{1}{2} + \eps$ fraction of random examples somehow corresponds to a labeling of
$\Lambda$ weakly satisfying a constant fraction of the edges in $\cL$.

Fix an edge $e = (v_1,\ldots,v_k)$.  For the sake of exposition, let
us assume $\pi^{v_i,e}$ is the identity permutation for every $i \in
[k]$.  The general case is not anymore complicated.

 For the edge $e$,
we will construct  a distribution on examples $\cD_e$ with the following properties:
\begin{itemize}\itemsep=0ex
        \item All coordinates $x_v^{(\rr)}$ for a vertex $v \notin e$
                are fixed to be zero.  Restricted to these
                examples, the halfspace $h$ can be written as
                $h(\vec{x}) = \sgn(\sum_{i \in [k]}
                \iprod{\vec{w}_{v_i},\vec{x}_{v_i}} - \theta)$.
        \item For any label $\rr \in [R]$, the labeling $\Lambda(v_1) =
                \ldots = \Lambda(v_k) = \rr$ {\em strongly} satisfies the
                edge $e$.  Hence, the corresponding disjunction
                $\vee_{i \in [k]} x_{v_i}^{(\rr)}$ needs to have agreement
                $\geq 1-\eps$ with the examples from $\cD_e$.
        \item There exists a decoding procedure that given a halfspace
                $h$ outputs a labeling $\Lambda_h$ for $\cL$ such that,
                if $h$ has agreement
                $\geq \frac{1}{2}+\eps$ with the examples from $\cD_e$,
                then $\Lambda_h$ {\it weakly} satisfies the edge $e$ with
                non-negligible probability.
\end{itemize}

For conceptual clarity, let us rephrase the above requirement as a
testing problem. Given a halfspace $h$, consider a randomized
procedure that samples an example $(\vec{x},b)$ from the
distribution $\cD_e$, and accepts if $h(\vec{x}) = b$.  This amounts to
a test that checks if the function $h$ corresponds to a consistent
labeling.  Further, let us suppose the halfspace $h$ is given by
$h(\vec{x}) = \sgn\left(\sum_{v \in V} \iprod{\vec{w}_v,\vec{x}_v} -
\theta\right).$
%FULL $$h(\vec{x}) = \sgn\left(\sum_{v \in V} \iprod{\vec{w}_v,\vec{x}_v} -
%\theta\right).$$
Define the linear function $f_v : \{0,1\}^{R} \to \R$ as $f_v(\vec{x}_v)
= \iprod{\vec{w}_v,\vec{x}_v}$.  Then, we have $h(\vec{x}) = \sgn(\sum_{v \in
V} f_v(\vec{x}_v) -\theta)$.

For a halfspace $h$ corresponding to a labeling $\Lambda$,  we will have
$f_v(\vec{x}_v) = x_v^{(\Lambda(v))}$ -- a dictator function.   Thus, in the intended solution every linear
function $f_v$ associated with the halfspace $h$ is a dictator function.

Now, let us again restate the above testing problem in terms of these
linear functions.  For succinctness, we write $f_{i}$ for the linear
function $f_{v_i}$.  We need a randomized procedure that does the
following:
\begin{quote}
Given $k$ linear functions $f_{1},\ldots,f_{k} : \{0,1\}^R \to
\R$, queries the functions at one point each (say $\vec{x}_{1},\ldots,\vec{x}_{k}$
respectively), and accepts if $\sgn(\sum_{i=1}^k f_{i}(\vec{x}_{i}) -
\theta) = b$.
\end{quote}
The procedure must satisfy,
\begin{itemize}
        \item (Completeness) If each of the linear functions $f_i$ is the $\rr$'th
                dictator function for some $\rr \in [R]$, then the
                test accepts with probability $1-\eps$.
        \item (Soundness)  If the test accepts with probability
                $\frac{1}{2}+\eps$, then at least {\em two} of the linear
                functions  are {\em close} to the same dictator function.
\end{itemize}
A testing problem of the above nature is referred to as a
{\em Dictatorship Testing} and is a recurring theme in hardness of
approximation.

Notice that the notion of a linear function being {\it close} to a
dictator function is not formally defined yet.  In most applications,
a function is said to be close to a dictator if it has {\it
influential} coordinates.  It is easy to see that this notion is not
sufficient by itself here.  For example, in the linear function
$\sgn(10^{100}x_1 +
x_2 - 0.5)$,  although the coordinate $x_2$ has little influence on
the linear function, it has significant influence on the
halfspace.

We resolve this problem by using the notion of {\it critical index}
(Definition \ref{def:cti}) that was introduced
in \cite{Ser07} and has found numerous applications in the analysis of
halfspaces \cite{MORS09,OS08,DGJSV09}.  Roughly speaking, given a linear
function $f$, the idea is to recursively delete its influential coordinates until
there are none left.  The total number of coordinates so deleted is
referred to as the critical index of $f$.  Let $c_{\tau}(\vec{w}_i)$ denote the critical index of $\vec{w}_i$, and
let $C_{\tau}(\vec{w}_i)$ denote the set of $c_{\tau}(\vec{w}_i)$
largest coordinates of $\vec{w}_i$.  The linear function $l$ is said
to be {\it close} to the $i$'th dictator function for every $i$ in
$C_{\tau}(\vec{w}_i)$.  A function is {\it far}
from every dictator if it has critical index $0$ -- no influential coordinate to delete.

An important issue is that the critical index of a linear function can be much larger than the number of influential coordinates and cannot be appropriately bounded.
 In other words, a linear function can be close to a
large number of dictator functions, as per the definition above.  To
counter this, we employ a structural lemma
about halfspaces that was used in the recent work on fooling
halfspaces with limited independence~\cite{DGJSV09}. Using this lemma, we are
able to prove that if the critical index is large, then one can in fact
zero out the coordinates of $\vec{w}_i$ outside the $t$ largest
coordinates for some large enough $t$, and the agreement of the halfspace $h$ only changes by a negligible amount!
Thus, we first carry out the zeroing operation for
all linear functions with large critical index.

We now describe the above construction and analysis of the
dictatorship test in some more detail.  It is convenient to think of
the $k$ queries $\vec{x}_1,\ldots,\vec{x}_k$ as the rows of a $k \times
R$ matrix with $\{0,1\}$ entries.  Henceforth, we will refer to
matrices $\{0,1\}^{k \times R}$ and their rows and columns.

We construct two distributions
$\cD_0,\cD_1$ on $\{0,1\}^k$ such that for $s \in \{0,1\}$, we have $\Pr_{x
  \in \cD_s} \bigl[ \vee_{i=1}^k x_i = s \bigr] \ge 1-\eps/2$ for
$\eps=o_k(1)$ (this will ensure the completeness of the reduction,
i.e., certain disjunctions pass with high probability). Further, the
distributions $\cD_0, \cD_1$ will be carefully chosen to have matching first four
moments. This will be used in the soundness analysis where we will use
an {\em invariance principle} to infer structural properties of
halfspaces that pass the test with probability noticeably greater than
$1/2$.

We define the distribution $\tilde{\cD}_s^R$ on matrices $\{0,1\}^{k \times R}$
by sampling $R$ columns independently according to $\cD_s$, and then
perturbing each bit with a small probability $\eps/2$.  We define
the following test (or equivalently, distribution on examples): given a
halfspace $h$ on $\{0,1\}^{k \times R}$, with probability $1/2$ we
check $h(\vec{x}) = 0$ for a sample $\vec{x} \in \tilde{\cD}_0^R$, and with
probability $1/2$ we check $h(\vec{x})=1$ for a sample $\vec{x} \in
\tilde{\cD}_1^R$.

\smallskip
\noindent\textbf{Completeness:} By construction, each of the $R$ disjunctions $\mathsf{OR}_j(\vec{x})
= \vee_{i=1}^k x^{(j)}_i$ passes the test with probability at least
$1-\eps$ (here $x^{(j)}_i$ denotes the entry in the $i$'th row and
$j$'th column of $\vec{x}$).

\smallskip
\noindent\textbf{Soundness:}
For the soundness analysis, suppose $h(\vec{x}) = \sgn(\langle
\vec{w},\vec{x}\rangle-\theta)$  is a halfspace that passes the
test with probability at least $1/2+\epsilon$.  The halfspace $h$ can be written
in two ways by expanding the inner product $\iprod{\vec{w},\vec{x}}$ along
rows and columns, i.e.,
$h(\vec{x}) = \sgn(\sum_{i=1}^k \langle
\vec{w}_i,\vec{x}_i\rangle - \theta)= \sgn(\sum_{i=1}^R \langle
\vec{w}^{(i)},\vec{x}^{(i)}\rangle - \theta) .$
Let us denote $f_i(\vec{x}) =
\iprod{\vec{w}_i,\vec{x}_i}$.

First, let us see why the linear functions
$\iprod{\vec{w}_i,\vec{x}_i}$ must be close to {\em some} dictator.
Note that we need to show that two of the linear functions are close
to the {\it same} dictator.

Suppose each of the linear functions $f_{i}$ is not {\it close} to any
dictator.  In other words, for each $i$, no single coordinate of the
vector $\vec{w}_i$ is too large (contains more than $\tau$-fraction of
the $\ell_2$ mass $\|\vec{w}_i\|_2$ of vector $\vec{w}_i$ ).  Clearly,
this implies that no single column of the matrix $\vec{w}$ is too {\it
large}.

Recall that the halfspace is given by $ h(\vec{x}) = \sgn(\sum_{j
\in [R]} \iprod{\vec{w}^{(j)},\vec{x}^{(j)}}-\theta).$
Here $l(\vec{x}) = \sum_{j \in [R]} \iprod{\vec{w}^{(j)},\vec{x}^{(j)}}-\theta$ is a
degree $1$ polynomial into which we are substituting values from two
product distributions $\cD_0^R$ and $\cD_1^R$.  Further, the
distributions $\cD_0$ and $\cD_1$ have matching moments up to order
$4$ by design.  Using the invariance principle, the distribution of
$l(\vec{x})$ is roughly the same, whether $\vec{x}$ is from $\cD_0^R$ or
$\cD_1^R$.  Thus, by the invariance principle,  the halfspace
$h$ is unable to distinguish between the distributions $\cD_0^R$
and $\cD_1^R$ with a noticeable advantage.

Further, suppose no two linear functions $f_i$ are {\it close} to the same
dictator, i.e., $C_{\tau}(\vec{w}_i) \cap C_{\tau}(\vec{w}_j) =
\emptyset$.
In this case, we condition on the
values of $x^{(j)}_i$ for $j\in C_{\tau}(\vec{w}_i)$.  Since $C_{\tau}(\vec{w}_i) \cap C_{\tau}(\vec{w}_j) =
\emptyset$, this conditions at
most \emph{one} value in each column.  Therefore, the conditional distribution on each
column in cases $\cD_0$ and $\cD_1$ still have matching first three moments.  We thus apply the
invariance principle using the fact that after deleting the coordinates in
$C_{\tau}(\vec{w}_i)$, all the remaining
coefficients of the weight vector $\vec{w}$ are small (by definition of
critical index).  This implies that $C_{\tau}(\vec{w}_i) \cap
C_{\tau}(\vec{w}_j) \neq \emptyset$ for some two rows $i,j$ and finishes
the proof of the soundness claim.

The above consistency-enforcing test almost immediately yields the
Unique Games hardness of weak learning disjunctions by halfspaces
via standard methods.

\vspace{-1ex}
\subsection{Extending to NP-hardness}
\vspace{-1ex}

To prove NP-hardness as opposed to hardness assuming the Unique Games conjecture, we reduce a version of Label Cover to our problem. This requires a more complicated consistency check, and we have to overcome
several additional technical obstacles in the proof.

The main obstacle encountered in transferring the
dictatorship test to a Label Cover-based hardness is one that
commonly arises for several other problems.  Specifically, the projection
constraint on an edge $e = (u,v)$ maps a large set of labels
$ \calR=\{\rr_1,\ldots,\rr_d\}$ corresponding to a vertex $u$ to a single
label $\rr$ for the vertex $v$.  While composing the Label Cover
constraint $(u,v)$ with the dictatorship test, all labels in $\calR$ have
to be necessarily {\it equivalent}.  In several settings including
this work, this requires the coordinates corresponding to labels in
$\calR$ to be {mostly identical}!  However, on making the coordinates
corresponding to $\calR$ identical, the prover corresponding to $u$ can
determine the identity of edge $(u,v)$, thus completely destroying the
soundness of the composition.  In fact, the natural extension of
the Unique Games-based reduction for {\sc MaxCut} \cite{KhotKMO07} to
a corresponding Label Cover hardness fails primarily for this reason.

Unlike {\sc MaxCut} or other Unique Games-based reductions, in our
case, the soundness of the dictatorship test is required to hold
against a specific class of functions, i.e, halfspaces.  Harnessing
this fact, we execute the reduction starting from a Label Cover
instance whose projections are {\it unique on average}.  More
precisely, a {\it smooth} Label Cover (introduced in \cite{Khot03}) is one in which for every vertex $u$, and a pair of labels $\rr,\rr'$, the labels $\{\rr,\rr'\}$ project
to the same label with a tiny probability over the choice of the edge
$e = (u,v)$.  Technically, we express the error term in the invariance
principle as a certain fourth moment of the coefficients of the halfspace, and use the smoothness to bound this error term for most edges of the Label Cover instance.

\vspace{-1ex}
\subsection{Bypassing the Unique Games conjecture}
\label{sec:bypass-ugc}

Unlike previous invariance principle based proofs which are only known to give Unique-Games hardness, we are able to reduce from a version of the Label Cover problem, based on \emph{unique on average} projections, that can be shown to be NP-hard.  It is of great interest to find other applications where a {\it weak uniqueness} property like the smoothness condition mentioned above can be used to convert a Unique-Games hardness result to an unconditional NP-hardness
result. Indeed, inspired by the success of this work in avoiding the UGC assumption and using some of our methods, follow-up work has managed to bypass the Unique Games conjecture in some optimal geometric inapproximability results~\cite{GRSW10}. To the best of our knowledge, the results of \cite{GRSW10} are the first NP-hardness proofs showing a tight inapproximability factor that is related to fundamental parameters of Gaussian space, and among the small handful of results where optimality of a non-trivial semidefinite programming based algorithm is shown under the assumption ${\rm P} \neq {\rm NP}$. We hope that this paper has thus opened the avenue to convert at least some of the many tight Unique-Games hardness results to NP-hardness results.

\section{Preliminaries}\label{sec:tools}

%We use \textbf{boldface} to denote vectors.
In this section, we define two  important tools in our analysis: i) critical index, ii) invariance principle.

\subsection{Critical Index}
The notion of critical index was first introduced by Servedio \cite{Ser07} and plays an important role in the analysis of  halfspaces in \cite{MORS09,OS08,DGJSV09}.

\begin{definition}\label{def:cti}
        Given any real vector $\vec{w} =(w^{(1)},w^{(2)},\ldots,w^{(n)})\in
        \R^n$.
        Reorder the coordinates by decreasing absolute value, i.e.,
         $|w^{(i_1)}|\geq |w^{(i_2)}|\geq \ldots \geq|w^{(i_n)}|$ and denote $\sigma_t^2 = \sum_{j=t}^{n}|w^{(i_j)}|^2$. For $0\leq
         \tau\leq 1$,  the $\tau$-critical index of the vector
         $\vec{w}$ is
        defined to be the smallest index $k$ such  $|w^{(i_k)}|\leq \tau
        \sigma_k$. If no such $k$ exists ($\forall k$, $|w^{(i_k)}| >
        \tau \sigma_k$),  the $\tau$-critical index is defined to be
        $+\infty$. The vector $\vec{w}$ is said to be $\tau$-regular if the $\tau$-critical index is $1$.
\end{definition}
A simple observation from \cite{DGJSV09} is that if the critical index of a sequence is large then the sequence must contain a geometrically decreasing subsequence.
\begin{lemma}(Lemma $5.5$ in \cite{DGJSV09}) \label{lem:st} Given a vector
        $\vec{w} = (w^{(i)})_{i=1}^n$ such that $|w^{(1)}|\geq |w^{(2)}|\geq \ldots \geq
        |w^{(n)}|$, if the $\tau$-critical index of the vector
        $\vec{w}$ is larger than $l$, then for any $1\leq i\leq j\leq l+1$,
        \[|w^{(j)}|\leq \sigma_j\leq (\sqrt{1-\tau^2})^{j-i}\sigma_i
        \leq (\sqrt{1-\tau^2})^{j-i} |w^{(i)}|/\tau.\] In particular,
        if $j > i + (4/\tau^2) \ln(1/\tau) $ then $|w^{(j)}|\leq
        |w^{(i)}|/3$.
\end{lemma}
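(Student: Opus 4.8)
The plan is to unwind the definition of the $\tau$-critical index. By hypothesis the critical index exceeds $l$, which means that for every index $t$ with $1 \le t \le l+1$ we have $|w^{(t)}| > \tau\,\sigma_t$, i.e., the defining inequality $|w^{(i_k)}| \le \tau\sigma_k$ has not yet been triggered at any of the first $l+1$ positions. (Here the reordering is trivial since the paper already assumes $|w^{(1)}| \ge |w^{(2)}| \ge \cdots$, so $i_j = j$.) The first observation is the chain $|w^{(j)}| \le \sigma_j$, which is immediate because $\sigma_j^2 = \sum_{m \ge j} |w^{(m)}|^2 \ge |w^{(j)}|^2$. The heart of the argument is to show $\sigma_{t+1} \le \sqrt{1-\tau^2}\,\sigma_t$ for each $t$ in range, and then iterate.

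For that single-step contraction, I would write $\sigma_t^2 = |w^{(t)}|^2 + \sigma_{t+1}^2$, which is just splitting off the $t$-th term from the tail sum. Since $t \le l+1$ lies below the critical index, $|w^{(t)}| > \tau\sigma_t$, hence $|w^{(t)}|^2 \ge \tau^2\sigma_t^2$ (I would use $\ge$ to be safe for the endpoint). Substituting, $\sigma_{t+1}^2 = \sigma_t^2 - |w^{(t)}|^2 \le \sigma_t^2 - \tau^2\sigma_t^2 = (1-\tau^2)\sigma_t^2$, so $\sigma_{t+1} \le \sqrt{1-\tau^2}\,\sigma_t$. Applying this $j-i$ times for $i \le t < j$ (all such $t$ satisfy $t \le j-1 \le l$, so they are within range) gives $\sigma_j \le (\sqrt{1-\tau^2})^{j-i}\sigma_i$. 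Finally, to close the stated chain, note $\sigma_i < |w^{(i)}|/\tau$ again by the critical-index hypothesis $|w^{(i)}| > \tau\sigma_i$ applied at position $i \le l+1$. Combining the pieces yields $|w^{(j)}| \le \sigma_j \le (\sqrt{1-\tau^2})^{j-i}\sigma_i \le (\sqrt{1-\tau^2})^{j-i}|w^{(i)}|/\tau$.

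For the "in particular" clause, I would plug in $j - i > (4/\tau^2)\ln(1/\tau)$ and bound $(\sqrt{1-\tau^2})^{j-i}/\tau \le 1/3$. Using $1-\tau^2 \le e^{-\tau^2}$, we get $(\sqrt{1-\tau^2})^{j-i} \le e^{-\tau^2(j-i)/2} \le e^{-2\ln(1/\tau)} = \tau^2$, so $(\sqrt{1-\tau^2})^{j-i}/\tau \le \tau \le 1$; this shows $|w^{(j)}| \le |w^{(i)}|$ but one wants the factor $1/3$. A slightly more careful choice of constants — or noting that we may assume $\tau$ small, or absorbing an extra $\ln 3$ factor — gives the $1/3$ bound; since the constant $4$ already has slack (for small $\tau$, $\tau^2/\ln(1/\tau)$ makes the exponent comfortably large), I would simply verify that $(4/\tau^2)\ln(1/\tau)$ steps suffice for all $\tau \in (0,1]$ by a direct estimate, treating the regime $\tau$ close to $1$ separately if needed.

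The only mild subtlety — not really an obstacle — is bookkeeping with the endpoints: one must check that every index $t$ whose critical-index inequality is invoked genuinely lies in the range $\{1,\dots,l+1\}$ where "below the critical index" is guaranteed. Since the contraction step at index $t$ is used only for $i \le t \le j-1 \le l$, and the boundary estimate $\sigma_i < |w^{(i)}|/\tau$ uses index $i \le l+1$, both are safely within range, so the argument goes through cleanly.
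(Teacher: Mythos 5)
The paper itself never proves this lemma --- it is imported verbatim as Lemma 5.5 of \cite{DGJSV09} --- so there is no internal proof to compare against; your argument is the standard one, and its core is correct: split off the top term via $\sigma_t^2 = |w^{(t)}|^2 + \sigma_{t+1}^2$, use the critical-index hypothesis $|w^{(t)}| > \tau\sigma_t$ to get the one-step contraction $\sigma_{t+1} \le \sqrt{1-\tau^2}\,\sigma_t$, telescope over $t = i,\dots,j-1$, and cap the chain with the trivial bounds $|w^{(j)}| \le \sigma_j$ and $\sigma_i \le |w^{(i)}|/\tau$.

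Two points need tightening. First, ``critical index larger than $l$'' gives $|w^{(t)}| > \tau\sigma_t$ only for $t \le l$, not for $t \le l+1$ as you assert (if the critical index equals exactly $l+1$, the inequality can fail at $t = l+1$). This is harmless for the contraction steps, which you correctly restrict to $t \le j-1 \le l$, but your boundary step $\sigma_i \le |w^{(i)}|/\tau$ is justified only for $i \le l$; the degenerate case $i = j = l+1$ is really an imprecision of the statement as transcribed (e.g.\ $w = (1, 0.01, \dots, 0.01)$ with a hundred entries $0.01$ and $\tau = 1/2$ has critical index $2$, yet $\sigma_2 = 0.1 > |w^{(2)}|/\tau = 0.02$), and in the paper the bound is only ever invoked at indices strictly below the critical index. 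Second, and more substantively, your fallback plan to verify the ``in particular'' clause for all $\tau \in (0,1]$ cannot succeed: your (correct) computation gives $|w^{(j)}| \le \tau\,|w^{(i)}|$, which yields the factor $1/3$ only when $\tau \le 1/3$, and for larger $\tau$ the clause is simply false --- take $\tau = 0.9$ and $w = (1,\,0.4)$, where the critical index is $+\infty$, $(4/\tau^2)\ln(1/\tau) < 1$, yet $|w^{(2)}| = 0.4 > |w^{(1)}|/3$. So you should either make the small-$\tau$ hypothesis explicit (in this paper $\tau = k^{-7}$ or $k^{-13}$, so $\tau \le 1/3$ is automatic) or enlarge the step count to $(4/\tau^2)\bigl(\ln(1/\tau) + \ln 3\bigr)$, rather than hope a direct estimate covers $\tau$ near $1$.
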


For a $\tau$-regular weight vector, the following lemma bounds the probability that its weighted sum falls into a small interval under certain distributions on the points.
The proof is in Appendix \ref{sec:lem23}.%appendix
\begin{lemma}\label{lem:spread}
        Let $\vec{w} \in \R^n$ be a $\tau$-regular vector $\vec{w}$,   and $\sum
        |w^{(i)}|^2 = 1$.
        $\cD$ is a distribution over $\bits^n$.  Define a distribution
        $\tilde{\cD}$ on $\bits^n$ as follows:  to generate $\vec{y}$
        from $\tilde{\cD}$, first sample $\vec{x}$ from
        $\cD$ and then define,
        \begin{equation*}
                y^{(i)} = \begin{cases}  x^{(i)}&  \text{ with
                                probability }  1- \gamma \\
                                \text{random bit} & \text{ with
                                probability } \gamma.
                \end{cases}
                \end{equation*}
        Then for any interval $[a,b]$, we have
        \begin{eqnarray*}
                \Pr\Big[ \iprod{\vec{w},\vec{y}} \in [a,b] \Big] \leq
                \frac{4|b-a|}{\sqrt{\gamma}} + \frac{4\tau}{\sqrt{\gamma}} +2e^{-\frac{\gamma^2}{2\tau^2}}.
        \end{eqnarray*}
\end{lemma}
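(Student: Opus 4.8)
The plan is to establish this anti-concentration estimate by exposing the randomness of which coordinates get re-randomized, reducing to a Berry--Esseen bound for a weighted Rademacher sum on a random subset of coordinates, and separately controlling the (small) probability that this random subset is ``too light''.

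First I would let $S\subseteq[n]$ be the random set of coordinates that get replaced by a fresh uniform bit, so each $i$ lies in $S$ independently with probability $\gamma$, and I would condition on $S$ together with the values $(x^{(i)})_{i\notin S}$. Under this conditioning, $\iprod{\vec w,\vec y}=c+\sum_{i\in S}w^{(i)}U_i$, where $c=\sum_{i\notin S}w^{(i)}x^{(i)}$ is a constant and the $U_i$ are i.i.d.\ uniform bits in $\{0,1\}$. Writing $U_i=(1+\epsilon_i)/2$ with $\epsilon_i$ uniform in $\{-1,1\}$, the event $\iprod{\vec w,\vec y}\in[a,b]$ becomes the event that the Rademacher sum $R_S:=\sum_{i\in S}w^{(i)}\epsilon_i$ lands in a fixed interval $I$ of length $2|b-a|$. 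So it suffices to bound $\Pr[R_S\in I]$, where the probability is over the choice of $S$ and the signs.

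Next I would split on the ``good event'' $G=\{\sum_{i\in S}|w^{(i)}|^2\ge\gamma/2\}$. On $G$, conditioned on $S$, I would apply the Berry--Esseen theorem to $R_S$: its variance is $\sigma^2(S):=\sum_{i\in S}|w^{(i)}|^2\ge\gamma/2$, and since $\vec w$ is $\tau$-regular with $\|\vec w\|_2=1$ every coordinate satisfies $|w^{(i)}|\le\tau$ (the $\tau$-critical index being $1$ forces $|w^{(i_1)}|\le\tau\sigma_1=\tau$), so the third-moment ratio governing the Berry--Esseen error is at most $\sum_{i\in S}|w^{(i)}|^3/\sigma^3(S)\le\tau/\sigma(S)\le\tau\sqrt{2/\gamma}$. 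Combining this with the Gaussian small-ball estimate (a mean-zero Gaussian of variance $\sigma^2(S)$ assigns mass at most $|I|/(\sigma(S)\sqrt{2\pi})=2|b-a|/(\sigma(S)\sqrt{2\pi})$ to $I$) yields, after absorbing the absolute constants against the generous slack in the statement, a conditional bound of at most $\tfrac{4|b-a|}{\sqrt\gamma}+\tfrac{4\tau}{\sqrt\gamma}$ on $G$. On $G^c$ I would simply bound the conditional probability by $1$. It remains to bound $\Pr[G^c]=\Pr[\sum_{i\in S}|w^{(i)}|^2<\gamma/2]$: here $\sum_{i\in S}|w^{(i)}|^2=\sum_i|w^{(i)}|^2\mathbf{1}[i\in S]$ is a sum of independent random variables, the $i$-th taking values in $[0,|w^{(i)}|^2]\subseteq[0,\tau^2]$, with mean $\gamma\sum_i|w^{(i)}|^2=\gamma$; Hoeffding's inequality, using $\sum_i(|w^{(i)}|^2)^2\le\tau^2\sum_i|w^{(i)}|^2=\tau^2$, gives $\Pr[G^c]\le\exp(-2(\gamma/2)^2/\tau^2)=\exp(-\gamma^2/(2\tau^2))$, which is at most the last term in the claimed inequality. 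Adding the three contributions finishes the proof.

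The main subtlety is the Berry--Esseen step: one must ensure the restricted weight vector $(w^{(i)})_{i\in S}$ is sufficiently ``spread'' relative to its own $\ell_2$ norm even though $S$ is random and could miss the heavier coordinates. This is exactly why the lower bound $\sigma^2(S)\ge\gamma/2$ on the good event is needed, and why $\tau$-regularity is precisely the right hypothesis --- it forces every individual coordinate to be at most $\tau$, so that the third-moment-to-variance$^{3/2}$ ratio is at most $\tau/\sigma(S)$ uniformly over $S$. Everything else is routine bookkeeping of absolute constants, which is comfortable given the slack between the bounds I expect to obtain and the stated ones.
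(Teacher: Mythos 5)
Your proposal is correct and follows essentially the same route as the paper's proof in the appendix: condition on which coordinates are re-randomized, use Hoeffding to show the re-randomized $\ell_2$ mass is at least $\gamma/2$ except with probability $e^{-\gamma^2/(2\tau^2)}$, and on that good event reduce to a Rademacher sum over the re-randomized coordinates and apply Berry--Esseen (using $|w^{(i)}|\le\tau$ from $\tau$-regularity) together with the Gaussian small-ball bound. The only differences are cosmetic (conditioning on the unreplaced $x$-values rather than invoking independence of the two partial sums, and a one-sided Hoeffding bound), and your constant bookkeeping fits within the stated slack.
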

Intuitively, by the Berry-Esseen Theorem, $\iprod{\vec{w},\vec{y}}$ is $\tau$ close to the Gaussian distribution if each $y^{(i)}$ is a random bit; therefore we can bound the probability that $\iprod{\vec{w},\vec{y}}$  falls into the interval $[a,b]$. In above lemma, each $y^{(i)}$ has probability $\gamma$ to be a random bit, then $\gamma$ fraction of $y^{(i)}$ is set to be a random bit and we can similarly bound the probability that $\iprod{\vec{w},\vec{y}}$ falls into the interval $[a,b]$.

\begin{definition}\label{def:top}
For a vector $\vec{w} \in \R^n$, define set of indices
        $\bb_{t}(\vec{w}) \subseteq [n]$ as the set of indices containing the $t$
        biggest coordinates of $\vec{w}$ by absolute value.
Suppose its $\tau$-critical index is $c_{\tau}$, define set of indices
        $C_{\tau}(\vec{w}) = \bb_{c_{\tau}}(\vec{w})$. In other words, $C_{\tau}(\vec{w})$ is the set of indices whose
        deletion makes the vector $w$ to be $\tau$-regular.   \end{definition}

\begin{definition}
        For a vector $\vec{w} \in \R^n$ and a subset of indices $S
        \subseteq [n]$, define the vector $\truncate(\vec{w},S) \in \R^n$ as:
        \begin{eqnarray*}
                (\truncate(\vec{w},S))^{(i)} = \begin{cases} w^{(i)} \text{ if } i \in
                        S \\ 0 \text{ otherwise }\end{cases}
        \end{eqnarray*}
\end{definition}

 As suggested by Lemma \ref{lem:st}, a weight vector with a large critical
 index has a geometrically decreasing subsequence. The following two
 lemmas use this fact to bound the probability that the weighted sum of a
 geometrically decreasing sequence of weights falls into a small interval.  First, we
 restate Claim 5.7 from \cite{DGJSV09} here.
 \begin{lemma}\label{lem:gm}[Claim 5.7, \cite{DGJSV09}]
       Let $\vec{w} = (w^{(1)},\ldots,w^{(T)})$ be such that $|w^{(1)}| \geq |w^{(2)}| \ldots \geq |w^{(T)}| \ge0$ and
        $|w^{(i+1)}|\leq|\frac{w^{(i)}}{3}|$ for $1\leq i\leq T-1$ . Then for any
        interval $ I = [\alpha-\frac{w^{(T)}}{6},\alpha+\frac{w^{(T)}}{6}]$ of length
        $\frac{|w^{(T)}|}{3}$, there is at most one point  $\vec{x}\in \{0,1\}^T$
        such that  $\iprod{\vec{w},\vec{x}} \in I$.
\end{lemma}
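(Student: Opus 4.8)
The plan is to argue by contradiction: if two distinct points of $\{0,1\}^T$ both landed in an interval of length $|w^{(T)}|/3$, their difference would be a nonzero vector in $\{-1,0,1\}^T$ whose $\vec{w}$‑weighted sum has absolute value at most $|w^{(T)}|/3$, and I will show this is impossible — the geometric decay forces the $\vec{w}$‑weighted sum of \emph{any} nonzero $\{-1,0,1\}$ combination to have absolute value at least $\tfrac12|w^{(T)}|$. (We may assume $|w^{(T)}|>0$, so that $I$ is a genuine interval; this is the only regime in which the lemma is applied.)

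Concretely, suppose $\vec{x}\neq\vec{x}'$ are both in $\{0,1\}^T$ with $\iprod{\vec{w},\vec{x}},\ \iprod{\vec{w},\vec{x}'}\in I$. By the triangle inequality, $\bigl|\iprod{\vec{w},\vec{x}-\vec{x}'}\bigr|\le |I|=|w^{(T)}|/3$. Put $\vec{z}=\vec{x}-\vec{x}'$, so $\vec{z}\in\{-1,0,1\}^T$ and $\vec{z}\neq\vec{0}$. Let $j\in[T]$ be the smallest index with $z^{(j)}\neq 0$; then $z^{(i)}=0$ for $i<j$ and $|z^{(j)}|=1$. Isolating the $j$‑th term and using $|z^{(i)}|\le 1$,
\[
\bigl|\iprod{\vec{w},\vec{z}}\bigr|\ \ge\ |w^{(j)}|-\sum_{i=j+1}^{T}|w^{(i)}| .
\]
Iterating the hypothesis $|w^{(i+1)}|\le |w^{(i)}|/3$ gives $|w^{(i)}|\le |w^{(j)}|\cdot 3^{-(i-j)}$ for every $i\ge j$, hence $\sum_{i=j+1}^{T}|w^{(i)}|\le |w^{(j)}|\sum_{m=1}^{\infty}3^{-m}=\tfrac12|w^{(j)}|$, and therefore $\bigl|\iprod{\vec{w},\vec{z}}\bigr|\ge\tfrac12|w^{(j)}|\ge\tfrac12|w^{(T)}|$, where the last inequality uses that the coordinates are non‑increasing in absolute value so $|w^{(j)}|\ge |w^{(T)}|$. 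This contradicts $\bigl|\iprod{\vec{w},\vec{z}}\bigr|\le |w^{(T)}|/3$. Thus at most one point of $\{0,1\}^T$ has weighted sum in $I$.

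There is essentially no obstacle here — the proof is one short chain of estimates. The only things to watch are that the ratio $1/3$ in the hypothesis is exactly what makes the geometric tail $\sum_{m\ge 1}3^{-m}=1/2$ strictly less than the leading coefficient (so a fraction $\ge 1/2>1/3$ of $|w^{(j)}|$ survives), and the degenerate case $|w^{(T)}|=0$, where $I$ collapses to a point and the statement is not used.
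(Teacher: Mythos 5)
Your proof is correct and follows essentially the same route as the paper's: argue by contradiction, take the smallest index where the two points differ, and bound the difference of weighted sums from below by $|w^{(j)}|-\sum_{i>j}|w^{(i)}|\ge \tfrac12 |w^{(j)}|\ge\tfrac12|w^{(T)}|$, which exceeds the interval length $|w^{(T)}|/3$. Your handling of the degenerate case $|w^{(T)}|=0$ is a harmless extra remark.
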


%FULL
\iffalse
\begin{proof}
        The proof is by contradiction. Suppose there are two points
        $\vec{x},\vec{y}\in \bits^T$ such that $\iprod{\vec{w} , \vec{x}} \in I$ and
        $\iprod{\vec{w} , \vec{y}} \in I$.
        Let $T'$ be the smallest index such that $x$ and $y$ differ. Then
        \[|\iprod{\vec{w} , \vec{x}} - \iprod{\vec{w} , \vec{y}}| \geq
        |{w}^{(T')}| - \sum_{i = T'+1}^T |w^{(i)}|
\]
By the existence of a geometrically decreasing subsequence in $w^{(i)}$, we have $\sum_{i =
T'+1}^T |w^{(i)}| \leq \frac{|w^{(T')}|}{2}$. Overall, we have
$|\iprod{\vec{w}, \vec{x}} -\iprod{\vec{w},\vec{y}}|\geq \frac{|w^{(T')}|}{2} \geq \frac{|w^{(T)}|}{2}$.
Since the length of $I$ is at most $\frac{|w^{(T)}|}{3}$,
$\iprod{\vec{w},\vec{x}}$ and $\iprod{\vec{w} ,\vec{y}}$ cannot both be in $I$.
\end{proof}
\fi

\begin{lemma}\label{lem:geometricsmallball1}
      Let $\vec{w} = (w^{(1)},\ldots,w^{(T)})$ be such that $|w^{(1)}| \geq |w^{(2)}| \ldots \geq |w^{(T)}| \ge0$ and
        $|w^{(i+1)}|\leq|\frac{w^{(i)}}{3}|$ for $1\leq i\leq T-1$.   Let $\cD$ be a
distribution over $\bits^T$.  Define a distribution
        $\tilde{\cD}$ on $\bits^T$ as follows:  To generate $\vec{y}$
        from $\tilde{\cD}$, sample $\vec{x}$ from
        $\cD$ and set
        \begin{equation*}
                y^{(i)} = \begin{cases}  x^{(i)}&  \text{ with
                                probability }  1- \gamma \\
                                \text{random bit} & \text{ with
                                probability } \gamma.
                \end{cases}
                \end{equation*}
        Then for any $\theta \in \R$ we have
        \begin{eqnarray*}
                \Pr\Big[ \iprod{\vec{w},\vec{y}} \in
                [\theta-\frac{w^{(T)}}{6},\theta+\frac{w^{(T)}}{6}] \Big] \leq
                \left(1- \frac{\gamma}{2}\right)^T .
        \end{eqnarray*}
\end{lemma}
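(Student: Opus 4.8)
The plan is to reduce the statement to Lemma \ref{lem:gm} by a clever conditioning argument. Lemma \ref{lem:gm} tells us that for a $3$-geometrically-decreasing sequence $\vec{w}$ of length $T$, any interval of length $|w^{(T)}|/3$ contains at most one point of the form $\iprod{\vec{w},\vec{x}}$ with $\vec{x}\in\{0,1\}^T$. So if we can argue that, when we sample $\vec{y}\sim\tilde{\cD}$, at least one of its coordinates is ``free'' (a uniformly random bit), then conditioned on all the other coordinates the value $\iprod{\vec{w},\vec{y}}$ takes one of two equally likely values differing by $|w^{(j)}|$ for some coordinate $j$, and at most one of these two values can land in the target interval (since $|w^{(j)}|\geq|w^{(T)}|$ and the interval has length $|w^{(T)}|/3 < |w^{(j)}|$). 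Hence, conditioned on having at least one free coordinate, the conditional probability of landing in the interval is at most $1/2$.

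First I would set up the following induction on $T$. The base case $T=1$ (or even $T=0$) is immediate: with probability $\gamma$ the single coordinate $y^{(1)}$ is a random bit, in which case $\iprod{\vec{w},\vec{y}}$ is one of two values $|w^{(1)}| \geq |w^{(T)}|$ apart and the interval of length $|w^{(T)}|/3$ catches at most one of them, contributing at most $\gamma/2$; with probability $1-\gamma$ it is whatever $\cD$ dictated, contributing at most $1-\gamma$; so the probability is at most $1-\gamma/2 = (1-\gamma/2)^1$. For the inductive step, condition on whether coordinate $1$ is free. With probability $\gamma$ it is a random bit: condition further on the value of $y^{(1)}$ and on $\vec{x}$; then $\iprod{\vec{w},\vec{y}} = \pm|w^{(1)}|/2 + (\text{rest})$ shifted, and since flipping $y^{(1)}$ moves the sum by $|w^{(1)}| \ge |w^{(T)}| > |w^{(T)}|/3$, at most one of the two equally likely values of $y^{(1)}$ keeps us in the interval, giving conditional probability at most $1/2$. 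With probability $1-\gamma$ coordinate $1$ is fixed to $x^{(1)}$: then $\iprod{\vec{w},\vec{y}}\in[\theta-w^{(T)}/6,\theta+w^{(T)}/6]$ iff $\iprod{(w^{(2)},\dots,w^{(T)}),(y^{(2)},\dots,y^{(T)})}$ lies in a shifted interval of the same length $|w^{(T)}|/3 \le |w^{(T-1)}|/3$, and $(w^{(2)},\dots,w^{(T)})$ is again a $3$-decreasing sequence of length $T-1$; by the inductive hypothesis this happens with probability at most $(1-\gamma/2)^{T-1}$. Combining: the probability is at most $\gamma\cdot\frac12 + (1-\gamma)\cdot(1-\gamma/2)^{T-1} \le \frac{\gamma}{2}(1-\gamma/2)^{T-1} + (1-\gamma)(1-\gamma/2)^{T-1} = (1-\gamma/2)^{T-1}(1-\gamma/2) = (1-\gamma/2)^T$, using $\frac12 \le (1-\gamma/2)^{T-1}$ which holds since $\gamma\le 1$ forces $(1-\gamma/2)^{T-1}\geq (1/2)^{T-1}$... wait, that inequality goes the wrong way. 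Let me instead bound $\gamma \cdot \frac12 \le \gamma \cdot \frac12 (1-\gamma/2)^{T-1}/(1-\gamma/2)^{T-1}$; more carefully, I would argue directly: $\gamma\cdot\tfrac12 + (1-\gamma)(1-\tfrac{\gamma}{2})^{T-1}$, and since $(1-\tfrac{\gamma}{2})^{T-1}\le 1$ we get this is at most $\tfrac{\gamma}{2} + (1-\gamma)(1-\tfrac{\gamma}{2})^{T-1}$; it suffices to show $\tfrac{\gamma}{2} + (1-\gamma)(1-\tfrac{\gamma}{2})^{T-1} \le (1-\tfrac{\gamma}{2})^{T}$, i.e., $\tfrac{\gamma}{2}\le (1-\tfrac{\gamma}{2})^{T-1}\bigl[(1-\tfrac{\gamma}{2}) - (1-\gamma)\bigr] = (1-\tfrac{\gamma}{2})^{T-1}\cdot\tfrac{\gamma}{2}$, which holds iff $(1-\tfrac{\gamma}{2})^{T-1}\geq 1$ — false again.

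So the naive split of the recursion is slightly lossy, and the main obstacle is getting the constants to line up. The fix I would use is to do the conditioning the other way around, or more robustly: when coordinate $1$ is free, do not throw away the rest — instead recurse on coordinates $2,\dots,T$ of $\tilde{\cD}$ \emph{too}. Precisely, condition first on the randomness defining coordinates $2,\dots,T$ of $\vec y$; whatever they are, $\iprod{(w^{(2)},\dots,w^{(T)}),(y^{(2)},\dots,y^{(T)})}$ is some fixed number, so $\iprod{\vec w,\vec y}\in I$ becomes the event $w^{(1)}y^{(1)}\in I'$ for a fixed interval $I'$ of length $|w^{(T)}|/3<|w^{(1)}|$. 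If $y^{(1)}$ is free (prob.\ $\gamma$), this event has probability $\le 1/2$; if not (prob.\ $1-\gamma$), it has probability $\le 1$. Hence $\Pr[\iprod{\vec w,\vec y}\in I \mid y^{(2)},\dots,y^{(T)}] \le \gamma\cdot\tfrac12 + (1-\gamma)\cdot\mathbf{1}[\text{the }\cD\text{-forced value of }y^{(1)}\text{ lands in }I'] $. Taking expectations and then recursing on the remaining $T-1$ coordinates — where the event ``the forced $y^{(1)}$ lands in $I'$'' is an event depending only on $y^{(2)},\dots,y^{(T)}$ that forces $\iprod{(w^{(2)},\dots,w^{(T)}),(y^{(2)},\dots,y^{(T)})}$ into a sub-interval of length $|w^{(T)}|/3 \le |w^{(T-1)}|/6 + |w^{(T-1)}|/6$ — lets the induction close cleanly with the bound $(1-\gamma/2)^T$. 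I expect writing this conditioning precisely (tracking that the induced interval on the shorter sequence indeed has the right length relative to its own last coordinate, namely length $\le |w^{(T-1)}|/3$ since $|w^{(T)}|\le|w^{(T-1)}|/3 \le|w^{(T-1)}|$) to be the only delicate point; everything else is bookkeeping.
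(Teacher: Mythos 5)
There is a genuine gap, and you in fact exposed it yourself: both your naive split and your proposed ``fix'' bottom out in the same combination $\tfrac{\gamma}{2} + (1-\gamma)\left(1-\tfrac{\gamma}{2}\right)^{T-1}$, which is \emph{not} at most $\left(1-\tfrac{\gamma}{2}\right)^{T}$ (take $T=2$, $\gamma=1$: $\tfrac12$ versus $\tfrac14$). In the fixed version you write $\Pr[\iprod{\vec w,\vec y}\in I \mid \cdot] \le \gamma\cdot\tfrac12 + (1-\gamma)\cdot\mathbf{1}[\,\cdot\,]$ and then take expectations, but the $\gamma\cdot\tfrac12$ term has lost its indicator: when coordinate $1$ is free you should retain the factor $\mathbf{1}\big[S \in I \cup (I-w^{(1)})\big]$ where $S=\sum_{j\ge 2}w^{(j)}y^{(j)}$, so that the whole conditional bound factors as $\left(1-\tfrac{\gamma}{2}\right)\mathbf{1}\big[S \in I\cup(I-w^{(1)})\big]$. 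Only then does a per-coordinate factor survive at every level. To close the induction from there you still need an extra geometric step you never supply: the union $I\cup(I-w^{(1)})$ is two intervals, and applying the inductive hypothesis to each loses a factor $2$ per level; you must argue that since $\sum_{j\ge 2}|w^{(j)}|\le |w^{(1)}|/2$ while the two intervals are $|w^{(1)}|$ apart, the achievable range of $S$ meets at most one of them, so the union event reduces to a single interval of length $|w^{(T)}|/3$ for the length-$(T-1)$ sequence. (A smaller, fixable slip: ``the forced $y^{(1)}$ lands in $I'$'' is not an event depending only on $y^{(2)},\dots,y^{(T)}$ --- it also depends on $x^{(1)}$, which under $\cD$ is correlated with the tail; you would have to condition on $x^{(1)}$ and use that the lemma allows an arbitrary conditional distribution.)

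The missing geometric step is exactly the content of Lemma \ref{lem:gm}, and the paper uses it once, globally, which makes the whole induction unnecessary: apply Lemma \ref{lem:gm} to the full length-$T$ sequence to conclude there is at most one point $\vec r\in\bits^T$ with $\iprod{\vec w,\vec r}$ in the target interval; then $\iprod{\vec w,\vec y}\in I$ forces $\vec y=\vec r$, and conditioned on any fixing of $\vec x$ the noise is applied independently to each coordinate, so each coordinate fails to equal $r^{(i)}$ with probability at least $\gamma/2$, giving $\Pr[\vec y=\vec r]\le\left(1-\tfrac{\gamma}{2}\right)^T$ directly. Your instinct that ``a free coordinate buys a factor $\tfrac12$'' only extracts one such factor in total unless the uniqueness-of-$\vec r$ (or the equivalent gap argument at every level of your recursion) is used to make the savings multiply across all $T$ coordinates; as written, your proposal does not get there.
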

\begin{proof}
  By Lemma \ref{lem:gm}, we know that for the interval $J
  =\left[\theta - \frac{|w^{T}|}{6}, \theta +
    \frac{|w^{T}|}{6}\right]$, there is at most one point $\vec{r} \in
  \bits^{T}$ such that $\iprod{\vec{w},\vec{r}}\in J$.  If no such
  $\vec{r}$ exists then clearly the probability is zero.  On the other
  hand, suppose there exists such an $\vec{r}$, then $\iprod{\vec{w},
    \vec{y}} \in J$ only if $(y_1^{(1)},y_1^{(2)},\ldots,y_1^{(T)}) =
  ({r}^{(1)},\ldots,{r}^{(T)})$ holds.

    Conditioned on any fixing of the bits $\vec{x}$, every bit
    $y^{(j)}$ is an independent random bit with probability
    $\gamma$.  Therefore, for every fixing of $\vec{x}$, for each $i \in [T]$, with probability at
    least  $\gamma/2$, $y^{(i)}$ is not equal to $r^{(i)}$. Therefore,
 $\Pr[ {y}^{(1)} = {r}^{(1)},{y}^{(2)} =
 {r}^{(2)},\ldots,{y}^{(T)} = {r}^{(T)} ]\leq
\left(1-\frac{\gamma}{2}\right)^T$.
\end{proof}

\subsection{Invariance Principle}

While invariance principles have been shown in various settings by
\cite{MosselOO05, Chatterjee05, Mossel08}, we restate a version of the principle well suited
for our application.  We present a self-contained proof for
it in Appendix~\ref{sec:invar}. %full version
\begin{definition}
A function $\Psi(x):\R\to \R$ for which fourth-order derivatives exist everywhere on $\R$ is said to be $K$-bounded if
$|\Psi''''(t)| \leq K$ for all $t\in \R$.
\end{definition}

\begin{definition}
Two ensembles of random variables $\calP = (p_{1},\ldots,p_{k})$ and
$\calQ=(q_1,\ldots,q_k)$ are said to have matching moments up to degree $d$
if for every multi-set $S$ of elements from $[k]$, $|S| \leq d$, we have
$\Ex[\prod_{i\in S}p_i] = \Ex[\prod_{i\in S}q_i]$.

\end{definition}

\begin{theorem} {(Invariance Principle)} \label{thm:invariance}
        Let $\cA = \{\bvec{A}{1},\ldots,\bvec{A}{R}\},
        \cB = \{\bvec{B}{1},\ldots,\bvec{B}{R}\}$ be families of ensembles of random
        variables with $\bvec{A}{i} =
        \{a^{(i)}_1,\ldots,a^{(i)}_{k_i}\}$ and $\bvec{B}{i} =
        \{b^{(i)}_1,\ldots,b^{(i)}_{k_i}\}$, satisfying the following
        properties:
        \begin{itemize}\itemsep=0ex
                \item For each $i \in [R]$, the random variables in
                        ensembles $(\bvec{A}{i},\bvec{B}{i})$ have
                        matching moments up to degree $3$.  Further
                        all the random variables in $\cA$ and $\cB$ are
                        bounded by $1$.
                \item The ensembles $\bvec{A}{i}$ are all independent
                        of each other, similarly the ensembles
                        $\bvec{B}{i}$ are independent of each other.
        \end{itemize}
        Given a set of vectors $\vec{l} = \{\bvec{l}{1},\ldots,\bvec{l}{R}\}
        (\bvec{l}{i} \in \R^{k_i})$, define the linear function
        $\vec{l} :
        \R^{k_1} \times \cdots \times \R^{k_R} \to \R$ as
        $$\vec{l}(\vec{x}) = \sum_{i\in[R]} \iprod{
                \bvec{l}{i}, \bvec{x}{i}} $$
        Then for a $K$-bounded function $\Psi : \R
        \to \R$ we have
        \begin{equation*} \left|  \Ex_{\cA}\left[\Psi\Big(\vec{l}(\cA) -\theta \Big)\right] -
                \Ex_{\cB}\left[\Psi\Big( \vec{l}(\cB)
                -\theta \Big)\right] \right| \leq K \sum_{i \in [R]}
                \|\bvec{l}{i}\|_1^4 \end{equation*}
        for all $\theta > 0$. Further, define the spread function
        $c(\alpha)$ corresponding to the ensembles $\cA,\cB$ and the linear function
        $\vec{l}$ as follows,
        \begin{align*}
        (\textbf{Spread Function: }&)  \text{For  } 1/2> \alpha >0
        \text{, let  }\\
        c(\alpha)  = \max\big(& \sup_{\theta} \Pr_{\cA}  \Big[ \vec{l}(\cA) \in [\theta - \alpha,
        \theta+\alpha] \Big],  \quad \sup_{\theta} \Pr_{\cB} \Big[
        \vec{l}(\cB) \in [\theta - \alpha,
        \theta+\alpha] \Big]\big)
        \end{align*}
        then for all $\theta$,
        \begin{eqnarray*}
                \left|
                \Ex_{\cA}\left[\sgn\left(\vec{l}(\cA)-\theta\right)\right] -
                 \Ex_{\cB}\left[\sgn\left( \vec{l}(\cB)
                -\theta\right)\right]
                  \right|  \leq
                   O\left(\frac{1}{\alpha^4} \right) \sum_{i \in [R]}
                \|\bvec{l}{i}\|_1^4 + 2c(\alpha).
        \end{eqnarray*}
\end{theorem}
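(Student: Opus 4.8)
The plan is to prove the first inequality by the standard Lindeberg replacement (hybrid) argument applied one ensemble at a time, and then derive the $\sgn$ statement by a smoothing argument: approximate $\sgn$ by a $K$-bounded mollifier and control the approximation error using the spread function $c(\alpha)$.

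For the first inequality, I would set up hybrid random inputs $\vec{H}^{(0)},\vec{H}^{(1)},\dots,\vec{H}^{(R)}$, where $\vec{H}^{(j)}$ uses the $\cB$-ensembles $\bvec{B}{1},\dots,\bvec{B}{j}$ in the first $j$ blocks and the $\cA$-ensembles $\bvec{A}{j+1},\dots,\bvec{A}{R}$ in the remaining blocks; so $\vec{H}^{(0)}$ is the all-$\cA$ input and $\vec{H}^{(R)}$ is the all-$\cB$ input. The total difference telescopes as $\sum_{j=1}^{R}\bigl(\Ex[\Psi(\vec{l}(\vec{H}^{(j)})-\theta)]-\Ex[\Psi(\vec{l}(\vec{H}^{(j-1)})-\theta)]\bigr)$. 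For a single term I would write $\vec{l}(\vec{H}^{(j)})-\theta = S_j + \iprod{\bvec{l}{j},\bvec{B}{j}}$ and $\vec{l}(\vec{H}^{(j-1)})-\theta = S_j + \iprod{\bvec{l}{j},\bvec{A}{j}}$, where $S_j$ collects all the other blocks and is independent of both $\bvec{A}{j}$ and $\bvec{B}{j}$. Taylor-expanding $\Psi$ to third order around $S_j$, the zeroth through third order terms involve $\Ex[(\iprod{\bvec{l}{j},\bvec{B}{j}})^m]$ versus $\Ex[(\iprod{\bvec{l}{j},\bvec{A}{j}})^m]$ for $m\le 3$; expanding the power $(\sum_t l^{(j)}_t a^{(j)}_t)^m$ into monomials of degree $\le 3$ and using the matching-moments hypothesis on $(\bvec{A}{j},\bvec{B}{j})$, these terms cancel exactly. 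The remainder is bounded by $\tfrac{K}{24}\Ex[\,|\iprod{\bvec{l}{j},\bvec{B}{j}}|^4 + |\iprod{\bvec{l}{j},\bvec{A}{j}}|^4\,]$ using the $K$-bounded hypothesis $|\Psi''''|\le K$; since all variables are bounded by $1$, $|\iprod{\bvec{l}{j},\cdot}|\le\|\bvec{l}{j}\|_1$, so each hybrid step contributes $O(K)\|\bvec{l}{j}\|_1^4$. (One should carry the explicit constants through to make sure they sum to the claimed bound $K\sum_i\|\bvec{l}{i}\|_1^4$; the constant in the statement is generous, so the bookkeeping is routine.) Summing over $j$ gives the first inequality.

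For the $\sgn$ statement I would fix a smoothing scale and choose a $K(\alpha)$-bounded function $\Psi_\alpha$ with $\Psi_\alpha = \sgn$ outside $[-\alpha,\alpha]$, $\|\Psi_\alpha\|_\infty\le 1$, and $|\Psi_\alpha''''|\le O(1/\alpha^4)$ — e.g. a fixed smooth step rescaled to width $\alpha$. Then $|\Ex_{\cA}[\sgn(\vec{l}(\cA)-\theta)]-\Ex_{\cA}[\Psi_\alpha(\vec{l}(\cA)-\theta)]| \le 2\Pr_{\cA}[\vec{l}(\cA)-\theta\in[-\alpha,\alpha]] \le 2c(\alpha)$, since $\sgn$ and $\Psi_\alpha$ agree outside the window and differ by at most $2$ inside it; the same bound holds for $\cB$ with the corresponding probability absorbed into $c(\alpha)$. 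Applying the already-proved first inequality to $\Psi_\alpha$ gives $|\Ex_{\cA}[\Psi_\alpha]-\Ex_{\cB}[\Psi_\alpha]| \le O(1/\alpha^4)\sum_i\|\bvec{l}{i}\|_1^4$, and the triangle inequality combines the three pieces into the claimed bound $O(1/\alpha^4)\sum_i\|\bvec{l}{i}\|_1^4 + 2c(\alpha)$.

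The main obstacle is the moment-matching cancellation in the hybrid step: one must be careful that expanding $\iprod{\bvec{l}{j},\bvec{B}{j}}^m$ produces only monomials $\prod_{t\in T} b^{(j)}_t$ with $|T|\le 3$ (counted as a multiset), so that the degree-$\le 3$ matching-moments hypothesis applies term by term — including repeated indices, which is why we need matching up to degree $3$ and not merely matching of $\Ex[b^{(j)}_t]$ and $\Ex[(b^{(j)}_t)^2]$ individually. The other delicate point is simply exhibiting the mollifier $\Psi_\alpha$ with the stated $1/\alpha^4$ bound on the fourth derivative and $L^\infty$ bound $1$; this is a one-line construction (take a fixed $C^4$ function interpolating from $-1$ to $1$ on $[-1,1]$, constant outside, and set $\Psi_\alpha(t)=\Psi_1(t/\alpha)$, which scales the fourth derivative by $\alpha^{-4}$), but it needs to be stated so the constants are honest.
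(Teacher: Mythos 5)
Your proposal is correct and follows essentially the same route as the paper: a Lindeberg-style hybrid replacement with a third-order Taylor expansion whose low-order terms cancel by the degree-$3$ moment matching and whose remainder is controlled by the $K$-bound and $|\iprod{\bvec{l}{j},\cdot}|\le\|\bvec{l}{j}\|_1$, followed by replacing $\sgn$ with an $O(1/\alpha^4)$-bounded mollifier (the paper invokes Lemma 3.21 of \cite{MosselOO05} rather than constructing it by rescaling) and charging the discrepancy on the window to $c(\alpha)$. The only bookkeeping point is that the paper's $\sgn$ is $\{0,1\}$-valued, so taking the mollifier with range $[0,1]$ makes each substitution cost $c(\alpha)$ rather than $2c(\alpha)$, which is how the stated total $2c(\alpha)$ (rather than $4c(\alpha)$) is obtained.
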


Roughly speaking, the second part of the theorem states that  $\sgn$ function can be thought of  as $\frac{1}{\alpha^4}$-bounded with error parameter  $c(\alpha)$.

\section{Construction of the Dictatorship Test}\label{sec:test}

In this section we describe the construction of the dictatorship test which
will be the key ingredient in the
hardness reduction from \kuniquelabelcover.

\subsection{Distributions $\cD_0$ and $\cD_1$}
%\vspace{-0.9mm}
The dictatorship test is based on following two distributions $\cD_0$ and $\cD_1$
defined on $\bits^k$.
\begin{lemma}\label{lem:d01} For $k\in \N$, there exists two probability
        distributions $\cD_0$, $\cD_1$ on $ \bits^k$ such that for $x=(x_1,\ldots,x_k)$,
        $$\Pr_{\x\sim \cD_0}\{\text{every $x_l$ is $0$}\} \geq
        1-\frac{2}{\sqrt{k}} \text{ and }\Pr_{\x \sim \cD_1} \{\text{every
        $x_l$ is $0$}\} \leq \frac{1}{\sqrt{k}},$$ while matching
        moments up to degree $4$, i.e., $\forall i,j,m,n \in [k]$
\begin{align*}
\Ex_{\cD_0}[x_i] = \Ex_{\cD_1}[x_i] & &
\Ex_{\cD_0}[x_ix_jx_mx_n] = \Ex_{\cD_1}[x_ix_jx_mx_n] \\
\Ex_{\cD_0}[x_ix_j] = \Ex_{\cD_1}[x_ix_j] & & \Ex_{\cD_0}[x_ix_jx_m] = \Ex_{\cD_1}[x_ix_jx_m]
\end{align*}
\end{lemma}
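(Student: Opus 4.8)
The plan is to realize $\cD_0,\cD_1$ as mixtures of i.i.d.\ coordinates, which turns the problem into a one-dimensional moment problem. For a probability measure $\lambda$ on $[0,1]$ let $\cD(\lambda)$ be the law of $\x\in\bits^k$ obtained by first drawing $p\sim\lambda$ and then letting $x_1,\dots,x_k$ be independent $\mathrm{Bernoulli}(p)$ bits. Since $x_l^2=x_l$, for any $i,j,m,n\in[k]$ each of $\Ex_{\cD(\lambda)}[x_i]$, $\Ex_{\cD(\lambda)}[x_ix_j]$, $\Ex_{\cD(\lambda)}[x_ix_jx_m]$, $\Ex_{\cD(\lambda)}[x_ix_jx_mx_n]$ equals $\int_0^1 p^{\,r}\,d\lambda(p)$ where $1\le r\le 4$ is the number of \emph{distinct} indices appearing, and $\Pr_{\cD(\lambda)}[\text{every }x_l=0]=\int_0^1(1-p)^k\,d\lambda(p)$. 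Hence the lemma follows once we produce probability measures $\lambda_0,\lambda_1$ on $[0,1]$ with $\int p^\ell\,d\lambda_0=\int p^\ell\,d\lambda_1$ for $\ell=1,2,3,4$, with $\int(1-p)^k\,d\lambda_0\ge 1-\tfrac{2}{\sqrt k}$, and with $\int(1-p)^k\,d\lambda_1\le\tfrac{1}{\sqrt k}$; then $\cD_s:=\cD(\lambda_s)$ works. (For $k\le 9$ the two target inequalities are compatible and one may take $\cD_0=\cD_1$ a suitable i.i.d.\ distribution; so assume $k$ large.)

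The two ``all zeros'' bounds will be built into the \emph{supports} of $\lambda_0,\lambda_1$: if $\lambda_0$ puts mass $\ge 1-\tfrac{2}{\sqrt k}$ at the point $0$ then $\int(1-p)^k\,d\lambda_0\ge 1-\tfrac{2}{\sqrt k}$ since $(1-p)^k=1$ there; and if $\lambda_1$ is supported on $[p^\ast,1]$ with $p^\ast:=\tfrac{\ln k}{k}$ then $\int(1-p)^k\,d\lambda_1\le(1-p^\ast)^k\le e^{-\ln k}=\tfrac1k\le\tfrac1{\sqrt k}$. So, writing $\theta:=\tfrac{2}{\sqrt k}$, fix once and for all
\[
\lambda_0 \;=\; (1-\theta)\,\delta_0 \;+\; \tfrac{\theta}{2}\,\delta_{1/3} \;+\; \tfrac{\theta}{2}\,\delta_{2/3},
\]
and seek $\lambda_1=(1-b_1-b_2)\,\delta_{p^\ast}+b_1\,\delta_{q_1}+b_2\,\delta_{q_2}$ with $(b_1,b_2,q_1,q_2)$ close to the ``base point'' $(\tfrac{\theta}{2},\tfrac{\theta}{2},\tfrac13,\tfrac23)$. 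Any such $\lambda_1$ is automatically a probability measure supported on $[p^\ast,1]$, so both ``all zeros'' constraints hold for free; the only remaining task is to choose the four parameters so that the first four moments of $\lambda_1$ match those of $\lambda_0$.

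At the base point these moments already \emph{nearly} match: the $\ell$-th moment of $\lambda_1$ exceeds that of $\lambda_0$ by exactly $(1-\theta)(p^\ast)^\ell=O(\tfrac{\ln k}{k})$ for $\ell=1,2,3,4$. Let $F(b_1,b_2,q_1,q_2)$ denote the vector of these four moment-differences, so $F(\text{base point})=\big((1-\theta)(p^\ast)^\ell\big)_{\ell=1}^4$ is tiny and we want $F=0$. The Jacobian of $F$ at the base point has columns, up to nonzero scalar factors, $\big(\ell q_i^{\ell-1}\big)_{\ell=1}^4$ and $\big(q_i^\ell-(p^\ast)^\ell\big)_{\ell=1}^4$ for $i=1,2$ at $q_1=\tfrac13,q_2=\tfrac23$; as $k\to\infty$ these converge to $\big(\ell r^{\ell-1}\big)_\ell,\ (r^\ell)_\ell,\ \big(\ell s^{\ell-1}\big)_\ell,\ (s^\ell)_\ell$ with $r=\tfrac13,s=\tfrac23$, which form a nonsingular matrix (a confluent-Vandermonde argument: a combination $x\,q(x)$ with $q$ a cubic whose value and derivative both vanish at $\tfrac13$ and at $\tfrac23$ forces $q\equiv 0$). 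So for all large $k$ the Jacobian is invertible with inverse of bounded norm, and the implicit function theorem yields a solution of $F=0$ within $O(\tfrac{\ln k}{k})$ of the base point. For large $k$ this keeps $b_1,b_2$ positive (each is $\tfrac1{\sqrt k}\pm o(\tfrac1{\sqrt k})$), keeps $1-b_1-b_2$ and $q_1,q_2$ strictly inside $(0,1)$, and keeps $q_1,q_2>p^\ast$; so the resulting $\lambda_1$ has the required form and matches the first four moments of $\lambda_0$, which completes the construction.

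The only step that is not routine is the last one: verifying that the linearization is invertible and that the implicit-function correction stays in the open region where all masses are positive and all atoms lie in $[p^\ast,1]$. The de Finetti reduction and the two $(1-p)^k$ estimates are immediate. (Alternatively one could try to solve the small polynomial system for $\lambda_1$ in closed form, but the perturbative route seems cleanest.)
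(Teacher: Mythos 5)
Your construction is correct in outline and is close in spirit to the paper's, though the solving mechanism differs. The paper also realizes $\cD_0,\cD_1$ as mixtures of simple distributions and matches moments by solving a small system, but it keeps the mixture \emph{atoms} fixed — i.i.d.\ Bernoulli$(i/k^{1/3})$ components for $i=1,2,3,4$, plus, for $\cD_1$, a non-product ``exactly one coordinate is $1$'' component of weight $1-1/\sqrt{k}$ which is what forces $\Pr_{\cD_1}[\text{all zeros}]\le 1/\sqrt k$ — and then solves for the four mixture \emph{weights} of $\cD_0$. Moment matching is thus a $4\times 4$ linear system, solved explicitly by Cramer's rule, with positivity and the bound $\eps_1+\dots+\eps_4\le 2/\sqrt k$ read off from the determinant computation. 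You instead enforce the all-zeros bounds through the supports ($\lambda_0$ has an atom at $0$, $\lambda_1$ lives on $[\ln k/k,1]$) and vary both masses and atom locations, which makes the system nonlinear and requires a perturbative existence argument; your de Finetti framing (moments depend only on the number of distinct indices since $x_l^2=x_l$) is a clean formulation of a reduction the paper uses implicitly.

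One quantitative point in your final step needs repair. In the coordinates $(b_1,b_2,q_1,q_2)$ the Jacobian columns with respect to $q_1,q_2$ carry the factors $b_i\approx 1/\sqrt k$, so the inverse Jacobian has norm $\Theta(\sqrt k)$, not $O(1)$, and the solution lies within $O(\ln k/\sqrt k)$ — not $O(\ln k/k)$ — of the base point in the $q$-coordinates; moreover, since the system depends on $k$ (through $p^\ast$ and $\theta$), the implicit function theorem must be invoked with constants uniform in $k$, which your statement glosses over. Both issues disappear after a rescaling: write $b_i=(1+\beta_i)/\sqrt k$ and divide each moment equation by $1/\sqrt k$. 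The system in $(\beta_1,\beta_2,q_1,q_2)$ then has residual $O(\ln k/\sqrt k)$ at the base point, derivatives bounded uniformly in $k$, and Jacobian converging to the confluent Vandermonde matrix with columns $(r^\ell)_\ell,(s^\ell)_\ell,(\ell r^{\ell-1})_\ell,(\ell s^{\ell-1})_\ell$ at $r=\tfrac13,s=\tfrac23$, which is nonsingular by exactly your argument (a left null vector would give $Q(x)=xq(x)$ of degree at most $4$ with double roots at $\tfrac13,\tfrac23$ and a root at $0$, forcing $Q\equiv 0$). The corrected solution has $b_i=(1\pm o(1))/\sqrt k>0$, $b_1+b_2<1$, and $q_i\in(p^\ast,1)$, which is all you need; with this adjustment your proof is complete.
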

\begin{proof}
For $\eps = \frac{1}{\sqrt{k}}$, take $\cD_1$ to be the following distribution:
\begin{enumerate}\item with probability $(1-\eps)$, randomly set exactly one of the bit to be 1 and all the other to be 0; \item with probability $\frac{\eps}{4}$, independently set every bit to be 1 with probability $\frac{1}{k^{1/3}}$; \item with probability $\frac{\eps}{4}$, independently set every bit to be 1  with probability $\frac{2}{k^{1/3}}$;
\item with probability $\frac{\eps}{4}$, independently set every bit to be 1  with probability $\frac{3}{k^{1/3}}$; \item
with probability $\frac{\eps}{4}$, independently set every bit to be 1  with probability $\frac{4}{k^{1/3}}$.
\end{enumerate}

The distribution $\cD_0$ is defined to be the following distribution with parameter $\eps_1,\eps_2,\eps_3,\eps_4$ to be specified later:
\begin{enumerate}
        \item with probability $1-(\eps_1+\eps_2+\eps_3+\eps_4)$, set every bit to be zero;
                       \item with probability $\eps_1$, independently set every bit to be 1 with probability $\frac{1}{k^{1/3}}$; \item with probability $\eps_2$, independently set every bit to be 1  with probability $\frac{2}{k^{1/3}}$;
                        \item with probability $\eps_3$, independently set every bit to be 1  with probability $\frac{3}{k^{1/3}}$;
                       \item with probability $\eps_4$, independently set every bit to be 1  with probability $\frac{4}{k^{1/3}}$.
\end{enumerate}

From the definition of $\cD_0,\cD_1$, we know that  $\Pr_{\x\sim \cD_0}[\text{every $x_i$ is $0$}] \geq 1-(\eps_1+\eps_2
+ \eps_3 + \eps_4)$ and $\Pr_{\x \sim \cD_1} [\text{every
        $x_i$ is $0$}] \leq \eps= \frac{1}{\sqrt{k}}$.

 It remains to determine each $\eps_i$. Notice that  the moment matching conditions can be expressed as a linear system over the
parameters $\eps_1,\eps_2,\eps_3,\eps_4$ as follows:
\begin{eqnarray*}
\sum_{i=1}^4 \eps_i (\frac{i}{k^{1/3}})  = (1-\eps)/k + \sum_{i=1}^4{\frac{\eps}{4}(\frac{i}{k^{\frac{1}{3}}}})\\
\sum_{i=1}^4 \eps_i (\frac{i}{k^{1/3}})^2  =  \sum_{i=1}^4{\frac{\eps}{4}(\frac{i}{k^{\frac{1}{3}}}})^2\\
 \sum_{i=1}^4 \eps_i (\frac{i}{k^{\frac{1}{3}}})^3  =  \sum_{i=1}^4{\frac{\eps}{4}(\frac{i}{k^{\frac{1}{3}}}})^3\\
 \sum_{i=1}^4 \eps_i (\frac{i}{k^{\frac{1}{3}}})^4  = \sum_{i=1}^4{\frac{\eps}{4}(\frac{i}{k^{\frac{1}{3}}}})^4.
 \end{eqnarray*}

We then   show that such a linear system
has a feasible solution $\eps_1,\eps_2,\eps_3,\eps_4>0$ and $\sum_{i=1}^{4} \eps_i\leq 2/ \sqrt{k}$ .

To prove this, by applying Cramer's rule,
\[
\eps_1 =\frac{\left| \begin{array}{cccc}
(1-\eps)/k + \sum_{i=1}^4{\frac{\eps}{4}(\frac{i}{k^{\frac{1}{3}}}}) & \frac{2}{k^{\frac{1}{3}}} & \frac{3}{k^{\frac{1}{3}}} & \frac{4}{k^{\frac{1}{3}}} \\
\sum_{i=1}^4{\frac{\eps}{4}(\frac{i}{k^{\frac{1}{3}}}})^2& \frac{4}{k^{\frac{2}{3}}} & \frac{9}{k^{\frac{2}{3}}} & \frac{16}{k^{\frac{2}{3}}} \\
\sum_{i=1}^4{\frac{\eps}{4}(\frac{i}{k^{\frac{1}{3}}}})^3& \frac{8}{k^{\frac{3}{3}}} & \frac{27}{k^{\frac{3}{3}}} & \frac{64}{k^{\frac{3}{3}}}
\\\sum_{i=1}^4{\frac{\eps}{4}(\frac{i}{k^{\frac{1}{3}}}})^4 & \frac{16}{k^{\frac{4}{3}}} & \frac{81}{k^{\frac{4}{3}}} & \frac{256}{k^{\frac{4}{3}}}\\
\end{array} \right|}
{\left|\begin{array}{cccc}
\frac{1}{k^{\frac{1}{3}}} & \frac{2}{k^{\frac{1}{3}}} & \frac{3}{k^{\frac{1}{3}}} & \frac{4}{k^{\frac{1}{3}}} \\
\frac{1}{k^{\frac{2}{3}}} & \frac{4}{k^{\frac{2}{3}}} & \frac{9}{k^{\frac{2}{3}}} & \frac{16}{k^{\frac{2}{3}}} \\
\frac{1}{k^{\frac{3}{3}}} & \frac{8}{k^{\frac{3}{3}}} & \frac{27}{k^{\frac{3}{3}}} & \frac{64}{k^{\frac{3}{3}}} \\
\frac{1}{k^{\frac{4}{3}}} & \frac{16}{k^{\frac{4}{3}}} & \frac{81}{k^{\frac{4}{3}}} & \frac{256}{k^{\frac{4}{3}}}\\
\end{array} \right|} \]
With some calculation using basic linear algebra, we get
\[
\eps_1 =
 \eps/4 + \frac{\left|
\begin{array}{cccc}
(1-\eps)/k  & \frac{2}{k^{\frac{1}{3}}} & \frac{3}{k^{\frac{1}{3}}} & \frac{4}{k^{\frac{1}{3}}} \\
0& \frac{4}{k^{\frac{2}{3}}} & \frac{9}{k^{\frac{2}{3}}} & \frac{16}{k^{\frac{2}{3}}} \\
0& \frac{8}{k^{\frac{3}{3}}} & \frac{3}{k^{\frac{3}{3}}} & \frac{64}{k^{\frac{3}{3}}}
\\
0 & \frac{16}{k^{\frac{4}{3}}} & \frac{3}{k^{\frac{4}{3}}} & \frac{256}{k^{\frac{4}{3}}}\\
\end{array} \right|}
{\left|\begin{array}{cccc}
\frac{1}{k^{\frac{1}{3}}} & \frac{2}{k^{\frac{1}{3}}} & \frac{3}{k^{\frac{1}{3}}} & \frac{4}{k^{\frac{1}{3}}} \\
\frac{1}{k^{\frac{2}{3}}} & \frac{4}{k^{\frac{2}{3}}} & \frac{9}{k^{\frac{2}{3}}} & \frac{16}{k^{\frac{2}{3}}} \\
\frac{1}{k^{\frac{3}{3}}} & \frac{8}{k^{\frac{3}{3}}} & \frac{27}{k^{\frac{3}{3}}} & \frac{64}{k^{\frac{3}{3}}} \\
\frac{1}{k^{\frac{4}{3}}} & \frac{16}{k^{\frac{4}{3}}} & \frac{81}{k^{\frac{4}{3}}} & \frac{256}{k^{\frac{4}{3}}}\\
\end{array} \right|} = \frac{1}{4\sqrt{k}} + O(\frac{1}{k^{\frac{2}{3}}}).
\]
For large enough $k$, we have $0\leq \eps_1\leq \frac{1}{2\sqrt{k}}$.
By similar calculation, we can bound $\eps_2,\eps_3,\eps_4$ by $\frac{1}{2\sqrt{k}}$. Overall, we have $\eps_1+\eps_2+\eps_3 + \eps_4 \leq 2/\sqrt{k}$

\end{proof}

We define a ``noisy'' version of $\cD_b$ ($b\in \{0,1\}$) below.
\begin{definition} For $b \in \bits$, define the distribution        $\tilde{\cD}_b$ on $\bits^k$ as follows:
        \begin{itemize}\itemsep=0ex
                \item First generate $x\in \bits^k$ according to ${\cD_b}$.
                 \item For each $i \in [k]$,
         \end{itemize}
                   $$ y_{i} = \begin{cases}
                            x_i  &  \text{ with
                            probability }  1- \frac{1}{k^2} \\
                            \text{uniform random bit } u_i & \text{ with
                            probability } \frac{1}{k^2}
                    \end{cases}$$
 \end{definition}
\begin{observation}
\label{obs:addnoise_tomoments}
 $\tilde{\cD_0}$ and $\tilde{\cD}_1$ also have matching moments up to degree $4$.
\end{observation}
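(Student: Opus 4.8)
The plan is to observe that applying independent per-coordinate noise is a linear, coordinate-wise smoothing operation, and that any such operation writes each low-degree moment of the output distribution as a \emph{fixed} linear combination --- with coefficients not depending on $b$ --- of moments of the input distribution of the same or lower degree. Since $\cD_0$ and $\cD_1$ agree on all moments up to degree $4$ by Lemma~\ref{lem:d01}, the same will follow for $\tilde{\cD}_0$ and $\tilde{\cD}_1$.

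In more detail: first, since $y_i \in \bits$ we have $y_i^j = y_i$ for every $j \geq 1$, so it suffices to prove $\Ex_{\tilde{\cD}_0}[\prod_{i \in T} y_i] = \Ex_{\tilde{\cD}_1}[\prod_{i \in T} y_i]$ for every \emph{set} $T \subseteq [k]$ with $|T| \le 4$. Write $\gamma = 1/k^2$. Condition on the base sample $x \sim \cD_b$; then the bits $(y_i)_{i \in T}$ are mutually independent, since the noise indicators and the uniform replacement bits $u_i$ are independent across coordinates and independent of $x$, and $\Ex[y_i \mid x_i] = (1-\gamma)x_i + \gamma/2$. Hence
\begin{align*}
\Ex_{\tilde{\cD}_b}\Big[\prod_{i \in T} y_i\Big]
 &= \Ex_{x \sim \cD_b}\Big[\prod_{i \in T}\big((1-\gamma)x_i + \tfrac{\gamma}{2}\big)\Big] \\
 &= \sum_{U \subseteq T} (1-\gamma)^{|U|}\Big(\tfrac{\gamma}{2}\Big)^{|T|-|U|}\, \Ex_{x \sim \cD_b}\Big[\prod_{i \in U} x_i\Big].
\end{align*}
Every subset $U \subseteq T$ satisfies $|U| \le |T| \le 4$, so by Lemma~\ref{lem:d01} the quantity $\Ex_{x \sim \cD_b}[\prod_{i \in U} x_i]$ takes the same value for $b = 0$ and $b = 1$; as the combinatorial coefficients do not depend on $b$ either, the two expectations coincide, which is exactly the claim.

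There is essentially no obstacle here beyond bookkeeping: the only points needing care are the reduction from multisets to sets (valid because the coordinates are Boolean, so powers collapse) and the conditional independence of the noised coordinates given $x$. I note that the same argument, verbatim, shows that independent per-coordinate noising preserves moment-matching up to any degree $d$, not just $d = 4$; we only invoke $d = 4$ because that is what Lemma~\ref{lem:d01} supplies.
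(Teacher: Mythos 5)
Your proposal is correct and follows essentially the same route as the paper's proof: both replace each $y_i$ by $(1-\gamma)x_i + \gamma/2$ (using conditional independence of the noise) so that every degree-$\le 4$ moment of $\tilde{\cD}_b$ becomes a fixed, $b$-independent linear combination of moments of $\cD_b$ of degree at most $4$. Your version merely spells out the multiset-to-set reduction and the expansion over $U \subseteq T$, which the paper leaves implicit.
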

\begin{proof}
Since the noise is defined to be an independent uniform random bit, when calculating moments of $y$, such as $\Ex_{\tilde{D}_b}[y_{i_1}y_{i_2}\cdots y_{i_d}]$, we can substitute  $y_i$ by $(1-\gamma)x_i +  \frac{1}{2}\gamma $. Therefore, a degree $d$ moment of $y$ can be expressed as a weighted sum
of moments of $x$ of degree up to $d$. Since $\cD_0$ and $\cD_1$ have matching
moments up to degree $4$, it follows that $\tilde{\cD}_0$ and $\tilde{\cD}_1$ also have the same property.
\end{proof}

The following simple lemma asserts that conditioning the two distributions
$\tilde{D}_0$ and $\tilde{D}_1$ on the same coordinate $x_j$ being
fixed to value $b$ results in conditional distributions that still have matching moments
up to degree $3$.
\begin{lemma}\label{lem:match} Given two distributions $\cP_0,\cP_1$
        on $\bits^k$ with matching moments up to degree $d$, for any multi-set $S$ of elements from $[k]$,
$|S|\leq d-1$, $j \in [k] $ and $c\in \bits$.
\[
        \Ex_{\calP_0}[\prod_{i\in S}x_i\mid x_j = c] = \Ex_{\calP_1}[\prod_{i\in S}x_i\mid x_j = c].
\]
\end{lemma}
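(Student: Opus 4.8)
The plan is to express the conditional expectation as a ratio of unconditional expectations and then exploit that we are working over the Boolean cube $\bits^k$, where the indicator of the event $\{x_j = c\}$ is an affine function of $x_j$. Concretely, for $b \in \bits$ I would write
\[
  \Ex_{\calP_b}\Bigl[\, \prod_{i \in S} x_i \;\Big|\; x_j = c \,\Bigr]
  \;=\; \frac{\Ex_{\calP_b}\bigl[\, \prod_{i \in S} x_i \cdot \mathbf{1}[x_j = c] \,\bigr]}{\Pr_{\calP_b}[x_j = c]},
\]
which is valid whenever the denominator is nonzero; the degenerate case is handled at the end.

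Next I would use that on $\bits$ one has $\mathbf{1}[x_j = 1] = x_j$ and $\mathbf{1}[x_j = 0] = 1 - x_j$. Hence when $c = 1$ the numerator equals $\Ex_{\calP_b}[\prod_{i \in S \cup \{j\}} x_i]$, a moment over the multiset $S \cup \{j\}$ of size $|S| + 1 \le d$; and when $c = 0$ the numerator equals $\Ex_{\calP_b}[\prod_{i \in S} x_i] - \Ex_{\calP_b}[\prod_{i \in S \cup \{j\}} x_i]$, a difference of two moments over multisets of size at most $d$. (This is exactly where the hypothesis $|S| \le d-1$ is used; and there is no difficulty if $j \in S$, since $S \cup \{j\}$ is just the corresponding multiset union.) Because $\calP_0$ and $\calP_1$ have matching moments up to degree $d$, each of these moments is the same under $\calP_0$ and under $\calP_1$, so the numerator does not depend on $b$. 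The denominator equals $\Ex_{\calP_b}[x_j]$ when $c = 1$ and $1 - \Ex_{\calP_b}[x_j]$ when $c = 0$, a moment of degree at most $1$, hence also independent of $b$. Dividing the (common) numerator by the (common) denominator shows that the two conditional expectations coincide.

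Finally, for the degenerate case: if $\Pr_{\calP_0}[x_j = c] = 0$, then the same degree-$\le 1$ moment-matching argument forces $\Pr_{\calP_1}[x_j = c] = 0$ as well, so the conditioning event is null under both distributions and the claimed identity holds vacuously (or by the usual convention for conditioning on a null event). I do not anticipate any real obstacle here: the only points requiring care are the bookkeeping of multiset sizes — keeping $S \cup \{j\}$ within degree $d$, which is precisely the role of the assumption $|S| \le d-1$ — and the elementary fact that $\mathbf{1}[x_j = c]$ is affine in $x_j$ on $\bits$; everything else is a one-line computation.
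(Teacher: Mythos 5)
Your proof is correct and follows essentially the same route as the paper: both express the conditional expectation as $\Ex_{\calP_b}[x_j\prod_{i\in S}x_i]/\Ex_{\calP_b}[x_j]$ for $c=1$ and handle $c=0$ via $1-x_j$ (the paper by substituting $x'_j=1-x_j$ and reducing to the $c=1$ case, you by expanding the indicator directly — the same computation). Your explicit treatment of the degenerate case $\Pr[x_j=c]=0$ is a small point of care the paper leaves implicit, but otherwise there is no substantive difference.
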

\begin{proof}
For the case $c=1$ and any $b\in \bits$,
\[
         \Ex_{\calP_b}[ x_j \prod_{i\in S}x_i] = \Ex_{\calP_b}[\prod_{i\in
         S}x_i\mid x_j = 1] \Pr_{\cP_0}[x_j=1] = \Ex_{\calP_b}[\prod_{i\in
         S}x_i\mid x_j = 1]\Ex_{\cP_0}[x_j].
\]
Therefore,
\[
         \Ex_{\calP_0}[\prod_{i\in S}x_i\mid x_j = 1] = \frac{ \Ex_{\calP_0}[ x_j \prod_{i\in S}x_i]}{        \Ex_{\calP_0}[x_j]} =  \frac{ \Ex_{\calP_1}[ x_j \prod_{i\in S}x_i]}{        \Ex_{\calP_1}[x_j]} = \Ex_{\calP_1}[\prod_{i\in S}x_i\mid x_j = 1].
\]

For the case $c=0$, replace $x_j$ with $x'_j = 1- x_j $. It is easy to see that $\cP_0$ and $\cP_1$ still have matching moments and conditioning on $x_j = 0$ is the same as conditioning on $x'_j = 1$. Hence we can reduce to the case $c=1$.
\end{proof}

\subsection{The Dictatorship Test}
Let $R$ be a positive integer. Based on the distributions $\cD_0$ and $\cD_1$, we define the dictatorship test as follows:
\begin{center}
  \fbox{
    \parbox{6.0in}{
    \begin{enumerate} \itemsep=0ex
        \item Generate a random bit $b\in \{0,1\}$.
                \item Generate  $\vec{x} \in \bits^{kR}$ (which is also written as  $\{x_i^{(j)}\}_{i\in [k],j\in [R]}$) from $\cD_b^R$.
                \item For each $i \in [k], j \in [R]$,
                        $$ y_{i}^{(j)} = \begin{cases}
                                x_i^{(j)}  &  \text{ with
                                probability }  1- \frac{1}{k^2}; \\
                                \text{random bit} & \text{ with
                                probability } \frac{1}{k^2}.
                        \end{cases}$$

        \item Output the labelled example $(\vec{y},b)$.
                Equivalently, if $h$ denotes the halfspace, ACCEPT if $h(\vec{y})
                = b.$

\end{enumerate}
}
}
\end{center}

We can also view $y$  as being generated as follows:
i) With probability $\frac12$, generate  a negative sample from
distribution $\tilde{\cD}_0^R$;  ii)  With probability
$\frac12$, generate a positive sample from distribution $\tilde{\cD}_1^R$.

The dictatorship test has the following completeness and soundness properties.
\begin{theorem} {(completeness)}\label{thm:dictcp}
        For any $j \in [R]$, $h(\vec{y}) = \vee_{i=1}^k y_{i}^{(j)}$ passes
        with probability $ \geq 1-\frac{3}{\sqrt{k}}$.
\end{theorem}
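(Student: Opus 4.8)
The plan is to condition on the random bit $b\in\bits$ drawn in Step~1 and to exploit that the hypothesis $h(\vec{y}) = \bigvee_{i=1}^k y_i^{(j)}$ depends only on the $k$ entries of column $j$, namely $(y_1^{(j)},\ldots,y_k^{(j)})$. By the definition of $\cD_b^R$, the corresponding source bits $(x_1^{(j)},\ldots,x_k^{(j)})$ are distributed according to $\cD_b$ on $\bits^k$, and each $y_i^{(j)}$ is obtained from $x_i^{(j)}$ by replacing it with an independent uniform random bit with probability $1/k^2$. I will bound the acceptance probability separately in the cases $b=0$ and $b=1$ and then average, since each value of $b$ occurs with probability $1/2$.

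In the case $b=0$ the test accepts iff $y_i^{(j)}=0$ for all $i\in[k]$. First, Lemma~\ref{lem:d01} gives $\Pr_{\cD_0}[\text{every }x_i^{(j)}=0]\ge 1-2/\sqrt{k}$. Conditioned on that event, all $y_i^{(j)}$ remain $0$ provided none of the $k$ coordinates of column $j$ is resampled, which happens with probability $(1-1/k^2)^k\ge 1-1/k$; hence the test accepts with probability at least $(1-2/\sqrt{k})(1-1/k)\ge 1-2/\sqrt{k}-1/k$. In the case $b=1$ the test accepts iff some $y_i^{(j)}=1$. By Lemma~\ref{lem:d01}, $\Pr_{\cD_1}[\text{every }x_i^{(j)}=0]\le 1/\sqrt{k}$, so with probability at least $1-1/\sqrt{k}$ there is an index $i_0$ with $x_{i_0}^{(j)}=1$; conditioned on this, $y_{i_0}^{(j)}=1$ unless coordinate $i_0$ is resampled and the resampled bit is $0$, an event of probability $1/(2k^2)$, so the test accepts with probability at least $(1-1/\sqrt{k})(1-1/(2k^2))\ge 1-1/\sqrt{k}-1/(2k^2)$.

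Averaging the two cases with weight $1/2$ yields an acceptance probability of at least $1-\tfrac{3}{2\sqrt{k}}-\tfrac{1}{2k}-\tfrac{1}{4k^2}\ge 1-\tfrac{3}{\sqrt{k}}$, since the lower-order terms are dominated by the remaining $\tfrac{3}{2\sqrt k}$. There is no substantive obstacle in this argument; the only point that needs a little care is that the noise step in Step~3 can flip bits in either direction, so in each case one should not try to track the exact post-noise distribution but instead extract a clean one-sided bound — ``no coordinate in column $j$ is resampled'' in the negative case, and ``a single surviving $1$-bit'' in the positive case — which is all that is required for the stated slack.
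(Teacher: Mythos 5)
Your proposal is correct and follows essentially the same argument as the paper: condition on $b$, use the bounds from Lemma~\ref{lem:d01} on $\Pr[\text{all }x_i^{(j)}=0]$ under $\cD_0$ and $\cD_1$, and account for the $\frac{1}{k^2}$ noise (the paper does this with a union bound rather than your product/conditional bounds, which is an immaterial difference). Your explicit averaging over $b$ even gives a slightly sharper constant than the stated $1-\frac{3}{\sqrt{k}}$.
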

\begin{theorem}
{(soundness)} \label{thm:testsoundness}
         Fix $\tau = \frac{1}{k^7}$ and $t = \frac{1}{\tau^2} (3\ln(1/\tau) + \ln R)  + \lceil 4k^2 \ln
        k \rceil \lceil \frac{4}{\tau^2} \ln(1/\tau)\rceil$.
        Let $h(\vec{x}) = \sgn(\iprod{\vec{w},\vec{y}} -
        \theta)$ be a halfspace such that $\bb_{t}(\vec{w}_i) \cap
        \bb_{t}(\vec{w}_j) = \emptyset$ for all $i,j \in [k]$.  Then the
        halfspace $h(\vec{y})$ passes the dictatorship test with
        probability at most $\frac{1}{2} + O(\frac{1}{k})$.
\end{theorem}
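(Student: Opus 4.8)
Since $h$ takes values in $\{0,1\}$, the probability that it passes the test equals
$\tfrac12\Pr_{\tilde{\cD}_0^R}[h=0]+\tfrac12\Pr_{\tilde{\cD}_1^R}[h=1]=\tfrac12+\tfrac12\big(\Ex_{\tilde{\cD}_1^R}[\sgn(\langle\vec w,\vec y\rangle-\theta)]-\Ex_{\tilde{\cD}_0^R}[\sgn(\langle\vec w,\vec y\rangle-\theta)]\big)$, so it suffices to show the halfspace cannot distinguish $\tilde{\cD}_0^R$ from $\tilde{\cD}_1^R$ by more than $O(1/k)$; normalize $\|\vec w\|_2=1$. The point is that $\tilde{\cD}_0^R$ and $\tilde{\cD}_1^R$ are products over the $R$ columns and, by Observation~\ref{obs:addnoise_tomoments}, their single-column marginals $\tilde{\cD}_0,\tilde{\cD}_1$ have matching moments up to degree $4$; expanding $\langle\vec w,\vec y\rangle=\sum_{j\in[R]}\langle\vec w^{(j)},\vec y^{(j)}\rangle$ along columns, the invariance principle (Theorem~\ref{thm:invariance}) applies once the weight vectors involved are $\tau$-regular. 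The plan is to turn the general halfspace into a regular one in two moves, both using the hypothesis $\bb_t(\vec w_i)\cap\bb_t(\vec w_j)=\emptyset$ for $i\ne j$.

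\emph{Step 1 (zeroing out non-regular blocks).} Suppose some block $\vec w_i$ has $\tau$-critical index exceeding $t$. By Lemma~\ref{lem:st}, the first $\lceil 4k^2\ln k\rceil\lceil\tfrac{4}{\tau^2}\ln(1/\tau)\rceil$ sorted coordinates of $\vec w_i$ contain a subsequence of length $\ge 4k^2\ln k$ that decreases by a factor $\ge 3$, whose last entry, at sorted position $T$, has magnitude $|w_i^{(T)}|$; and, again by Lemma~\ref{lem:st} with indices $T$ and $t+1$, the $\ell_2$-mass of $\vec w_i$ outside its top $t$ coordinates satisfies $\sigma_{t+1}(\vec w_i)\le(\sqrt{1-\tau^2})^{t+1-T}|w_i^{(T)}|/\tau\le e^{-\tau^2\cdot\frac1{\tau^2}(3\ln(1/\tau)+\ln R)/2}|w_i^{(T)}|/\tau=\tau^{1/2}|w_i^{(T)}|/\sqrt R$, so the discarded tail has $\ell_1$-norm at most $\sqrt R\,\sigma_{t+1}(\vec w_i)\le\tau^{1/2}|w_i^{(T)}|<|w_i^{(T)}|/6$. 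Hence replacing $\vec w_i$ by $\truncate(\vec w_i,\bb_t(\vec w_i))$ perturbs $\langle\vec w,\vec y\rangle$ pointwise by less than $|w_i^{(T)}|/6$, so it flips $h$ only when $\langle\vec w,\vec y\rangle-\theta$ lies in an interval of length $<|w_i^{(T)}|/3$; conditioning on all coordinates outside the subsequence, Lemma~\ref{lem:geometricsmallball1} (with $\gamma=1/k^2$) bounds the flip probability by $(1-\tfrac1{2k^2})^{4k^2\ln k}=O(1/k^2)$ on each side. Doing this for each of the at most $k$ offending blocks changes both expectations by $O(1/k)$ in total and leaves every block with $\tau$-critical index $\le t$.

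\emph{Step 2 (conditioning and invariance).} Now with $c_i:=c_\tau(\vec w_i)\le t$ and $C_\tau(\vec w_i)=\bb_{c_i}(\vec w_i)\subseteq\bb_t(\vec w_i)$, the hypothesis makes the column-sets $C_\tau(\vec w_1),\dots,C_\tau(\vec w_k)$ pairwise disjoint, so conditioning on all of the bits $\{y_i^{(j)}:j\in C_\tau(\vec w_i)\}$ fixes at most one entry per column; by Lemma~\ref{lem:match} the conditional per-column distributions still match up to degree $3$, and the conditioned bits are mutually independent with matching means, hence have the \emph{same} joint law under $\tilde{\cD}_0^R$ and $\tilde{\cD}_1^R$. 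Set $\vec v_i=\truncate(\vec w_i,[R]\setminus C_\tau(\vec w_i))$; each $\vec v_i$ is $\tau$-regular by definition of the critical index, hence so is the concatenation $\vec v=(\vec v_1,\dots,\vec v_k)$. Assume first $\|\vec v\|_2\ge k^{-1/2}$: for each fixing of the conditioned bits, $\langle\vec w,\vec y\rangle-\theta$ becomes $\langle\vec v,\vec y\rangle-\theta^\#$, and the $\sgn$-form of Theorem~\ref{thm:invariance} (with the columns as the blocks) bounds the gap between the two conditional acceptance probabilities by $O(1/\alpha^4)\sum_j\|\vec v^{(j)}\|_1^4+2c(\alpha)$. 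For the first term, $\|\vec v^{(j)}\|_1^4\le k^3\sum_i|v_i^{(j)}|^4$ and $\sum_j|v_i^{(j)}|^4\le(\max_j|v_i^{(j)}|)^2\|\vec v_i\|_2^2\le\tau^2\|\vec v_i\|_2^4$ by regularity, so $\sum_j\|\vec v^{(j)}\|_1^4\le k^3\tau^2(\sum_i\|\vec v_i\|_2^2)^2\le k^3\tau^2$; for the spread term, Lemma~\ref{lem:spread} applied to $\vec v/\|\vec v\|_2$ with $\gamma=1/k^2$ gives $c(\alpha)=O(\alpha/(\|\vec v\|_2\sqrt\gamma))+O(\tau/\sqrt\gamma)+O(e^{-\gamma^2/2\tau^2})=O(\alpha k^{3/2})+O(k^{-6})+O(e^{-k^{10}/2})$. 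With $\tau=1/k^7$ and $\alpha\asymp k^{-5/2}$, the whole bound is $O(1/k)$, uniformly in the conditioned bits; since those are equidistributed on the two sides, averaging transfers the $O(1/k)$ bound to the unconditioned expectations. When instead $\|\vec v\|_2<k^{-1/2}$ — almost all the $\ell_2$-mass sits on the conditioned, identically distributed coordinates — one argues directly: after conditioning, $\langle\vec v,\vec y\rangle$ is a $\tau$-regular form of variance $o(1/k)$, so on each side $h$ agrees up to $o(1/k)$ error with the halfspace obtained by freezing $\langle\vec v,\vec y\rangle$ at its (common) conditional mean, which depends only on the conditioned bits and hence has the same acceptance probability on both sides; a short small-ball estimate — iterating the conditioning inside $\bigcup_iC_\tau(\vec w_i)$ if that set is itself non-regular — controls the error of the replacement.

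\textbf{Main obstacle.} The heart of the proof is Step 2: to invoke the invariance principle one must quotient out the non-regular ``heavy'' coordinates by conditioning, and for the conditioned distributions to keep matching moments this may touch at most one coordinate per column — exactly what $\bb_t(\vec w_i)\cap\bb_t(\vec w_j)=\emptyset$ delivers. The delicate part is the three-way tension between $\|\vec v\|_2$ (which must not be too small, or the spread term $c(\alpha)$ blows up), the $\ell_1^4$ error (small only because the truncated vector is $\tau$-regular with a tiny $\tau$), and the ``trivial acceptance'' fact for the heavy part (needed precisely when $\|\vec v\|_2$ is small); choosing $\tau=1/k^7$, $t$ as in the statement, $\gamma=1/k^2$ and $\alpha\asymp k^{-5/2}$ so that the tail-truncation, small-ball, $\ell_1^4$ and spread errors are simultaneously $O(1/k)$ is the quantitative core of the argument.
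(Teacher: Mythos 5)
Your overall strategy is the same as the paper's: first use the geometric-decay truncation argument (the paper's Lemma~\ref{lem:numinf}) to replace each block whose critical index exceeds $t$ by its restriction to $\bb_t(\vec{w}_i)$ at cost $O(1/k^2)$ per block, then use the disjointness hypothesis to condition on the heavy coordinates $\bigcup_i C_{\tau}(\vec{w}_i)$ (at most one per column), invoke Lemma~\ref{lem:match} for degree-$3$ moment matching, and apply Theorem~\ref{thm:invariance} together with the spread bound of Lemma~\ref{lem:spread} to the remaining $\tau$-regular part. Your Step 1 and the main branch of Step 2, including the quantitative bookkeeping ($\sum_j\|\vec{v}^{(j)}\|_1^4\leq k^3\tau^2$, $\gamma=1/k^2$, $\alpha$ polynomially small in $k$), are correct and essentially identical to the paper's Lemmas~\ref{lem:numinf} and~\ref{lem:commoninf}; you are in fact slightly more explicit than the paper in observing that the conditioned bits have the same joint law under the two distributions, which is what justifies the final averaging.

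The genuine problem is your second branch, $\|\vec{v}\|_2<k^{-1/2}$. The claim that $h$ agrees up to $o(1/k)$ with the halfspace obtained by freezing $\iprod{\vec{v},\vec{y}}$ at its conditional mean requires anti-concentration of the heavy part $\iprod{\vec{s},\vec{y}^{C}}-\theta$ around the freezing threshold, and no such anti-concentration holds in general: if, say, $\vec{s}$ is a single coordinate of weight close to $1$ and $\theta$ equals the conditional mean, then with constant probability the heavy part sits exactly at the threshold and the sign of $h$ is decided entirely by the fluctuation of $\iprod{\vec{v},\vec{y}}$, so the frozen halfspace disagrees with $h$ with constant probability; the parenthetical ``iterating the conditioning'' does not repair this. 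The conclusion is still true in such situations --- after conditioning, $\sgn(\iprod{\vec{v},\vec{y}}-\theta')$ cannot distinguish the two distributions once you rescale $\vec{v}$ to unit norm --- and that observation shows the case split was never needed. This is exactly how the paper avoids the issue: exploiting scale invariance of the halfspace, it normalizes so that the \emph{regular part} satisfies $\sum_i\|\vec{l}_i\|_2^2=1$ (rather than $\|\vec{w}\|_2=1$), after which the invariance error and the spread function $c(\alpha)$ are bounded uniformly over the conditioned threshold $\theta'$, however large the heavy coefficients are. If you replace your normalization by the paper's (equivalently, rescale by $\|\vec{v}\|_2$ after conditioning), your Step 2 goes through in all cases with the same computations; as written, the small-$\|\vec{v}\|_2$ branch is a gap.
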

\begin{proof} (Theorem~\ref{thm:dictcp})
                If  $x$ is generated from $\cD_0^R$, we know that with
                probability at least $1-\frac{2}{\sqrt{k}}$,
                all the bits in $\{x_{1}^{(j)},
                x_{2}^{(j)},\ldots,x_{k}^{(j)}\}$
are set to $0$. By union bound, with probability at least $1-\frac{2}{\sqrt{k}}-\frac{1}{k}$,
$\{y_{1}^{(j)}, y_{2}^{(j)},\ldots,y_{k}^{(j)}\}$ are all set to $0$, in which
case the test  passes as $\vee_{i=1}^k y_{i}^{(j)}= 0$. If $x$ is
generated from $\cD_1^R$, we know that with probability at least
$1-\frac{1}{\sqrt{k}}$, one of the bits in $\{x_{1}^{(j)},
x_{2}^{(j)}, \ldots, x_{k}^{(j)}\}$ is set to $1$ and
by union bound one of $\{y_{1}^{(j)},
y^{(j)}_{2},\ldots,y_{k}^{(j)}\}$ is set to 1 with probability at least
$1-\frac{1}{\sqrt{k}}-\frac{1}{k}$, in which case the test  passes
since $\vee_{i=1}^k y_{i}^{(j)}= 1$. Overall, the test passes with probability at least $1-\frac{3}{\sqrt{k}}.$
\end{proof}
\subsection{Proof of Soundness (Theorem~\ref{thm:testsoundness}) }
We will prove the contrapositive statement of Theorem \ref{thm:testsoundness}: if some $h(\vec{y})$ passes the above dictatorship test with high probability,
then we can decode for each $\vec{w}_i$ ($i \in [k]$), a small  list of coordinates and at least
two of the lists will intersect.

The proof  is based on two key lemmas (Lemmas \ref{lem:commoninf}, \ref{lem:numinf}).
The first lemma states that if a halfspace passes the test with good probability, then two of its critical index sets $C_{\tau}(\vec{w}_i),C_{\tau}(\vec{w}_j)$ must intersect. This would immediately imply Theorem~\ref{thm:testsoundness} if $c_{\tau}$ is less than $t$.  The second lemma states that every halfspace can be approximated by another halfspace with critical index less than $t$; so we can assume that $c_\tau$ is small without loss of generality.

Let $h(\vec{y})$ be a halfspace function on $\bits^{kR}$ given by
$h(\vec{y}) = \sgn(\iprod{\vec{w},\vec{y}}-\theta)$.  Equivalently, $h(\vec{y})$ can be written as
\begin{align*}
h(\vec{y}) = \sgn\Big(\sum_{j \in [R]} \iprod{
\vec{w}^{(j)},\vec{y}^{(j)} } - \theta\Big)  = \sgn\Big(\sum_{i \in [k]} \iprod{
\vec{w}_i,\vec{y}_i } - \theta\Big)\ ,
\end{align*}
where $\vec{w}^{(j)} \in \R^k$ and $\vec{w}_i \in \R^R$.
\ignore{\begin{lemma} $\tilde{D_0}$ and $\tilde{D_1}$ has matching moments up to
degree 4.
\end{lemma}
\begin{proof}
        Introduce a random vector $z_i^{(j)}$ as a indicator of whether $y_i^{(j)}$ is generated as a random bit or a copy of $x_i^{(j)}$. If suffice to show for every set up of $z_i^{j}$, the moments match up to degree 4.
\end{proof}
}

 \begin{lemma} {(Common Influential Coordinates)} \label{lem:commoninf}
For $\tau =\frac{1}{k^7}$,  let $h(\vec y)$ be a halfspace such that for all $i \neq j \in [k]$,  we have $C_{\tau}(\vec{w}_i) \cap
        C_{\tau}(\vec{w}_j) = \emptyset$ .  Then
        \begin{equation*}
                \Big| \Ex_{\tilde{\cD}_0^R}[h(\vec{y})] -
                \Ex_{\tilde{\cD}_1^R}[h(\vec{y})] \Big| \leq
               O\Big(\frac{1}{k}\Big)\ .
\end{equation*}
\end{lemma}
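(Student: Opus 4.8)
The plan is to condition on the ``heavy'' coordinates of the halfspace, observe that those heavy coordinates carry the same law under the two test distributions, and then apply the invariance principle to the remaining ``light'' linear form, all of whose coefficients are tiny.

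In detail: regard $\vec w$ as a $k\times R$ array with rows $\vec w_i\in\R^R$, and set $\Sigma=\{(i,j):i\in[k],\ j\in C_\tau(\vec w_i)\}$ with $\vec w^{\mathrm{free}}=\truncate(\vec w,\overline\Sigma)$. The disjointness hypothesis $C_\tau(\vec w_i)\cap C_\tau(\vec w_j)=\emptyset$ for $i\ne j$ says exactly that $\Sigma$ meets each column of the array in at most one coordinate; in particular distinct coordinates of $\Sigma$ lie in distinct columns. Since $h$ is unchanged under positive scaling of $(\vec w,\theta)$, I would first dispose of the degenerate case $\vec w^{\mathrm{free}}=0$: there $h(\vec y)$ depends only on $\vec y|_\Sigma$, whose coordinates are independent with $\Pr_{\tilde{\cD}_b^R}[y^{(j)}_i=1]=\Ex_{\tilde{\cD}_b}[y_i]$, and since $\cD_0,\cD_1$ (hence $\tilde{\cD}_0,\tilde{\cD}_1$) have matching first moments, $\vec y|_\Sigma$ has the same law under $\tilde{\cD}_0^R$ and $\tilde{\cD}_1^R$, so $\Ex_{\tilde{\cD}_0^R}[h]=\Ex_{\tilde{\cD}_1^R}[h]$. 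Otherwise rescale so $\|\vec w^{\mathrm{free}}\|_2=1$. For an assignment $\sigma$ to $\Sigma$ put $\theta_\sigma=\theta-\sum_{(i,j)\in\Sigma}w^{(j)}_i\sigma_{(i,j)}$ and $g_b(\sigma)=\Ex_{\tilde{\cD}_b^R}[\sgn(\iprod{\vec w^{\mathrm{free}},\vec y}-\theta_\sigma)\mid\vec y|_\Sigma=\sigma]$. By the same marginal computation $\vec y|_\Sigma$ has a common law $\mu$ under both distributions, so $|\Ex_{\tilde{\cD}_0^R}[h]-\Ex_{\tilde{\cD}_1^R}[h]|=|\Ex_{\sigma\sim\mu}[g_0(\sigma)-g_1(\sigma)]|\le\sup_\sigma|g_0(\sigma)-g_1(\sigma)|$, and it remains to bound $|g_0(\sigma)-g_1(\sigma)|$ by $O(1/k)$ for each fixed $\sigma$.

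For fixed $\sigma$ I would invoke Theorem \ref{thm:invariance} (the $\sgn$ form) with the $R$ ensembles indexed by columns: the ensemble for column $j$ consists of $\{y^{(j)}_i:(i,j)\notin\Sigma\}$ sampled from $\tilde{\cD}_b$ and conditioned on the at most one coordinate of column $j$ in $\Sigma$ being equal to $\sigma$. Columns are independent and all entries lie in $\bits$; and since $\tilde{\cD}_0,\tilde{\cD}_1$ match moments to degree $4$ (Observation \ref{obs:addnoise_tomoments}) and at most one coordinate per column is conditioned, Lemma \ref{lem:match} shows the two ensembles for each column match moments to degree $3$. The linear function is $\iprod{\vec w^{\mathrm{free}},\cdot}$, so $\bvec{l}{j}$ is the restriction of $\vec w^{\mathrm{free}}$ to column $j$. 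Since $\truncate(\vec w_i,\overline{C_\tau(\vec w_i)})$ is $\tau$-regular with $\ell_2$-norm at most $\|\vec w^{\mathrm{free}}\|_2=1$, every free entry satisfies $|w^{(j)}_i|\le\tau$; hence $\|\bvec{l}{j}\|_1^2\le k\sum_i|w^{(j)}_i|^2$, so $\sum_j\|\bvec{l}{j}\|_1^2\le k$ and $\max_j\|\bvec{l}{j}\|_1^2\le k^2\tau^2$, giving $\sum_j\|\bvec{l}{j}\|_1^4\le k^3\tau^2$. Furthermore $\vec w^{\mathrm{free}}$, being a concatenation of $\tau$-regular blocks, is itself $\tau$-regular of unit norm, and conditioning on $\vec y|_\Sigma$ leaves every free coordinate an independent uniform random bit with probability $\gamma:=1/k^2$ (the noise coins are independent of $\vec y|_\Sigma$); so Lemma \ref{lem:spread} bounds the spread function by $c(\alpha)\le\frac{8\alpha}{\sqrt\gamma}+\frac{4\tau}{\sqrt\gamma}+2e^{-\gamma^2/(2\tau^2)}$. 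With $\alpha=1/k^2$, $\tau=1/k^7$, $\gamma=1/k^2$, Theorem \ref{thm:invariance} then yields $|g_0(\sigma)-g_1(\sigma)|\le O(\alpha^{-4})\cdot k^3\tau^2+2c(\alpha)=O(1/k^3)+O(1/k)=O(1/k)$, which is the claim.

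The step I expect to be most delicate is the normalization. Scaling by $\|\vec w\|_2$ instead of $\|\vec w^{\mathrm{free}}\|_2$ could leave $\|\vec w^{\mathrm{free}}\|_2$ arbitrarily small, and then the interval length in Lemma \ref{lem:spread} (which is inversely proportional to that norm) makes the spread term useless; normalizing the free part to unit norm---and handling $\vec w^{\mathrm{free}}=0$ separately, where the halfspace is literally a function of coordinates that look identical on both sides---is what keeps $c(\alpha)$ small. The other point that needs care is that conditioning on $\vec y|_\Sigma$ genuinely preserves both (i) the per-column degree-$3$ moment match, which is exactly why $\Sigma$ must meet each column at most once (the disjointness hypothesis), and (ii) the independent $\gamma$-noise on every free coordinate, which is what licenses Lemma \ref{lem:spread} in the conditioned model.
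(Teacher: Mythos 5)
Your proof is correct and takes essentially the same route as the paper's: condition on the critical-index (heavy) coordinates, use the disjointness hypothesis to get at most one conditioned variable per column so that Lemma \ref{lem:match} yields degree-$3$ moment matching of the column ensembles, bound $\sum_{j}\|\bvec{l}{j}\|_1^4$ via $\tau$-regularity, and invoke Theorem \ref{thm:invariance} with the spread bound from Lemma \ref{lem:spread} at $\alpha=1/k^2$, $\gamma=1/k^2$. The only differences are small points of extra care on your side---normalizing the regular part rather than the whole weight vector (with the degenerate case $\vec{w}^{\mathrm{free}}=0$ treated separately) and noting explicitly that $\vec{y}|_\Sigma$ has the same law under both distributions, a fact the paper's final averaging step uses implicitly.
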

\begin{proof}Fix the following notation,
\begin{align*}
        \vec{l}_i = \truncate(\vec{w}_i, C_{\tau}(\vec{w}_i)) & &
        \vec{s}_i =
        \vec{w}_i - \w^C_i \\ \vec{y}_i^{C} = \truncate(\vec{y}_i,
        C_{\tau}(\vec{w}_i)) & &
        \vec{y}^{C} = \vec{y}_1^{\hh},\vec{y}_2^{\hh},\ldots,\vec{y}_k^{\hh}
        \\ \vec{s} = \vec{s}_1,\vec{s}_2,\ldots,\vec{s}_k & &  \vec{l} = \vec{l}_1,\vec{l}_2,\ldots,\vec{l}_k.
\end{align*}
We can rewrite the halfspace $h(\vec y)$ as
$ h(\vec{y}) = \sgn\Big(\iprod{
\vec{l},\vec{y}^{\hh} } +  \iprod{
\vec{s},\vec{y} }  - \theta\Big)$.
% FULL $$ h(\vec{y}) = \sgn\Big(\iprod{
%\vec{s},\vec{y}^{\hh} } +  \iprod{
%\vec{l},\vec{y} }  - \theta\Big)$$
Let us first normalize the halfspace $h(\vec{y})$ so that
$\sum_{i \in [k]} \|\vec{l}_i\|^2 = 1$.
We now condition on a possible fixing of the vector $\vec{y}^{\hh}$.
Under this conditioning and for $\vec{y}$ chosen randomly from the distribution $\tilde{\cD}^R_{0}$, define the family of
ensembles $\cA =
\bvec{A}{1},\ldots,\bvec{A}{R}$ as follows:
\begin{eqnarray*}
        \bvec{A}{j} = \{ y^{(j)}_i |\  i \in [k] \text{ for which } j \notin C_{\tau}(\vec{w}_i) \}
\end{eqnarray*}
Similarly define the ensemble $\cB =
\bvec{B}{1},\ldots,\bvec{B}{R}$ using $\vec{y}$ chosen randomly from the distribution
$\tilde{\cD}^R_{1}$.  Further let us denote $\bvec{l}{j} =
(l_{1}^{(j)},\ldots,l_{k}^{(j)})$.  Now we apply the invariance
principle (Theorem \ref{thm:invariance}) to the ensembles $\cA,\cB$ and the linear function $\vec{l}$.  For each $j \in [R]$, there is at most one
coordinate $i \in [k]$ such that $j \in C_{\tau}(\vec{w}_i)$.  Thus,
conditioning on $\vec{y}^{\hh}$ amounts to fixing of at most
one variable $y_i^{(j)}$ in each column $\{y_i^{(j)}\}_{i \in [k]}$.  By Lemma
\ref{lem:match},
since $\tilde{\cD_0}$ and $\tilde{\cD_1}$ have matching moments up to degree $4$, we get that $\bvec{A}{j}$ and $\bvec{B}{j}$ have matching moments up to degree $3$. Also notice that $\max_{j
        \in [R],i\in[k]} |l_{i}^{(j)}| \leq \tau \| \vec{l}_i\|_2\leq \tau
\|\vec{l}\|_2$ (as $\vec{l}_i$ is a $\tau$-regular) and each  $y_{i}^{(j)}$ is set to be a random unbiased bit with probability $\frac{1}{k^2}$; by Lemma \ref{lem:spread}, the linear function
$\vec{l}$ and the ensembles $\cA$, $\cB$ satisfy the following spread
property for every $\theta' \in \R$:
        \begin{align*}
                \Pr_{\cA} \Big[ \vec{l}(\cA) \in [\theta' - \alpha,
                \theta'+\alpha] \Big] \leq c(\alpha)\\  \Pr_{\cB} \Big[ \vec{l}(\cB) \in [\theta' - \alpha,
        \theta'+\alpha] \Big] \leq c(\alpha),
        \end{align*}
        where $c(\alpha) \leq 8\alpha k + 4\tau k +
        2e^{-\frac{1}{2\tau^2 k^4}}$ (by setting
        $\gamma = \frac{1}{k^2}$ and $|b-a| =2\alpha$ in Lemma
        \ref{lem:spread}).
 Using the invariance principle (Theorem
\ref{thm:invariance}) this implies:
\begin{multline} \label{eqn:tobound}
        \Big| \Ex_{\cA}\Big[\sgn\Big( \iprod{
\vec{s},\vec{y}^{\hh} } + \sum_{j \in [R]} \iprod{
\bvec{l}{j},\bvec{A}{j} }  - \theta\Big)\Big| \vec{y}^{\hh} \Big] -\\
\Ex_{\cB}\Big[\sgn\Big( \iprod{
\vec{s},\vec{y}^{\hh} } + \sum_{j \in [R]} \iprod{
\bvec{l}{j},\bvec{B}{j} }  - \theta\Big)\Big| \vec{y}^{\hh} \Big]
\Big| \\\leq O\Big(\frac{1}{\alpha^4}\Big) \sum_{i \in [R]}
                \|\bvec{l}{i}\|_1^4 + 2 c(\alpha)
\end{multline}
 By definition of the critical index, we have  $\max_{j
\in [R]} l_{i}^{j} \leq \tau \| \vec{l}_{i}\|_2$.  Using this, we can bound
$\sum_{i \in [R]}
                \|\bvec{l}{i}\|_1^4$ as follows:
\begin{multline*}
                 \sum_{j \in [R]} \| \bvec{l}{j} \|_1^4
                \leq k^4 \sum_{i \in [k]}\sum_{j \in [R]}   \| l_{i}^{(j)} \|^4
                \leq k^4 \sum_{i \in [k]} \Big(\max_{j \in [R]}
                |l_{i}^{(j)}|^2 \Big)  \| \vec{l}_i\|_2^2 \\  \leq
                k^4 \tau^2 \sum_{i \in [k]} \| \vec{l}_i\|_2^2
                \leq  k^4 \tau^2 \| \vec{l}\|_2^2 \leq \frac{1}{k^{10}}.
\end{multline*}
In the final inequality in above calculation, we used the fact that $\tau = \frac{1}{k^7}$ and $\| \vec{l}\|_2 = 1$.
Let us choose $\alpha = \frac{1}{k^2}$ and \eqref{eqn:tobound} is therefore bounded by $O(1/k)$  for all settings of
$\vec{y}^{\hh}$.  Averaging over all settings of $\vec{y}^{\hh}$ we get that
\[
        \Big| \Ex_{\tilde{\cD}_0^R}[h(\vec{y})] -
                \Ex_{\tilde{\cD}_1^R}[h(\vec{y})] \Big|\leq O\left(\frac{1}{k}\right). \]
\end{proof}

The above lemma asserts that unless some two vectors $\vec{w}_i,
\vec{w}_j$ have a {\it common influential coordinate}, the halfspace
$h(\vec{y})$ cannot distinguish between $\tilde{\cD}_0^R$ and
$\tilde{\cD}_1^R$. Unlike with the traditional notion of influence, it is unclear whether
the number of coordinates in $C_{\tau}(\vec{w}_i)$ is small.  The
following lemma yields a way to get around this.

\begin{lemma} {(Bounding the number of influential coordinates)}
        \label{lem:numinf}
        Let $t$ be set as in Theorem~\ref{thm:testsoundness}. Given a halfspace $h(\vec{y})$ and $\rr \in [k]$ such that
        $|C_{\tau}(\vec{w}_\rr)| > t$, define $\tilde{h}(\vec{y}) =
        \sgn(\sum_{i \in [k]} \iprod{\tilde{\vec{w}}_i,\vec{y}_i} -
        \theta)$ as follows: $\tilde{\vec{w}}_\rr =
        \truncate(\vec{w}_\rr, \bb_t(\vec{w}_\rr))$ and $\tilde{\vec{w}}_i =
        \vec{w}_i$ for all $i \neq \rr$.
        Then,
        \begin{align*}
                \Big| \Ex_{ \tilde{\cD}_0^R}[\tilde{h}(\vec{y})] -
                \Ex_{ \tilde{\cD}_0^R}[{h}(\vec{y})] \Big| \leq
                {\frac{1}{k^2}} \mbox { and } \Big|
                \Ex_{ \tilde{\cD}_1^R}[\tilde{h}(\vec{y})] -
                \Ex_{ \tilde{\cD}_1^R}[{h}(\vec{y})] \Big| \leq
                {\frac{1}{k^2}}.
        \end{align*}
\end{lemma}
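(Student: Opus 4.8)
The plan is to show that $h$ and $\tilde h$ disagree on a $\tilde\cD_b^R$-random input with probability at most $1/k^2$ for each $b\in\bits$; since $h,\tilde h$ take values in $\bits$, this gives $\big|\Ex_{\tilde\cD_b^R}[\tilde h]-\Ex_{\tilde\cD_b^R}[h]\big|\le\Pr[h\ne\tilde h]\le 1/k^2$. Write $\vec s_\rr=\vec w_\rr-\tilde{\vec w}_\rr=\vec w_\rr-\truncate(\vec w_\rr,\bb_t(\vec w_\rr))$, which is supported outside the $t$ largest coordinates of $\vec w_\rr$, and set $B(\vec y)=\iprod{\vec s_\rr,\vec y_\rr}$. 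Then $h(\vec y)=\sgn\big(\iprod{\tilde{\vec w},\vec y}+B(\vec y)-\theta\big)$ while $\tilde h(\vec y)=\sgn\big(\iprod{\tilde{\vec w},\vec y}-\theta\big)$, so the two can differ only when $\big|\iprod{\tilde{\vec w},\vec y}-\theta\big|\le|B(\vec y)|$, and it suffices to bound the probability of this event.

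The first step is to extract a long geometrically decreasing subsequence from the \emph{retained} coordinates of $\vec w_\rr$. Sort the coordinates of $\vec w_\rr$ by decreasing absolute value; since $|C_\tau(\vec w_\rr)|>t$ the $\tau$-critical index of $\vec w_\rr$ exceeds $t$, so Lemma~\ref{lem:st} applies to all indices up to $t+1$ (throughout, $\sigma_j$ denotes the tail-norms of $\vec w_\rr$ as in Definition~\ref{def:cti}). Put $t_0=\tfrac{1}{\tau^2}(3\ln(1/\tau)+\ln R)$, $q=\lceil\tfrac{4}{\tau^2}\ln(1/\tau)\rceil$, $m=\lceil 4k^2\ln k\rceil$, so $t=t_0+mq$, and take $G=\{(l-1)q+1:1\le l\le m\}$, whose largest element is $p_m=(m-1)q+1\le mq=t-t_0\le t$. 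By Lemma~\ref{lem:st}, consecutive elements of $G$ satisfy $|w_\rr^{(p_{l+1})}|\le|w_\rr^{(p_l)}|/3$, so the $m$-coordinate vector $\vec w_\rr|_G$ is geometrically decreasing with ratio $\le1/3$; write $a=|w_\rr^{(p_m)}|$ for its smallest entry. Crucially $G\subseteq\bb_t(\vec w_\rr)$, so these coordinates are left unchanged in $\tilde{\vec w}_\rr$.

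The second step bounds the discarded tail: $|B(\vec y)|\le\|\vec s_\rr\|_1\le\sqrt R\,\|\vec s_\rr\|_2=\sqrt R\,\sigma_{t+1}$, and Lemma~\ref{lem:st} with the pair $(p_m,t+1)$, together with $t+1-p_m\ge t_0$ and $1-\tau^2\le e^{-\tau^2}$, gives $\sqrt R\,\sigma_{t+1}\le\sqrt R\,e^{-\tau^2 t_0/2}\,a/\tau=\sqrt\tau\,a\le a/6$ for $\tau=1/k^7$; thus $|B(\vec y)|\le a/6$ for every $\vec y$. Now condition on all coordinates of $\vec y$ except $\{y_\rr^{(j)}:j\in G\}$: this fixes $B(\vec y)$ and fixes $\iprod{\tilde{\vec w},\vec y}$ up to the single summand $\iprod{\vec w_\rr|_G,\vec y_\rr|_G}$, so the disagreement event becomes $\iprod{\vec w_\rr|_G,\vec y_\rr|_G}\in[\beta-a/6,\beta+a/6]$ for a fixed $\beta$. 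Conditioned on the rest, the bits $\{x_\rr^{(j)}\}_{j\in G}$ have some distribution on $\bits^G$ and each $y_\rr^{(j)}$, $j\in G$, is independently re-randomized to a uniform bit with probability $\gamma:=1/k^2$ (this randomization is independent of the conditioning), so Lemma~\ref{lem:geometricsmallball1} applied to $\vec w_\rr|_G$ with this $\gamma$ bounds the conditional probability by $(1-\tfrac{1}{2k^2})^m$. Averaging over the conditioning and using $m\ge 4k^2\ln k$ yields $\Pr_{\tilde\cD_b^R}[h\ne\tilde h]\le(1-\tfrac{1}{2k^2})^m\le e^{-m/(2k^2)}\le 1/k^2$, finishing the proof.

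The main obstacle is calibrating \emph{where} the subsequence $G$ sits, which is exactly why $t$ has its two-part form: $G$ must fit inside the top $t-t_0$ coordinates (forcing the term $mq\approx 4k^2\ln k\cdot\tfrac{4}{\tau^2}\ln(1/\tau)$, so that a length-$m$ ratio-$\le1/3$ subsequence exists there with $m$ large enough that $(1-\tfrac{1}{2k^2})^m\le k^{-2}$), while leaving at least $t_0\approx\tfrac{1}{\tau^2}(3\ln(1/\tau)+\ln R)$ further indices between $G$ and index $t$, so that the extra $e^{-\tau^2 t_0/2}=\tau^{3/2}R^{-1/2}$ decay of $\sigma_{t+1}$ beats the $\sqrt R$ and $1/\tau$ losses in the crude bound $|B|\le\sqrt R\,\sigma_{t+1}\le\sqrt R\,e^{-\tau^2 t_0/2}a/\tau$ and drives the tail below $a/6$. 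Everything else is routine: verifying that the $\tilde\cD_b$ re-randomization noise on the $G$-coordinates is independent of the conditioning so that Lemma~\ref{lem:geometricsmallball1} genuinely applies, and the elementary inequality $|\Ex[\tilde h]-\Ex[h]|\le\Pr[h\ne\tilde h]$.
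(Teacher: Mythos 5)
Your proposal is correct and follows essentially the same route as the paper's proof: it extracts a geometrically decreasing subsequence $G$ inside the retained top-$t$ coordinates via Lemma~\ref{lem:st}, bounds the $\ell_1$ mass of the truncated tail by $\sqrt{R}\,\sigma_{t+1}\le\sqrt{\tau}\,a\le a/6$ using the $\frac{1}{\tau^2}(3\ln(1/\tau)+\ln R)$ buffer, and then conditions on the remaining coordinates and invokes Lemma~\ref{lem:geometricsmallball1} with the independent $1/k^2$ noise to get the $(1-\tfrac{1}{2k^2})^{\lceil 4k^2\ln k\rceil}\le 1/k^2$ bound. The only deviations are trivial bookkeeping (indexing of $G$ and counting $m$ versus $T+1$ elements), so nothing further is needed.
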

\begin{proof} Without loss of generality, we assume $\rr=1$ and  $|w_{1}^{(1)}| \geq
        |w_{1}^{(2)}|\geq  \cdots \geq |w_{1}^{(R)}|$.  In particular,
        this implies $\bb_{t}(\vec{w}_1) = \{1,\ldots,t\}$.  Set $T =\lceil 
        4k^2 \ln k \rceil$.  Define the
        subset $G$ of $\bb_{t}(\vec{w}_1)$ as $$G = \{g_i \ |\  g_i = 1 + i \lceil(4/\tau^2)
        \ln(1/\tau)\rceil,  0\leq i \leq T \}.$$
        Therefore, by Lemma~\ref{lem:st}, $|w_1^{(g_i)}|$ is a geometrically decreasing
        sequence such that $|w_1^{(g_{i+1})}|\leq |w_1^{(g_i)}|/3$.  Let $H = \bb_t(\vec{w}_1)\setminus G$.  Fix the following notation:
        \begin{equation*}
        \vec{w}_1^{G} = \truncate(\vec{w}_1,G), \hspace{2ex}
        \vec{w}_1^{H} =\truncate(\vec{w}_1,H),  \hspace{2ex}
        \vec{w}_1^{>t}= \truncate(\vec{w}_1,\{t+1,\ldots,n\}).
        \end{equation*}
        Similarly, define the vectors $\vec{y}_1^G,\vec{y}_1^H,
        \vec{y}_1^{>t}$. We now rewrite the halfspace functions $h(\vec{y})$ and
        $\tilde{h}(\vec{y})$ as:
        $$h(\vec{y}) = \sgn \Big(\sum_{i =2}^k
                \iprod{\vec{w}_i,\vec{y}_i} + \iprod{\vec{w}_1^{G} , \vec{y}_1^G} +
                \iprod{\vec{w}_1^{H} , \vec{y}_1^H}
           + \iprod{\vec{w}^{>t}_1 ,  \vec{y}_1^{>t}} - \theta  \Big) $$

        $$\tilde{h}(\vec{y}) = \sgn \Big(\sum_{i =2}^k
        \iprod{\vec{w}_i,\vec{y}_i} + \iprod{\vec{w}_1^{G} , \vec{y}_1^G} +
        \iprod{\vec{w}_1^{H} , \vec{y}_1^H}
        - \theta  \Big)\ .$$
               Notice that for any $\vec{y}$, $h(\vec{y})\ne \tilde{h}(\vec{y})$ implies
                \begin{equation}\label{eqn:bp}
                                        \big|\sum_{i =2}^k
                                \iprod{\vec{w}_i,\vec{y}_i} + \iprod{\vec{w}_1^{G} , \vec{y}_1^G} +
                                \iprod{\vec{w}_1^{H} , \vec{y}_1^H} - \theta \big| \leq |\iprod{\vec{w}_1^{>t} , \vec{y}_1^{>t}} |.
                       \end{equation}

By Lemma \ref{lem:st}, we know that
        \begin{multline*}
                |w_1^{(g_T)}|^2 \geq
                \frac{\tau^2}{(1-\tau^2)^{t-g_T}}
                \|\vec{w}_{1}^{>t}\|^2_2
                 \geq \frac{\tau^2}{(1-\tau^2)^{\frac{1}{\tau^2} (3\ln(1/\tau) + \ln R) }} \|\vec{w}_{1}^{>t}\|^2_2 \geq \frac{R}{\tau} \|\vec{w}_{1}^{>t}\|^2_2.
\end{multline*}
Using the fact that $R \|\vec{w}_{1}^{>t}\|_2^2\geq \|\vec{w}_{1}^{>t}\|^2_1  $, we can get that
$\|\vec{w}_1^{>t}\|_1\leq \sqrt{\tau} |w_{1}^{(g_T)}| \leq \frac{1}{6}|w_{1}^{(g_{T})}|$.
Combining the above inequality with (\ref{eqn:bp}) we see that,
        \begin{align*}
        \Prx_{\tilde{\cD}_0^R} \Big[h(\vec{y})  \neq                \tilde{h}(\vec{y})\Big]
                  &\leq            \Prx_{\tilde{\cD}_0^R}\Big[| \sum_{i =2}^k
                                        \iprod{\vec{w}_i,\vec{y}_i} +
                                        \iprod{\vec{w}_1^{G} ,
                                        \vec{y}_1^G} +
                                        \iprod{\vec{w}_1^{H} ,
                                        \vec{y}_1^H} -\theta  |
                \leq |\iprod{\vec{w}_1^{>t} , \vec{y}_1^{>t}} |\Big]\\
         & \leq   \Prx_{\tilde{\cD}_0^R}\Big[\big| \sum_{i =2}^k
                        \iprod{\vec{w}_i,\vec{y}_i} +
                        \iprod{\vec{w}_1^{G} , \vec{y}_1^G} +
                        \iprod{\vec{w}_1^{H} , \vec{y}_1^H}
                        -\theta\big|\leq \frac{|w_{1}^{(g_T)}|}{6}\Big]\\
        &= \Prx_{\tilde{\cD}_0^R} \Big[ \iprod{\vec{w}_1^{G} , \vec{y}_1^G}  \in
        [\theta'-\frac{1}{6}|w_{1}^{(g_T)}|,
        \theta'+\frac{1}{6}|w_{1}^{(g_T)}|]  \Big]
        \end{align*}
        where $\theta' = -\sum_{i =2}^k
        \iprod{\vec{w}_i,\vec{y}_i} -
        \iprod{\vec{w}_1^{H} , \vec{y}_1^H}
        + \theta$.  For any fixing of the value of $\theta' \in \R$,
       it induces a certain distribution on $\vec{y}_1^G$.  However,
        the $\frac{1}{k^2}$ noise introduced in $\vec{y}_1^G$ is
        completely independent.  This corresponds to the setting of
        Lemma \ref{lem:geometricsmallball1}, and hence we can bound the above probability
        by $\left(1-\frac{1}{2k^2}\right)^T \leq \frac{1}{k^2}$.
        The result  follows from averaging over all values of $\theta'$.
\end{proof}

With the two lemmas above,  we now prove the soundness property.

\begin{proof} (Theorem~\ref{thm:testsoundness})
The probability of success of $h(\vec{y})$ is
        given by  $ \frac{1}{2} +\frac{1}{2} \big (\Ex_{
\tilde{\cD}_1^R}[h(\vec{y}) ] - \Ex_{
\tilde{\cD}_0^R}[h(\vec{y})]\big)$.
\noindent Therefore, it suffices to show that $
         \Big| \Ex_{\tilde{\cD}_0^R}[h(\vec{y})] -
      \Ex_{\tilde{\cD}_1^R}[h(\vec{y})] \Big| =
     O(\frac{1}{k}).$

     \smallskip \noindent Define $I = \{r\mid C_{\tau}(\w_r) \geq t
     \}$. We discuss the following two cases.

\smallskip
\noindent 1.  $I = \emptyset$; i.e., $\forall i \in [k]$, $C_{\tau}(\w_i) \leq t$. Then for all $i,j$,                  $\bb_{t}(\vec{w}_i) \cap
    \bb_{t}(\vec{w}_j) = \emptyset$ implies $C_{\tau}(\vec{w}_i) \cap
    C_{\tau}(\vec{w}_j) = \emptyset$. By Lemma \ref{lem:commoninf}, we thus
have
$\Big| \Ex_{\tilde{\cD}_0^R}[h(\vec{y})] -
        \Ex_{\tilde{\cD}_1^R}[h(\vec{y})] \Big| =
       O(\frac{1}{k})$.

\smallskip
  \noindent   2.  $I\ne \emptyset$. Then for all $r\in I$, we set
     $\tilde{\vec{w}}_\rr = \truncate(\vec{w}_\rr,
     \bb_t(\vec{w}_\rr))$ and replace $\vec{w}_\rr$ with
     $\tilde{\vec{w}}_\rr$ in $h$ to get a new halfspace $h'$. Since
     such replacements occur at most $k$ times and by Lemma
     \ref{lem:numinf} every replacement changes the output of the
     halfspace on at most $\frac{1}{k^2}$ fraction of examples, we can
     bound the overall change by $k\times \frac{1}{k^2} =
     \frac{1}{k}$. That is
   \begin{eqnarray}\label{eqn:diff1:dict}
            \Big| \Ex_{ \tilde{\cD}_0^R}[h'(\vec{y})] -
            \Ex_{ \tilde{\cD}_0^R}[{h}(\vec{y})] \Big| \leq
            {\frac{1}{k}}, \qquad         \Big|
            \Ex_{ \tilde{\cD}_1^R}[h'(\vec{y})] -
            \Ex_{ \tilde{\cD}_1^R}[{h}(\vec{y})] \Big| \leq
            {\frac{1}{k}}.
    \end{eqnarray}
Also notice that for $h'$ and all $\rr \in [k]$,  the critical index
of $\tilde{\vec{w}}_\rr$ (i.e., $|C_{\tau}(\tilde{\vec{w}}_\rr)|$) is less than $t$. This reduces the
problem to Case 1, and we conclude
%FULL        \begin{equation}\label{eqn:diff2:dict}
$         \Big| \Ex_{\tilde{\cD}_0^R}[h'(\vec{y})] -
        \Ex_{\tilde{\cD}_1^R}[h'(\vec{y})] \Big| =
       O(1/k)$.
%     \end{equation}
Along with (\ref{eqn:diff1:dict}) this finishes the proof of Theorem~\ref{thm:testsoundness}.

\end{proof}

\subsection{Reduction from \kuniquelabelcover}\label{sec:hardness}
With the dictatorship test defined, we now describe briefly a reduction from \kuniquelabelcover
problem to agnostic learning of monomials, thus showing Theorem
\ref{thm:main} under the Unique Games Conjecture (Conjecture
\ref{ugconj}). Although our final hardness result only assumes $\P \ne \NP$, we describe the reduction to \kuniquelabelcover for the purpose of illustrating the main idea of our proof.

Let $\cL(G(V,E),R,R,\{\pi^{v,e}|v \in V, e \in E\})$ be an instance of
\kuniquelabelcover.  The reduction is defined in  Figure~\ref{fig:ugreduction}. It will produce a distribution over
labeled examples:  $(\y,b) $ where $\y \in \{0,1\}^{|V|\times R}$ and label  $b\in
\{0, 1\}$.  We will index the coordinates
of $\vec{y}\in \{0,1\}^{|V|\times R}$  by $y_{w}^{(i)}$ (for $w\in V, i\in R)$ and denote $\y_w$ (for $w\in V)$  to be
the vector $(y_w^{(1)},y_w^{(2)},\ldots,y_w^{(R)})$.
\begin{figure}\label{fig:ugreduction}
{\center
  \fbox{
    \parbox{6.0in}{
    \begin{enumerate}\itemsep=0ex
\item Sample an edge $e = (v_1,\ldots,v_k) \in E$.
\item Generate a random bit $b \in \bits$.
\item Sample $\vec{x} \in \bits^{kR}$ from $\tilde{\cD}_b^R$.
\item Define $\vec{y} \in \bits^{|V| \times R}$ as follows:
        \begin{enumerate}
                \item For each $v \notin \{v_1,\ldots,v_k\}$, $\vec{y}_v = \vec{0}$.
                \item For each $i \in [k]$ and $j \in [R]$, $y_{v_i}^{(j)} = x_i^{(\pi^{v_i,e}(j))} .$
        \end{enumerate}
\item Output the example $(\vec{y},b)$.
\end{enumerate}
}
}
}
\caption{Reduction from \kuniquelabelcover}
\end{figure}
\\
\paragraph{Proof of Theorem \ref{thm:main} assuming Unique Games
Conjecture}
Fix $k = \frac{10}{\eps^2}$, $\eta = \frac{\eps^3}{100}$ and a
positive integer $R >
\lceil (2k)^{\frac{1}{\eta^2}}\rceil$ for which Conjecture
\ref{ugconj} holds.

\noindent \textbf{Completeness:} Suppose that $\Lambda:V
\to [R]$ is a labeling that {\em strongly} satisfies $1-k \eta$ fraction
of the edges.  Consider disjunction $h(\vec{y}) = \bigvee_{v\in V}
y_{v}^{(\Lambda(v))}$.
For at least $1-k\eta$ fraction of edges $e  = (v_1,v_2,\ldots,v_k) \in E$, $\pi^{v_1,e}(\Lambda(v_1)) = \cdots =
\pi^{v_k,e}(\Lambda(v_k)) = \rr$.  Let us fix such a choice of edge $e$ in step 1.  As all coordinates of $\vec{y}$
outside of $\{ \vec{y}_{v_1},\ldots,\vec{y}_{v_k}\}$ are set
to $0$ in step $4(a)$, the disjunction reduces to $\vee_{i\in [k]}
y_{v_i}^{(\Lambda(v_{i}))} = \vee_{i \in [k]} x_{i}^{(\rr)}$. By Theorem \ref{thm:dictcp},
such a disjunction agrees with every $(\vec{y},b)$ with probability at least $1-\frac{3}{\sqrt{k}}$.
Therefore $h(\vec{y})$ agrees with a random example with probability at least $(1-\frac{3}{\sqrt{k}})(1-k\eta) \geq 1- \frac{3}{\sqrt{k}}-k\eta \geq 1-\eps$.

\noindent \textbf{Soundness:} Suppose there exists a halfspace  $h(\vec{y}) =
\sum_{v\in V} \iprod{\w_{v}, \y_{v}}$ that agrees with more than
$\frac{1}{2} + \eps \geq \frac{1}{2} + \frac{1}{\sqrt{k}}$ fraction of the examples.
%We will construct a   labeling of $\cL $ that satisfies more
%than $\frac{1}{R^{c_0}}$ fraction of the edges.
Set $t=k^{14} (3\ln(k^7) + \ln R)  + \lceil 4k^{14} \ln k^7 \rceil 
         \cdot \lceil 4k^2 \ln k\rceil = O\big(k^{16}\ln{R}\big)$ (same as in
        Theorem~\ref{thm:testsoundness}).
Define the labeling $\Lambda$ using the following strategy : for each
vertex $v\in V$ randomly pick a label
from $\bb_t(\w_{v})$.

By an averaging argument, for at least $\frac{\eps}{2}$ fraction of the edges
$e\in E$ generated
in step 1 of the reduction, $h(\vec{y})$ agrees with the examples corresponding to $e$ with probability at least $\frac12 +
\frac{\eps}{2}$.
We will refer to such edges as {\it good}.
By Theorem~\ref{thm:testsoundness} for each  {\it good} edge $e \in
E$, there exists $i,j \in [k]$,  such that $\pi^{v_i,e}\big(\bb_{t}(\vec{w}_{v_i})\big)\cap
\pi^{v_j,e}\big(\bb_{t}(\vec{w}_{v_j})\big)\ne \emptyset$.  Therefore the edge
$e \in E$ is {\it weakly} satisfied by the labeling $\Lambda$ with
probability at least $\frac{1}{t^2}$.  Hence, in expectation the
labeling $\Lambda$ {\it weakly} satisfies at least $\frac{\eps}{2}
\cdot \frac{1}{t^2} = \Omega(\frac{1}{k^{33} \ln^2{R}}) \geq
\frac{2k^2}{R^{\eta/4}}$ fraction of the edges (by the
choice of $R$ and $t$).

\section{Reduction from Label Cover}\label{sec:truehd}
In this section, we describe a reduction from a  \klabelcover with an additional {\em smoothness} property to the problem of agnostic learning of disjunctions by halfspaces. This will give us Theorem~\ref{thm:main} without assuming the Unique Games Conjecture.

\subsection{Smooth \klabelcover}
Our reduction use the following hardness result for \klabelcover(Definition~\ref{def:klabelcover}) with the additional smoothness property.

\begin{theorem}\label{thm:sml}
        There exists a constant $\gamma> 0$ such that for any integer parameter $J,u\geq 1$, it is \emph{NP}-hard to distinguish between the following two types of \klabelcover $\cL(G(V,E),M,N,\{\pi^{v,e} | e\in E, v\in e\})$ instances with $M = 7^{(J+1)u}$ and $N= 2^{u}7^{Ju}$:
        \begin{enumerate}
                \item (Strongly satisfiable instances) There is some labeling that strongly satisfies every hyperedge.
                \item (Instances that are not $2k^2 2^{-\gamma u}$-weakly satisfiable) There is no labeling that weakly satisfies at least $2k^2 2^{-\gamma u}$ fraction of the hyperedges.
        \end{enumerate}
        In addition, the \klabelcover instances have the following properties:
        \begin{itemize}
              \item (Smoothness) for a fixed vertex $v$ and a randomly picked hyperedge containing $v$,
                \[
                        \forall i,j\in [M], \Pr[\pi^{v,e}(i) = \pi^{v,e}(j)] \leq 1/J.
                \]
                \item For any mapping $\pi^{v,e}$ and any number $i \in [N]$, we have $|(\pi^{v,e})^{-1}(i)|\leq d = 4^u$; i.e., there are at most $d =
4^u$ elements in $[M]$ that are mapped to the same number in $[N]$.
        \end{itemize}
\end{theorem}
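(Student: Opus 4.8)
The plan is to use the \emph{smooth Label Cover} construction of Khot~\cite{Khot03}, instantiated with the parameters tuned to yield the stated $M,N,d$ and then ``folded'' into a $k$-uniform hypergraph. The starting point is the PCP theorem in the following convenient form: there is an absolute constant $\delta_0>0$ such that, given a 3CNF formula $\phi$ in which every variable occurs in a bounded number of clauses, it is $\NP$-hard to distinguish the case that $\phi$ is satisfiable from the case that every assignment falsifies at least a $\delta_0$-fraction of its clauses. Fix such a $\phi$; everything below ($M$, $N$, $d$, and the various probabilities) depends only on $\phi$, $J$ and $u$.

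The (base) two-prover game $\mcl{G}$ has two kinds of questions. An \emph{$A$-question} is an ordered tuple of $J+1$ distinct clauses $(C_1,\dots,C_{J+1})$, and an $A$-answer is a tuple of satisfying assignments, one to each $C_i$ (so $7^{J+1}$ possibilities). A \emph{$B$-question} is obtained from an $A$-question by deleting one clause $C_m$ (chosen uniformly) and replacing it by a uniformly random variable $x\in C_m$; a $B$-answer assigns a satisfying assignment to each of the $J$ surviving clauses and a bit to $x$ (so $2\cdot 7^{J}$ possibilities). The projection of an $A$-answer is its restriction to the $J$ surviving clauses together with the value it gives $x$. Now take $u$-fold parallel repetition, and build a $k$-uniform hypergraph on top: a hyperedge is generated by first drawing a $u$-fold $B$-question $q$, then drawing $A$-questions $v_1,\dots,v_k$ that are conditionally i.i.d.\ completions of $q$, and attaching the $k$ projections $\pi^{v_i,e}\colon[M]\to[N]$ with $M=7^{(J+1)u}$ and $N=(2\cdot 7^{J})^{u}=2^{u}7^{Ju}$. (Assuming WLOG that the clause--variable incidence graph of $\phi$ is connected, the resulting (multi-)hypergraph is connected, and $M\ge N$ since $M/N=(7/2)^u\ge1$.) A labeling strongly satisfies $e$ iff all $k$ projected labels coincide.

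\emph{Completeness} is immediate: if $\phi$ has a satisfying assignment $\sigma$, labeling each vertex by the tuple of restrictions of $\sigma$ to its clauses gives a labeling under which every hyperedge is strongly satisfied (each such restriction is a satisfying assignment of the corresponding clause, and all $k$ projections equal $\sigma$ restricted to $q$). \emph{Soundness} goes through parallel repetition. For the unsatisfiable-type $\phi$, the base game $\mcl{G}$ has value at most $1-\delta_1$ for a constant $\delta_1=\delta_1(\delta_0)>0$ --- this is a routine modification of the standard analysis of the clause-versus-variable Label Cover game underlying \cite{Khot03}: dropping the consistency check on the $J$ surviving clauses only raises the value, and what is left is, up to decoration, the clause-versus-variable game, whose value is at most $1-\delta_0/3$. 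By Raz's parallel repetition theorem together with its later quantitative improvements, $\mcl{G}^{\otimes u}$ has value at most $2^{-\gamma' u}$ for an absolute constant $\gamma'>0$. The hyperedge distribution above is exactly this $u$-fold product equipped with $k$ $A$-players instead of $2$; hence if a labeling $\Lambda$ weakly satisfies a $\rho$-fraction of hyperedges, a union bound over the $\binom{k}{2}$ pairs $i\ne j$ gives one pair with $\Prx_{e}\big[\pi^{v_i,e}(\Lambda(v_i))=\pi^{v_j,e}(\Lambda(v_j))\big]\ge \rho/\binom{k}{2}$, and converting ``$\Lambda$ restricted to $v_i$'' into a strategy for one player and ``$\Lambda$ on a random completion of $q$, then projected'' into a strategy for the other shows this quantity is at most $\mathrm{val}(\mcl{G}^{\otimes u})\le 2^{-\gamma' u}$. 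Taking $\gamma=\gamma'$ yields $\rho\le\binom{k}{2}2^{-\gamma u}< 2k^2 2^{-\gamma u}$.

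Finally, the two structural properties. \emph{Smoothness}: fix a vertex $v$ and two distinct labels $i,j\in[M]$; there is a coordinate $c\in[u]$ in which their coordinate-$c$ $A$-answers differ. If they differ on at least two of the $J+1$ clause-assignments in coordinate $c$, then whatever single clause the edge deletes in coordinate $c$, some differing clause survives and the two projections already disagree; if they differ on exactly one clause in coordinate $c$, the projections can agree only if that clause is the one deleted in coordinate $c$, which has probability $1/(J+1)$ over $e$. Either way $\Prx_e[\pi^{v,e}(i)=\pi^{v,e}(j)]\le 1/(J+1)\le 1/J$. \emph{Preimage bound}: a $B$-label fixes all the surviving-clause assignments and, in each of the $u$ coordinates, the value of one variable inside the deleted clause; a clause on $3$ variables has at most $4$ satisfying assignments consistent with a fixed value of one of its variables, so each coordinate contributes a factor of at most $4$ and $|(\pi^{v,e})^{-1}(i)|\le 4^{u}=d$. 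The main obstacle is the soundness step: one must pin down the base two-prover game so that it simultaneously has value bounded away from $1$ on unsatisfiable instances, is a genuine two-prover game to which parallel repetition applies, and supports the reduction from a weakly-satisfying $k$-uniform labeling with only a $\poly(k)$ loss; by contrast, once the construction is fixed the smoothness and preimage computations are routine.
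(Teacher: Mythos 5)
Your write-up is essentially a from-scratch rebuild of Khot's smooth Label Cover followed by the hypergraph step, whereas the paper takes the bipartite smooth Label Cover of \cite{Khot03} as a black box and only proves the bipartite-to-$k$-uniform reduction (extending a weakly-satisfying labeling to the $W$-side by plurality, which loses a square root in the soundness, where your union-bound-plus-randomized-$B$-strategy argument loses only a factor $O(k^2)$; that part of your argument, as well as your completeness, smoothness and preimage computations, are fine and match the intended construction). The genuine gap is in the step you yourself identify as the crux: the claim that $\mathrm{val}(\mcl{G}^{\otimes u})\le 2^{-\gamma' u}$ for an \emph{absolute} constant $\gamma'$. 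The answer alphabet of your base game has size $7^{J+1}$, and Raz's parallel repetition theorem (also with Holenstein's improvement) only gives a bound of the form $(1-\delta_1^{O(1)})^{\Omega(u/\log s)}$ where $s$ is the answer size, so the exponent you get this way is $\Omega(\delta_1^{O(1)}u/J)$ --- it degrades linearly in $J$. That is fatal for the theorem as stated, which requires $\gamma$ independent of $J$: in the application $u=k$ and $J=4^{17k}$, so a $J$-dependent rate makes the bound $2k^2 2^{-\gamma u}$ exceed $1$ and the soundness guarantee becomes vacuous. This alphabet dependence is exactly the delicate point that makes smooth Label Cover nontrivial, and your proof never addresses it.

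The gap is repairable within your framework in two ways. Either observe that $\mcl{G}$ is a projection game and invoke Rao's parallel repetition theorem for projection games, whose rate $(1-\delta_1^2)^{\Omega(u)}$ is independent of the answer alphabet, so $\gamma'$ depends only on $\delta_1$ (hence only on $\delta_0$); or argue as in the constructions the paper cites: condition on the surviving/padding clauses (the same conditioning implicit in your ``up to decoration'' bound for the base game), under which the repeated large-alphabet game collapses coordinatewise to the $u$-fold repetition of the plain clause-versus-variable game with constant answer size $7$, and apply Raz there. While fixing this, also note that the conditioning removes up to $J$ clauses from the instance, so the base-game bound is $1-\delta_0/3+O(J/m)$ with $m$ the number of clauses; this is harmless since $J$ is a constant and $m$ grows, but it should be stated. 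With either fix your proof goes through, and it is a genuinely different route from the paper's --- you prove the bipartite/smoothness machinery yourself and get a tighter $k$-ary soundness decoding, while the paper outsources the hard part to \cite{Khot03} and pays a square-root loss in its plurality-decoding step.
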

The proof of the above theorem can be found in Appendix~\ref{sec:sm}.

In the rest of the paper, we will set $u = k$ and therefore $d = 4^k$. Also we set the smoothness parameter $J= d^{17} = 4^{17k}$.

\subsection{Reduction from Smooth \klabelcover}

The starting point is a smooth \klabelcover $\cL(G(V,E),M,N,\{\pi^{v,e} | e\in E, v\in e\})$ with $M = 7^{(J+1)u}$ and $N= 2^{u}7^{Ju}$  as described in Theorem \ref{thm:sml}.
Figure~\ref{fig:lcreduction} illustrates the reduction from \klabelcover $\cL(G(V,E),N,M,\{\pi^{v,e} | e\in E, v\in e\})$ that given an instance of \klabelcover $\cL$ produces a random labeled example. We refer to the obtained distribution on examples as $\calE$.
\begin{figure}
\label{fig:lcreduction}
{\center
  \fbox{
    \parbox{\textwidth}{
\begin{itemize}
\item Pick a hyperedge $e = (v_1,v_2,\ldots,v_k) \in E$ with
        corresponding projections $\pi^{v_1,e},\ldots,\pi^{v_k,e} : [M] \to [N]$.
\item Generate a random bit $b \in \bits$.
\item Sample $\vec{x} \in \bits^{kN}$ from $\cD_b^N$.
\item Generate $\vec{y} \in \bits^{|V| \times M}$ as follows:
        \begin{enumerate}
                \item For each $v \notin e$, $\vec{y}_v = \vec{0}$.
                \item For each $i \in [k]$,  set $\vec{y}_{v_i} \in
                        \bits^{M}$ as
                        follows:
                        $$ y_{v_i}^{(j)} = \begin{cases}
                                x_i^{(\pi^{v_i,e} (j))}  &  \text{ with
                                probability }  1- \frac{1}{k^2} \\
                                \text{random bit} & \text{ with
                                probability } \frac{1}{k^2}
                        \end{cases}$$
        \end{enumerate}
\item Output the example $(\vec{y},b)$ or equivalently
        ACCEPT if $h(\vec{y}) = b$.
\end{itemize}
    }
  }
}
\caption{Reduction from \klabelcover}
\end{figure}
\subsection{Proof of Theorem \ref{thm:main}}

We claim that our reduction has the following completeness and soundness properties.
\begin{theorem}\label{thm:main-hardness}
\begin{itemize}
\item {\sc Completeness:} If $\cL$ is a strongly-satisfiable
        instance of smooth \klabelcover, then there is a disjunction
        that agrees with a random example from $\calE$ with probability at least $1-O(\frac{1}{\sqrt{k}})$.
\item {\sc Soundness:} If $\cL$ is not $2k^2 2^{-\gamma k}$-weakly satisfiable and is {\it smooth} with parameters $J = 4^{17k}$ and $d = 4^k$,
        then there is no halfspace that agrees with a random example from $\calE$ with probability more than
        $\frac{1}{2} + O(\frac{1}{\sqrt{k}})$.
\end{itemize}
\end{theorem}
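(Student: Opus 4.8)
The plan is to reuse the disjunction coming from a good labeling. Given a labeling $\Lambda$ that strongly satisfies \emph{every} hyperedge, consider $h(\vec y)=\bigvee_{v\in V} y_v^{(\Lambda(v))}$. For a hyperedge $e=(v_1,\dots,v_k)$ sampled by the reduction, strong satisfaction means $\pi^{v_i,e}(\Lambda(v_i))=\rr$ for all $i$ and a common $\rr$, and since $\vec y_v=\vec 0$ for $v\notin e$ the disjunction collapses to $\bigvee_{i\in[k]} y_{v_i}^{(\Lambda(v_i))}$, where each $y_{v_i}^{(\Lambda(v_i))}$ equals $x_i^{(\rr)}$ with probability $1-1/k^2$ and a fresh random bit otherwise. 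This is exactly the situation in the completeness analysis of the dictatorship test (Theorem~\ref{thm:dictcp}) applied to column $\rr$, so $h$ agrees with the label with probability $\ge 1-3/\sqrt k$; since this holds for every hyperedge, $h$ agrees with a random example from $\calE$ with probability $1-O(1/\sqrt k)$.

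\textbf{Soundness: overall structure.} I would argue the contrapositive. Suppose $h(\vec y)=\sgn\big(\sum_{v\in V}\langle\vec w_v,\vec y_v\rangle-\theta\big)$ agrees with a random example with probability $\ge\tfrac12+C/\sqrt k$ for a sufficiently large constant $C$. Define a labeling by letting $\Lambda(v)$ be a uniformly random element of $\bb_t(\vec w_v)$, for $t$ the counterpart (with alphabet size $N$) of the parameter in Theorem~\ref{thm:testsoundness}. By averaging over the hyperedge drawn in the reduction, an $\Omega(1/\sqrt k)$ fraction of hyperedges are \emph{good}: for such $e$, $h$ agrees with the example distribution conditioned on $e$ with probability $\ge\tfrac12+\Omega(1/\sqrt k)$. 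The heart of the proof is to show that every good hyperedge satisfying one additional ``smoothness-typical'' property (below) admits $i\ne j$ with $\pi^{v_i,e}(\bb_t(\vec w_{v_i}))\cap\pi^{v_j,e}(\bb_t(\vec w_{v_j}))\ne\emptyset$. Granting this, $\Lambda$ weakly satisfies such an $e$ with probability $\ge 1/t^2$, hence weakly satisfies an $\Omega(1/(\sqrt k\, t^2))$ fraction of hyperedges in expectation; for the parameters $u=k$, $d=4^k$, $J=4^{17k}$ this exceeds $2k^2 2^{-\gamma k}$, contradicting the assumed non-satisfiability of $\cL$.

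\textbf{The core step, and the main obstacle.} Fix a good hyperedge $e$. First I would run the Lemma~\ref{lem:numinf} preprocessing: whenever some block $\vec w_{v_i}$ has $\tau$-critical index above $t$, replace it by its truncation to $\bb_t(\vec w_{v_i})$; at most $k$ blocks are touched and each replacement changes the acceptance probability by $O(1/k^2)$, so $e$ stays good and every relevant block becomes $\tau$-regular after deleting its top $t$ coordinates. Next I would push the block weights forward through the projections: set $\hat{\vec w}_i\in\R^N$ with $\hat w_i^{(\rr)}=\sum_{j\in(\pi^{v_i,e})^{-1}(\rr)}w_{v_i}^{(j)}$. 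Up to the per-coordinate noise (handled via Lemma~\ref{lem:spread} exactly as in Lemma~\ref{lem:commoninf}), the restriction of $h$ to $e$ is the halfspace $\sgn\big(\sum_i\langle\hat{\vec w}_i,\vec x_i\rangle-\theta\big)$ fed inputs from $\cD_b^N$ — precisely the object of Theorem~\ref{thm:testsoundness}. Assuming toward a contradiction that the sets $\pi^{v_i,e}(\bb_t(\vec w_{v_i}))$ are pairwise disjoint, I would re-run the argument of Lemma~\ref{lem:commoninf} on $\hat{\vec w}$: conditioning on these (disjoint) top coordinates fixes at most one variable per column of $[N]$ and preserves matching moments up to degree $3$, so the invariance principle forces $|\Ex_{\cD_0^N}[h]-\Ex_{\cD_1^N}[h]|=O(1/k)$, contradicting goodness. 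The main obstacle is bounding the invariance error term $\sum_{\rr\in[N]}\|\hat{\vec w}^{(\rr)}\|_1^4$, which could blow up because each $\hat w_i^{(\rr)}$ aggregates up to $d$ original coordinates. Here smoothness is the key: expand $\sum_{\rr}\big(\sum_{j\in(\pi^{v_i,e})^{-1}(\rr)}|w_{v_i}^{(j)}|\big)^4$ into a diagonal part, bounded deterministically by $O(\tau^2)$ as in Lemma~\ref{lem:commoninf}, plus off-diagonal ``collision'' parts; using $\Pr_e[\pi^{v,e}(j)=\pi^{v,e}(j')]\le 1/J$ together with the facts that the regular part is supported on $\le t$ coordinates and has $\ell_\infty$ norm $\le\tau$, the expected collision contribution over a random hyperedge at $v$ is $\mathrm{poly}(k)\cdot\sqrt t/J$, which is negligible for $J=4^{17k}$. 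A Markov argument then makes all but a vanishing fraction of hyperedges ``smoothness-typical'' at each of their $k$ vertices; intersecting with the good hyperedges still leaves an $\Omega(1/\sqrt k)$ fraction for which the argument applies. Smoothness is also what guarantees (for smoothness-typical $e$) that $\pi^{v_i,e}$ is injective on $\bb_t(\vec w_{v_i})$ and that no tail coordinate collides into its image, so that $\bb_t(\hat{\vec w}_i)=\pi^{v_i,e}(\bb_t(\vec w_{v_i}))$ and the ``pairwise disjoint'' hypothesis transfers from $\vec w$ to $\hat{\vec w}$.

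\textbf{Parameters.} The genuinely delicate part is the numerology: one must take $\tau$ small enough relative to $d$ that fiber-sums of tail coordinates stay below the smallest top coordinate, $J$ large enough relative to $t^2$ that smoothness-typical hyperedges cover almost everything and the collision term above is negligible, and $t$ small enough that $\Omega(1/(\sqrt k\,t^2))$ still dominates $2k^2 2^{-\gamma k}$. The choices $u=k$, $d=4^k$, $J=4^{17k}$ (with a correspondingly tiny $\tau$) in the statement are exactly what makes these constraints simultaneously satisfiable.
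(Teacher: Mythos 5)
Your completeness argument and the outer soundness skeleton (averaging to get good hyperedges, decoding by a random label from $\bb_t(\vec{w}_v)$, the $1/t^2$ accounting) coincide with the paper's, but two steps at the technical core of your soundness proof do not go through as stated. The first is the bound on the invariance error term $\sum_{\rr\in[N]}\bigl(\sum_{j\in(\pi^{v,e})^{-1}(\rr)}|l_v^{(j)}|\bigr)^4$: your estimate of the off-diagonal ``collision'' part rests on the assertion that the regular part is supported on at most $t$ coordinates, which is false --- the $\tau$-regular tail lives on up to $M=7^{(J+1)k}$ coordinates, and $M\gg J$ (the average fiber has $M/N=3.5^k$ coordinates), so collisions are unavoidable in bulk; smoothness only controls a \emph{fixed} pair of labels, and a union bound over pairs of tail coordinates costs roughly $M^2/J$, which is useless. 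For the same reason your claims that $\pi^{v_i,e}$ maps $\bb_t(\vec{w}_{v_i})$ injectively \emph{and} that no tail coordinate lands in its image cannot be extracted from smoothness. The paper's Lemma~\ref{lem:vertexnice} repairs exactly this point by splitting coordinates by magnitude: only the at most $d^8$ coordinates with $(l_v^{(i)})^2\ge\|\vec{l}_v\|_2^2/d^8$ are fed to the smoothness union bound (cost $d^{16}/J$ per vertex), while every 4-tuple containing a small coordinate is bounded \emph{deterministically} using $|(\pi^{v,e})^{-1}(i)|\le d$ and $\tau$-regularity, giving the $2\tau\|\vec{l}_v\|_2^4$ niceness bound; invariance is then applied directly to the original $[M]$-indexed coefficients grouped into $[N]$-columns (Lemma~\ref{lem:commoninf2}), so one never needs the pushed-forward vector $\hat{\vec{w}}$ to be regular or $\bb_t(\hat{\vec{w}}_i)$ to equal $\pi^{v_i,e}(\bb_t(\vec{w}_{v_i}))$ --- issues your formulation leaves open, since a single fiber can aggregate up to $d=4^k$ tail coordinates.

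The second gap is the preprocessing step ``run the Lemma~\ref{lem:numinf} truncation.'' The proof of Lemma~\ref{lem:numinf} bounds $\|\vec{w}_1^{>t}\|_1$ through the ambient dimension ($\ln R$ enters the choice of $t$); transplanted here the ambient dimension is $M$ (or $N$), forcing $t\gtrsim\tau^{-2}\ln M$, i.e.\ exponential in $k$, after which the decoding success probability $\Omega(1/(\sqrt{k}\,t^2))$ no longer beats the soundness threshold $2k^2 2^{-\gamma k}$, since $\gamma$ is only \emph{some} absolute constant that you cannot assume large. Moreover, the coordinates of $\vec{y}_v$ within one fiber are correlated near-copies, so the product-distribution reasoning behind the old bound does not carry over. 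The paper's replacement, Lemma~\ref{lem:strongnuminf}, handles this differently: it shifts the threshold by the conditional mean of the removed tail and controls the deviation by Chebyshev's inequality, using $|(\pi^{v,e})^{-1}(i)|\le d$ to bound the variance by $d\|\vec{a}_1\|_2^2$, which keeps $t=O(k^{29})$ independent of $M$ and $N$ and makes the final counting against $2k^2 2^{-\gamma k}$ valid. Without these two repairs --- the magnitude split behind niceness and the mean-shift/Chebyshev truncation --- your argument does not close.
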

Combining the above theorem with Theorem~\ref{thm:sml} we get that for $k=O(1/\eps^2)$, we obtain our main result: Theorem~\ref{thm:main}.

It remains to check the correctness of the completeness and soundness claims in Theorem~\ref{thm:main-hardness}. First let us prove the completeness property.

\begin{proof}(Proof of Completeness)
Let $\Lambda$ be the labeling that strongly satisfies $\cL$. Consider disjunction $h(\vec{y}) = \bigvee_{v\in V}
y_{v}^{(\Lambda(v))}$.  Let $e= (v_1,v_2,\ldots,v_k)$ be any hyperedge and let $\calE_e$ be the distribution $\calE$ restricted to the examples generated for $e$. With probability at least $1-1/k$,
$y_{v_i}^{\Lambda(v_i)} = x_{i}^{\pi^{v_i,e}(\Lambda(v_i))}$  for
every $i \in [k]$. As $e$ is strongly satisfied by $\Lambda$, for all $i,j \in [k]$, $\pi^{v_i,e}(\Lambda(v_i)) = \pi^{v_j,e}(\Lambda(v_j))$. Therefore, as in the proof of Theorem \ref{thm:dictcp}, we obtain that $h(\vec{y})$ agrees with a random example from $\calE_e$ with probability at least $1-O(1/\sqrt{k})$. Labeling $\Lambda$ strongly satisfies all edges and therefore we obtain that $h(\vec{y})$ agrees with a random example from $\calE$ with probability at least $1-O(1/\sqrt{k})$. \end{proof}

The more complicated part is the soundness property which we prove in Section~\ref{sec:tsd}.

\subsection{Soundness Analysis}\label{sec:tsd}
\paragraph{Proof Idea}
The main idea is similar to the proof of Theorem~\ref{thm:testsoundness} although it is more technically involved. Notice that the reduction in Figure~\ref{fig:lcreduction} produces examples such  that $y_{v_i}^{j_1}$ , $y_{v_i}^{j_2}$ are ``almost identical'' copies when $\pi^{v_i,e}(j_1)=\pi^{v_i,e}(j_2)$.   Further for different edges $e$, the coordinates of $y$ will be grouped in different ways, such that each group will have almost identical copies.

To handle these  additional complications, the first step of the proof is to show that  almost all the hyperedges in smooth \klabelcover satisfy a certain ``niceness" property.  After that we generalize the   proofs of Lemma~\ref{lem:commoninf} and Lemma~\ref{lem:numinf} under the weaker assumption that most of the hyperedges are ``nice''.

The formal definition of ``niceness'' and the proof that most of the edges are ``nice" appear  in Section~\ref{sec:smooth}.
The  generalization of Lemma~\ref{lem:commoninf}  appears in Section~\ref{sec:cinf}. The generalization of Lemma~\ref{lem:numinf} appears in  Section~\ref{sec:ninf}. All these results are put together into a proof of  Theorem~\ref{thm:main-hardness} in Section~\ref{sec:all}.

\subsubsection{Most of the edges are ``nice''}\label{sec:smooth}
Let $h(\vec{y})$ be a halfspace that agrees with more than
$\frac{1}{2}+\frac{1}{\sqrt{k}}$-fraction of the examples.  Suppose,
\begin{eqnarray*}
        h(\vec{y}) = \sgn\Big( \sum_{v \in V} \iprod{\vec{w}_v,
        \vec{y}_v} - \theta\Big).
\end{eqnarray*}

Let $\tau = \frac{1}{k^{13}}$ and let
$$\vec{s}_v = \truncate(\vec{w}_v, C_{\tau}(\vec{w}_v)), \ \ \
        \vec{l}_v =
        \vec{w}_v - \vec{s}_v.
$$
\begin{definition}
        A vertex $v \in V$ is said to be {\it $\beta$-nice} with respect to a
        hyperedge $e \in E$ containing it if
        \begin{eqnarray*}
        \sum_{i \in [N]} \Big( \sum_{j \in \pi^{-1}(i)} |l_{v}^{(j)}|
        \Big)^4 \leq \beta \|\vec{l}_v\|^4_{2},
        \end{eqnarray*}
        where $\pi : [M] \to [N]$ is the projection associated with
        vertex $v$ and hyperedge $e$.  %In other words,
%        \begin{eqnarray*}
  %              \sum_{i \in [N]} \Big( \|\bvec{l}{e,i}_{v}\|_1
    %    \Big)^4 \leq \beta \|\vec{l}_v\|^4_{2}.
      %  \end{eqnarray*}
        A hyperedge $e = (v_1,v_2,\ldots,v_k)$ is $\beta$-nice, if for
        every $i \in [k]$, the vertex $v_i$ is $\beta$-nice with
        respect to $e$.
 \end{definition}

\begin{lemma} \label{lem:vertexnice}
        The fraction of $2\tau$-nice hyperedges in $E$ is at least $1 - O(1/k)$.

\end{lemma}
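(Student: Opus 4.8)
The plan is to bound, for a fixed vertex $v$, the expected value over a random hyperedge $e \ni v$ of the quantity $\sum_{i \in [N]} \bigl(\sum_{j \in \pi^{-1}(i)} |l_v^{(j)}|\bigr)^4$, and then apply Markov's inequality together with a union bound over the $k$ vertices of an edge. Expanding the fourth power, $\bigl(\sum_{j \in \pi^{-1}(i)} |l_v^{(j)}|\bigr)^4 = \sum_{j_1,j_2,j_3,j_4 \in \pi^{-1}(i)} |l_v^{(j_1)}||l_v^{(j_2)}||l_v^{(j_3)}||l_v^{(j_4)}|$, and summing over $i$ this becomes a sum over all $4$-tuples $(j_1,j_2,j_3,j_4) \in [M]^4$ that are ``collapsed to a common value'', i.e. $\pi(j_1)=\pi(j_2)=\pi(j_3)=\pi(j_4)$. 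The term where all four indices are equal contributes exactly $\sum_j |l_v^{(j)}|^4 \le \|\vec l_v\|_2^4$ deterministically, which is already within the allowed bound. So the real work is the ``off-diagonal'' part, where at least two of the $j_a$ are distinct.

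For the off-diagonal terms I would use the smoothness property of Theorem \ref{thm:sml}: for any two distinct indices $j \ne j'$, $\Pr_e[\pi^{v,e}(j) = \pi^{v,e}(j')] \le 1/J$. When the $4$-tuple has exactly two distinct values among $\{j_1,\dots,j_4\}$ we pay one factor of $1/J$; when it has three or four distinct values we pay at least $1/J$ as well (in fact more, but one factor suffices). Hence in expectation the off-diagonal contribution is at most $\tfrac{1}{J}\sum_{j_1,j_2,j_3,j_4 \in [M]}|l_v^{(j_1)}||l_v^{(j_2)}||l_v^{(j_3)}||l_v^{(j_4)}| = \tfrac{1}{J}\|\vec l_v\|_1^4$. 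Now I invoke the defining property of $\vec l_v = \vec w_v - \truncate(\vec w_v, C_\tau(\vec w_v))$: after removing the critical-index coordinates, $\vec l_v$ is $\tau$-regular, and more importantly all its coordinates live in the at-most-$M$-dimensional space, so $\|\vec l_v\|_1 \le \sqrt{M}\,\|\vec l_v\|_2$, giving $\tfrac{1}{J}\|\vec l_v\|_1^4 \le \tfrac{M^2}{J}\|\vec l_v\|_2^4$. Plugging in $J = 4^{17k}$, $M = 7^{(J+1)u}$ with $u = k$ — wait, this is where the bound $\|\vec l_v\|_1 \le \sqrt M \|\vec l_v\|_2$ is far too lossy, since $M$ is astronomically larger than $J$.

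The fix — and the step I expect to be the main obstacle — is that one must not bound $\|\vec l_v\|_1$ crudely by $\sqrt M \|\vec l_v\|_2$, but instead exploit $\tau$-regularity of $\vec l_v$ much more carefully, together with the bounded-preimage property $|(\pi^{v,e})^{-1}(i)| \le d = 4^k$. Concretely: each fiber $\pi^{-1}(i)$ has at most $d$ elements, so $\bigl(\sum_{j \in \pi^{-1}(i)}|l_v^{(j)}|\bigr)^4 \le d^2 \sum_{j \in \pi^{-1}(i)}|l_v^{(j)}|^2 \cdot \max_{j}|l_v^{(j)}|^2$ by Cauchy–Schwarz twice; but this gives a deterministic bound $d^2 \tau^2 \|\vec l_v\|_2^4$ using $\tau$-regularity ($\max_j |l_v^{(j)}| \le \tau\|\vec l_v\|_2$), which is too large since $d^2\tau^2 = 4^{2k}/k^{26}$ is not $O(\tau)$. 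So the argument genuinely needs the \emph{averaging over $e$}: one splits the fourth-moment sum by how many \emph{distinct} indices appear, bounds the ``$\ge 2$ distinct'' part by $\tfrac{1}{J}\|\vec l_v\|_1^4$ as above but then controls $\|\vec l_v\|_1$ via a dyadic/bucketing argument on coordinate magnitudes using $\tau$-regularity — grouping coordinates of $\vec l_v$ into scales and using that within each scale there can be at most $\tau^{-2}$ coordinates of comparable size before the $\ell_2$ mass is exhausted — to conclude $\|\vec l_v\|_1^4 \le \mathrm{poly}(1/\tau, \log(\text{range})) \cdot \|\vec l_v\|_2^4$ times something controlled, and finally check that $J = 4^{17k}$ was chosen large enough (relative to $d = 4^k$ and $\tau = k^{-13}$) that $\tfrac{1}{J}$ times this polynomial factor is $O(1/k^2)$. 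Then $\E_e[\sum_i(\sum_{j\in\pi^{-1}(i)}|l_v^{(j)}|)^4] \le (1 + O(1/k^2))\|\vec l_v\|_2^4 \le (2\tau)\|\vec l_v\|_2^4 \cdot \tfrac{1}{2\tau}(\dots)$; more precisely, Markov gives that the probability that $v$ fails to be $2\tau$-nice with respect to a random $e \ni v$ is $O(1/k^2)$ once the expected excess over $\|\vec l_v\|_2^4$ is $O(\tau/k)\|\vec l_v\|_2^4$, and a union bound over the $k$ vertices in an edge plus an averaging argument over edges yields that at least a $1 - O(1/k)$ fraction of hyperedges are $2\tau$-nice, as claimed.
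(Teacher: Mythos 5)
Your high-level plan (bound the expected fourth-moment quantity over a random edge $e \ni v$, then Markov plus a union bound over the $k$ vertices) is a reasonable probabilistic reformulation, but the execution has a genuine gap exactly at the step you flag as the main obstacle, and your proposed fix does not work. Bounding the off-diagonal contribution in expectation by $\tfrac{1}{J}\|\vec{l}_v\|_1^4$ and then trying to show $\|\vec{l}_v\|_1^4 \le \mathrm{poly}(1/\tau)\cdot\|\vec{l}_v\|_2^4$ via dyadic bucketing is false: $\tau$-regularity only caps the \emph{largest} coordinate relative to $\|\vec{l}_v\|_2$, and a perfectly $\tau$-regular vector can have all $M$ coordinates equal to $1/\sqrt{M}$, giving $\|\vec{l}_v\|_1^2 = M\|\vec{l}_v\|_2^2$; since $M = 7^{(J+1)u}$ dwarfs $J$, no choice of $J$ rescues the bound. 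The missing idea, which is how the paper proceeds, is to apply smoothness only to the \emph{few large} coordinates: let $I_v$ be the set of $j$ with $(l_v^{(j)})^2 \ge \|\vec{l}_v\|_2^2/d^8$, so $|I_v|\le d^8$, and union-bound over the at most $d^{16}$ pairs from $I_v\times I_v$ to get that, except with probability $d^{16}/J = 1/d$ per vertex (hence $k/d = O(1/k)$ per edge), no two large coordinates of $v$ land in the same fiber. On such a \emph{good} edge the niceness bound is then \emph{deterministic}: every off-diagonal $4$-tuple inside a fiber contains at least one coordinate of magnitude below $\|\vec{l}_v\|_2/d^4$, each fiber has size at most $d$ so each coordinate appears in at most $4d^3$ such tuples, and $\tau$-regularity gives $\sum_j |l_v^{(j)}|^3 \le \tau\|\vec{l}_v\|_2^3$, yielding an off-diagonal total of at most $(4\tau/d)\|\vec{l}_v\|_2^4$. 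No Markov argument is needed once the union bound is restricted to $I_v$.

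A secondary error: you assert the diagonal term $\sum_j |l_v^{(j)}|^4 \le \|\vec{l}_v\|_2^4$ is ``already within the allowed bound,'' but the niceness threshold is $2\tau\|\vec{l}_v\|_2^4$ with $\tau = 1/k^{13}$, so $\|\vec{l}_v\|_2^4$ is far too big; you must use $\tau$-regularity to get $\|\vec{l}_v\|_4^4 \le \tau^2\|\vec{l}_v\|_2^4$. Relatedly, your closing Markov computation compares the expectation to $\|\vec{l}_v\|_2^4$ rather than to the actual threshold $2\tau\|\vec{l}_v\|_2^4$, so even granting your expectation bound the chain of inequalities as written does not yield the claim.
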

\begin{proof} By definition, we know that  $\vec{l}_v$ is $\tau$-regular vector.
Denote $I_v =\{i\mid \frac{(l_v^{(i)})^2}{\|\vec{l}_v\|_2^2}\geq
\frac{1}{d^8}\}$. By definition $|I|\leq d^{8}$.  Notice there are at
most $d^{16}$ pairs of values in $I \times I$. By the smoothness property
of the \klabelcover instance, for any vertex $v$,
 at least $1- \frac{d^{16}}{J}$ fraction of
the hyperedges incident on $v$ have the following property: for any $i,j\in I_v$, $\pi^{v,e}(i)\ne \pi^{v,e}(j)$.
If all the vertices in a hyperedge have this property we call it a {\em good} hyperedge. By an averaging argument, we know that among all hyperedges at least  $1- \frac{kd^{16}}{J} = 1-\frac{k}{4^k}\geq 1-O(\frac{1}{k})$ fraction is {\em good}.

We will show all these {\em good} hyperedges are also $2\tau$-nice. For a given
{\em good} hyperedge $e$,  a vertex $v\in
e$, $\pi = \pi^{v,e}$ and $i\in [N]$, there is at most one
$j\in \pi^{-1}(i)$ such that $\frac{(l_v^{(i)})^2}{\|\vec{l}_v\|_2^2}\geq \frac{1}{d^8}$.

Based on the above property, we will show
\[
        \sum_{i \in [N]} \Big( \sum_{j \in \pi^{-1}(i)} |l_{v}^{(j)}|
        \Big)^4 \leq 2\tau \|\vec{l}_v\|^4_{2} \ .
\]

Notice that
        \begin{equation}\label{eqn:smeq}
                \sum _{i\in [N]}\big(\sum_{j \in \pi^{-1}(i)} |l_{v}^{(j)}|\big)^4
                =  \sum_{i\in [N]} \sum_{j_1,j_2,j_3,j_4\in \pi^{-1}(i)} \big |l_v^{(j_1)}l_v^{(j_2)}l_v^{(j_3)}l_v^{(j_4)}\big|
        \end{equation}
and the sum of all the terms with $j_1=j_2= j_3 =j_4$ is  $\|\vec{l}_v\|_4^4$.

For all other terms $|l_v^{(j_1)}l_v^{(j_2)}l_v^{(j_3)}l_v^{(j_4)}\big|$ with $j_1,j_2,j_3,j_4$ that are not all equal, there is at least one $|l_v^{(j_r)}|$ ($r\in [4]$) smaller than $\frac{\|\vec{l}_{v}\|_2}{d^4}$.
Therefore, $|l_v^{(j_1)}l_v^{(j_2)}l_v^{(j_3)}l_v^{(j_4)}\big|$  can be bounded by $$\frac{\|\vec{l}_{v}\|_2}{d^4}\big( \sum_{j_1,j_2,j_3,j_4} |l_v^{(j_1)}|^3+|l_v^{(j_2)}|^3+|l_v^{(j_3)}|^3+|l_v^{(j_4)}|^3\big).$$
Overall, expression (\ref{eqn:smeq}) can be bounded by
\begin{align*}
                & \|\vec{l}_v\|_4^4 +  \frac{\|\vec{l}_{v}\|_2}{d^4} \sum_{i \in [N]} \sum_{j_1,j_2,j_3,j_4 \in \pi^{-1}(i)} |l_v^{(j_1)}|^3+|l_v^{(j_2)}|^3+|l_v^{(j_3)}|^3+|l_v^{(j_4)}|^3
                                \\\leq & \tau^2 \|\vec{l}_v\|_2^4 +  \frac{\|l_{v}\|_2}{d^4} 4{d^3} \sum_{j \in [M]} |l_v^{(j)}|^3  \qquad \qquad \qquad \text{(since $|\pi^{-1}(i)| \leq d$, each $l_v^{(j)}$ appears at most $4d^3$ times)}
                                \\ \leq & (\tau^2 + 4\frac{\tau}{d})\|\vec{l}_v\|_2^4
                   \qquad \qquad \qquad \qquad  \text{($\vec{l}_v$ is $\tau$-regular
                   vector, so $|l_v^{j}| \leq \tau \|\vec{l}_v\|_2$ for all $j \in [M]$ )}
                                 \\ \leq & 2\tau \|\vec{l}_v\|_2^4.
        \end{align*}

\end{proof}

Let us fix a $2\tau$-nice hyperedge $e =
(v_1,\ldots,v_k)$.  As before let $\calE_e$ denote the distribution on examples restricted to those generated for hyperedge $e$.
We will analyze the probability that the halfspace $h(\vec{y})$ agrees with a random example from $\calE_e$.

Let $\pi^{v_1,e},\pi^{v_2,e},\ldots,\pi^{v_k,e} : [M] \to [N]$
denote the projections associated with the hyperedge $e$.
For the sake of brevity, we shall write
$\vec{w}_i,\vec{y}_i,\vec{l}_i$ instead of $\vec{w}_{v_i},
\vec{y}_{v_i},\vec{l}_{v_i}$.  For all $j \in [N]$ and $i \in [k]$, define
$$\bvec{y}{j}_i = \truncate(\vec{y}_i, (\pi^{v_i,e})^{-1}(j)).$$
Similarly, define vectors $\bvec{w}{j}_i,
\bvec{l}{j}_i$ and $\bvec{s}{j}_i$.

Notice that for every example $(\vec{y},b)$ in the support of $\calE_e$, $\vec{y}_v = \vec{0}$ for every vertex $v \notin e$.  Therefore,
on restricting to examples from $\calE_e$ we can write:
\begin{eqnarray*}
        h(\vec{y}) = \sgn\Big(\sum_{i \in [k]} \iprod{\vec{w}_i,\vec{y}_i}
        - \theta\Big).
\end{eqnarray*}

\subsubsection{Common Influential Variables {(generalization of Lemma~\ref{lem:commoninf}})}\label{sec:cinf}
        \begin{lemma}
                \label{lem:commoninf2}
        Let $h(\vec y)$ be a halfspace such that for all $i \neq j \in
        [k]$,  we have $\pi^{v_i,e}(C_{\tau}(\vec{w}_i)) \cap
        \pi^{v_j,e}(C_{\tau}(\vec{w}_j)) = \emptyset$.  Then
        \begin{equation}\label{eqn:lctobound}
            \Big| \Ex_{\calE_e}[h(\vec{y}) | b =
                0] -
                \Ex_{\calE_e}[{h}(\vec{y}) | b=1] \Big| \leq
                O\Big(\frac{1}{k}\Big).
        \end{equation}
\end{lemma}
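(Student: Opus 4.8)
The plan is to mirror the proof of Lemma~\ref{lem:commoninf}, the two new features being that the coordinates of each $\vec{y}_i$ now come in ``almost identical'' groups indexed by the fibers of $\pi_i := \pi^{v_i,e}$, and that the hypothesis concerns the \emph{projected} critical-index sets. For each $i$ I would split $\vec{w}_i = \vec{s}_i + \vec{l}_i$ with $\vec{s}_i = \truncate(\vec{w}_i, C_\tau(\vec{w}_i))$ the irregular head and $\vec{l}_i = \vec{w}_i - \vec{s}_i$ the $\tau$-regular tail, and normalize so that $\sum_{i\in[k]}\|\vec{l}_i\|_2^2 = 1$ (if $\vec{l} = \vec{0}$ the claim is trivial, since after the conditioning below $h$ becomes a constant). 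Writing $\vec{y}_i^{C} = \truncate(\vec{y}_i, C_\tau(\vec{w}_i))$ and $\vec{y}^{C} = (\vec{y}_1^{C},\ldots,\vec{y}_k^{C})$, the term $\iprod{\vec{s}_i,\vec{y}_i}$ depends only on $\vec{y}^{C}$, so I would condition on a fixing of $\vec{y}^{C}$ and analyze $h(\vec{y}) = \sgn\bigl(\iprod{\vec{l},\vec{y}} - \theta'\bigr)$, where $\theta'$ depends on $\vec{y}^{C}$ and $\vec{l} = (\vec{l}_1,\ldots,\vec{l}_k)$, over the free coordinates $y_i^{(j)}$ with $j \notin C_\tau(\vec{w}_i)$.

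The structural point is that, by the hypothesis $\pi_i(C_\tau(\vec{w}_i))\cap\pi_j(C_\tau(\vec{w}_j)) = \emptyset$ for $i\ne j$, each source column $l\in[N]$ satisfies $l \in \pi_{i}(C_\tau(\vec{w}_{i}))$ for at most one index $i = i^\star(l)$; hence conditioning on $\vec{y}^{C}$ exposes only (independent noisy copies of) the single source bit $x_{i^\star}^{(l)}$ in column $l$, and nothing in the columns it does not hit. From this I would derive: (i) the marginal of $\vec{y}^{C}$ is the same under $b=0$ and $b=1$, since it factorizes over columns and within a column is a fixed noise process applied to one source bit whose bias is a common first moment of $\cD_0,\cD_1$; and (ii) after fixing $\vec{y}^{C}$, if ensemble $\bvec{A}{l}$ (resp.\ $\bvec{B}{l}$) consists of the free $y_i^{(j)}$ with $\pi_i(j)=l$ under $b=0$ (resp.\ $b=1$), then the ensembles are independent across $l$ and $\bvec{A}{l},\bvec{B}{l}$ have matching moments up to degree $3$: integrating out the noise on the conditioned copies gives a $b$-independent posterior on $x_{i^\star}^{(l)}$; conditioned on $x_{i^\star}^{(l)}=c$ the remaining sources follow $\cD_b\mid x_{i^\star}=c$, which match up to degree $3$ by Lemma~\ref{lem:match}; and the free $y$'s are $b$-independent noise-generated functions of these sources, so their degree-$\le 3$ moments match after mixing over $c$ with $b$-independent weights.

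Then I would apply the invariance principle (Theorem~\ref{thm:invariance}) to $\{\bvec{A}{l}\},\{\bvec{B}{l}\}$ and the linear form $\vec{l}(\vec{y}) = \sum_{i\in[k]}\iprod{\vec{l}_i,\vec{y}_i}$, whose column-$l$ block $\bvec{l}{l}$ has $\|\bvec{l}{l}\|_1 = \sum_{i\in[k]}\sum_{j\in\pi_i^{-1}(l)}|l_i^{(j)}|$. The fourth-moment error is controlled by the power-mean inequality together with the $2\tau$-niceness of $e$ (Lemma~\ref{lem:vertexnice}): $\sum_{l}\|\bvec{l}{l}\|_1^4 \le k^3\sum_{i}\sum_{l}\bigl(\sum_{j\in\pi_i^{-1}(l)}|l_i^{(j)}|\bigr)^4 \le 2\tau k^3\bigl(\sum_i\|\vec{l}_i\|_2^2\bigr)^2 = 2\tau k^3 = 2/k^{10}$ for $\tau = 1/k^{13}$. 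The spread term is controlled by Lemma~\ref{lem:spread}: the concatenation $\vec{l}$ is $\tau$-regular with $\|\vec{l}\|_2 = 1$, and even after conditioning the free $y$'s remain noisy copies (noise rate $\gamma = 1/k^2$, independent of $\vec{y}^{C}$) of their sources, so $c(\alpha) \le 8\alpha k + 4\tau k + 2e^{-1/(2\tau^2 k^4)} = 8\alpha k + O(1/k^{12})$. Taking $\alpha = 1/k^2$, Theorem~\ref{thm:invariance} bounds $\bigl|\Ex[h\mid\vec{y}^{C},b=0] - \Ex[h\mid\vec{y}^{C},b=1]\bigr| \le O(1/\alpha^4)\cdot 2/k^{10} + 2c(\alpha) = O(1/k)$ for every $\vec{y}^{C}$, and averaging over $\vec{y}^{C}$ using (i) gives \eqref{eqn:lctobound}.

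The main obstacle is the conditioning analysis behind items (i)--(ii): in the folded setting, fixing $\vec{y}^{C}$ pins down several noisy copies of one source bit per column, and one must verify both that this leaves the marginal of $\vec{y}^{C}$ independent of $b$ and that each per-column conditional ensemble still has matching moments up to degree $3$. This is exactly where the disjointness of the \emph{projected} critical-index sets is essential --- it guarantees that at most one source bit per column is touched --- and it must be combined carefully with Lemma~\ref{lem:match} applied conditionally on a $b$-independent posterior over that bit. The niceness bound of Lemma~\ref{lem:vertexnice} then absorbs the extra $\sum_l(\sum_{j\in\pi^{-1}(l)}|l_i^{(j)}|)^4$ terms that the folding introduces into the invariance error, and the remaining estimates are as in Lemma~\ref{lem:commoninf}.
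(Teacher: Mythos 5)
Your proposal is correct and follows essentially the same route as the paper's proof: condition on $\vec{y}^{C}$, form the column ensembles $\bvec{A}{l},\bvec{B}{l}$, establish degree-$3$ moment matching via Lemma~\ref{lem:match}, and apply Theorem~\ref{thm:invariance} with the spread bound from Lemma~\ref{lem:spread} and the fourth-moment error absorbed by the $2\tau$-niceness of $e$, finally taking $\alpha = 1/k^2$ and averaging over $\vec{y}^{C}$. In fact you spell out two details the paper leaves implicit --- that the marginal of $\vec{y}^{C}$ is identical under $b=0$ and $b=1$ (needed for the averaging) and that conditioning on several noisy copies of one source bit per column reduces to Lemma~\ref{lem:match} through a $b$-independent posterior --- so no gap remains.
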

\begin{proof}
Fix the following notation:
\begin{align*}
 \vec{y}_i^{\hh} = \truncate(\vec{y}_i,
        C_{\tau}(\vec{w}_i)) & &
        \vec{y}^{\hh} = \vec{y}_1^{\hh}, \vec{y}_2^{\hh},\ldots,\vec{y}_k^{\hh} \\
        \vec{s} = \vec{s}_1,\vec{s}_2,\ldots,\vec{s}_k & &  \vec{l} = \vec{l}_1,\vec{l}_2,\ldots,\vec{l}_k.
\end{align*}
We can rewrite the halfspace $h(\vec y)$ as
$ h(\vec{y}) = \sgn\Big(\iprod{
\vec{s},\vec{y}^{\hh} } +  \iprod{
\vec{l},\vec{y} }  - \theta\Big)$.
Let us first normalize the weights of $h(\vec{y})$ so that
$\sum_{i \in [k]} \|\vec{l}_i\|_2^2 = 1$.
Let us condition on a possible fixing of the vector $\vec{y}^{\hh}$.
Under this conditioning and also for $b = 0$, define the family of
ensembles $\cA =
\bvec{A}{1},\ldots,\bvec{A}{N}$ as follows:
\begin{eqnarray*}
        \bvec{A}{j} = \Big\{ y^{(\rr)}_i\ |\ i \in [k], \rr \in
        [M] \text{ such
        that } \pi^{v_i,e}(\rr) = j \text{ and } \rr \notin
        C_{\tau}(\vec{w}_i) \Big\}
\end{eqnarray*}
Similarly define the ensemble $\cB =
\bvec{B}{1},\ldots,\bvec{B}{N}$ for the conditioning $b=1$.  Now we shall apply the invariance
principle (Theorem \ref{thm:invariance}) to the ensembles $\cA,\cB$ and the linear
function $\vec{l}(\vec{y})$:
\begin{equation*}
        \vec{l}(\vec{y}) = \sum_{j \in [N]}
        \iprod{\bvec{l}{j},\bvec{y}{j}}.
\end{equation*}
As we prove in Claim \ref{cl:matchingmoments} below, the ensembles $\cA, \cB$ have
matching moments up to degree $3$.  Furthermore, by Lemma \ref{lem:spread}, the linear function
$\vec{l}$ and the ensembles $\cA$, $\cB$ satisfy the following spread
property:
        \begin{align*}
        & \Pr_{\cA} \Big[ l(\cA) \in [\theta' - \alpha,
        \theta'+\alpha] \Big] \leq c(\alpha)  & & \Pr_{\cB} \Big[ l(\cB) \in [\theta' - \alpha,
        \theta'+\alpha] \Big] \leq c(\alpha)
        \end{align*}
        for all $\theta' \in \R$, where $c(\alpha) =8\alpha k +4\tau
        k + 2e^{-\frac{1}{2k^4\tau^2}}$ (by setting
        $\gamma = \frac{1}{k^2}$ and $|b-a| =2\alpha$ in Lemma
        \ref{lem:spread}).

Using the invariance principle (Th.~\ref{thm:invariance}), this implies:
\begin{multline}\label{eqn:tb}
        \left| \Ex_{\cA}\left[\sgn\left( \iprod{
\vec{s},\vec{y}^{\hh} } + \sum_{j \in [N]} \iprod{
\bvec{l}{j},\bvec{A}{j} }  - \theta\right)| \vec{y}^{\hh} \right] -
\Ex_{\cB}\left[\sgn\left( \iprod{
\vec{s},\vec{y}^{\hh} } + \sum_{j \in [N]} \iprod{
\bvec{l}{j},\bvec{B}{j} }  - \theta\right)| \vec{y}^{\hh} \right]   \right| \\
\leq O(\frac{1}{\alpha^4})\sum_{j \in [N]}
                \|\bvec{l}{j}\|_1^4 + 2 c(\alpha).
\end{multline}
Take $\alpha$ to be $\frac{1}{k^2}$ and recall that $\tau= \frac{1}{k^{13}}$.     In Claim \ref{cl:lisregular1} below we show that $$\sum_{j \in [N]} \| \bvec{l}{j} \|_1^4 \leq 2\tau k^4 .$$ The above inequality holds for an arbitrary conditioning of the values of $\vec{y}^{\hh}$. Hence,
by averaging over all settings of $\vec{y}^{\hh}$ we prove \eqref{eqn:lctobound}.
\end{proof}

\begin{claim}\label{cl:matchingmoments}
        The ensembles $\cA$ and $\cB$ have matching moments up to degree
        $3$.
\end{claim}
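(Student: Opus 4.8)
The plan is to reduce the claim, via independence across columns, to a moment-matching statement \emph{within a single column} $j\in[N]$, which then follows from the degree-$4$ moment matching of $\cD_0,\cD_1$ together with Lemma~\ref{lem:match}. First I would observe that for each fixed $j$, the ensemble $\bvec{A}{j}$ (and likewise $\bvec{B}{j}$), together with the restriction of the conditioning vector $\vec{y}^{\hh}$ to the coordinates projecting to $j$, is a function only of the $j$-th column $(x^{(j)}_i)_{i\in[k]}$ of the string $\vec{x}\sim\cD_b^N$ and of the (mutually independent) noise bits attached to coordinates $\rr$ with $\pi^{v_i,e}(\rr)=j$. Since the columns of $\vec{x}$ are i.i.d.\ and all noise is independent, the pairs $\bigl(\bvec{A}{j},\,\vec{y}^{\hh}|_{\,\text{col }j}\bigr)$, $j\in[N]$, are mutually independent; hence every degree-$\le 3$ moment of $\cA$ (resp.\ $\cB$) factors as a product of conditional moments, each supported on a single column and of degree $\le 3$. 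It therefore suffices to show that for each $j$, conditioned on any fixing of $\vec{y}^{\hh}$, the variables $\{y^{(\rr)}_i:\pi^{v_i,e}(\rr)=j,\ \rr\notin C_\tau(\vec{w}_i)\}$ have matching moments up to degree $3$ for $b=0$ and $b=1$.

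Fixing such a $j$, I would invoke the hypothesis of Lemma~\ref{lem:commoninf2}, namely $\pi^{v_i,e}(C_\tau(\vec{w}_i))\cap\pi^{v_{i'},e}(C_\tau(\vec{w}_{i'}))=\emptyset$ for $i\ne i'$: it implies that at most one row $i^\ast$ has a coordinate of $C_\tau(\vec{w}_{i^\ast})$ projecting to $j$, so $\vec{y}^{\hh}$ restricted to column $j$ consists solely of independent $1/k^2$-noisy copies of the single bit $x^{(j)}_{i^\ast}$ (if no such $i^\ast$ exists the conditioning is vacuous in column $j$, and the argument below only simplifies, reducing directly to the degree-$4$ matching of $\cD_0,\cD_1$). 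Splitting on the value of $x^{(j)}_{i^\ast}$,
\[
\Ex_{\cD_0^N}\bigl[\,\prod_{t} y^{(\rr_t)}_{i_t}\bigm|\vec{y}^{\hh}\,\bigr]=\sum_{c\in\bits}\Prx_{\cD_0^N}\bigl[x^{(j)}_{i^\ast}=c\bigm|\vec{y}^{\hh}\bigr]\cdot\Ex_{\cD_0^N}\bigl[\,\prod_{t} y^{(\rr_t)}_{i_t}\bigm|x^{(j)}_{i^\ast}=c\,\bigr],
\]
where I have used that, given $x^{(j)}_{i^\ast}$, the ensemble variables (whose noise bits are disjoint from those of $\vec{y}^{\hh}$) are independent of $\vec{y}^{\hh}$; the same identity holds for $\cB$ with $\cD_1^N$. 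The posterior $\Prx[x^{(j)}_{i^\ast}=c\mid\vec{y}^{\hh}]$ is, by Bayes' rule, a fixed function of the observed noisy copies, the noise rate $1/k^2$, and the marginal $\Prx_{\cD_b}[x^{(j)}_{i^\ast}=c]$; since $\cD_0$ and $\cD_1$ agree on first moments, these posteriors agree.

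It remains to match $\Ex_{\cD_b^N}[\prod_t y^{(\rr_t)}_{i_t}\mid x^{(j)}_{i^\ast}=c]$ for $b=0,1$. Conditioned on $x^{(j)}_{i^\ast}=c$, the bits $(x^{(j)}_i)_{i\in[k]}$ follow $\cD_b(\,\cdot\mid x^{(j)}_{i^\ast}=c)$ and each $y^{(\rr_t)}_{i_t}$ is an independent copy-or-random-bit of $x^{(j)}_{i_t}$; hence (taking w.l.o.g.\ the pairs $(i_t,\rr_t)$ distinct, as $y^2=y$) this conditional expectation equals a fixed polynomial — multilinear, of degree at most the number of distinct row indices among the $i_t$ and therefore at most $3$, with coefficients depending only on the noise rate and on the multiplicities of the rows, not on $b$ — in the conditional moments $\Ex_{\cD_b}[\prod_{i\in S}x^{(j)}_i\mid x^{(j)}_{i^\ast}=c]$ with $|S|\le 3$. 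By Lemma~\ref{lem:match}, applicable because $\cD_0$ and $\cD_1$ match moments up to degree $4$, these conditional moments coincide for $b=0$ and $b=1$. Combined with the matching of the posteriors, this yields the within-column matching, and the first paragraph then gives the claim.

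The main obstacle — and the precise point where the hypothesis of Lemma~\ref{lem:commoninf2} is indispensable — is that $\vec{y}^{\hh}$ conditions not on an exact value but on possibly many noisy copies of a coordinate, and a priori on coordinates of several rows within one column; iterating Lemma~\ref{lem:match} naively would then degrade the matching degree too far. The disjointness hypothesis confines the conditioning in each column to a \emph{single} bit $x^{(j)}_{i^\ast}$, so that passing to its posterior consumes only one of the four available matching degrees (two distinct conditioned coordinates in one column would require degree-$5$ matching, which we do not have); the noise is otherwise harmless because it is independent of $\vec{x}$ and enters only through this one posterior.
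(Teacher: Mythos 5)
Your proof is correct, and its skeleton matches the paper's: factor across the $N$ column blocks using independence, use the disjointness hypothesis to confine the conditioning in each block to a single row, and invoke Lemma~\ref{lem:match} together with the degree-$4$ matching of $\cD_0,\cD_1$ (plus the noise-robustness of Observation~\ref{obs:addnoise_tomoments}) to lose exactly one degree. Where you diverge is at the conditioning step itself: the paper conditions directly on the fixed $\vec{y}^{\hh}$-bits and applies Lemma~\ref{lem:match} at the level of the observed noisy variables, in a two-line argument, whereas you pass to the latent bit $x^{(j)}_{i^\ast}$, show the Bayes posterior $\Prx[x^{(j)}_{i^\ast}=c\mid \vec{y}^{\hh}]$ is the same under $b=0$ and $b=1$ (using only first-moment equality and the $b$-independent noise channel), and then apply Lemma~\ref{lem:match} to the conditional $x$-moments of degree at most $3$. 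This buys something real: since several coordinates of $C_\tau(\vec{w}_{i^\ast})$ may project to the same $j\in[N]$, the conditioning can fix \emph{several} noisy copies within one row, and Lemma~\ref{lem:match} as stated only covers conditioning on a single coordinate; iterating it naively would cost one degree per fixed copy. Your posterior argument shows rigorously that all these copies collectively consume only one of the four matching degrees, a point the paper's terse proof glosses over (its phrase ``fixes bits in at most one row'' followed by a single appeal to Lemma~\ref{lem:match}). The paper's version is shorter; yours is the more airtight treatment of the same underlying idea.
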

Let us suppose for a moment that $\vec{y}$ was generated by setting $ y_{v_i}^{(j)} = x_i^{(\pi^{v_i,e}(j))}$, that is
without adding any noise. By Lemma \ref{lem:d01}, the first four moments of
random variable $\vec{y}$ conditioned on $b=0$ agree with
the first moments of
random variable $\vec{y}$ conditioned on $b=1$. As we showed in Observation \ref{obs:addnoise_tomoments}, even with noise, the first four moments of
$\vec{y}$ remain the same when conditioned on $b=0$ and $b=1$.  Finally,
$\pi^{v_i,e}(C_{\tau}(\vec{w}_i)) \cap \pi^{v_j,e}(C_{\tau}(\vec{w}_j)) = \emptyset$
for all $i \neq j \in [k]$.  Hence for each $j \in [N]$, conditioning on
$\vec{y}^{\hh}$ fixes bits in at most one row of $\bvec{A}{j}$.  Formally,
for every $j \in [N]$, there exists at most one $i \in [k]$ such that
$\bvec{y}{j}_i$ and $\vec{y}^{\hh}$ have shared variables. Therefore, by Lemma \ref{lem:match}, $\cA$ and $\cB$ have matching moments up to degree $3$.

\begin{claim} \label{cl:lisregular1}
        $$\sum_{j \in [N]} \| \bvec{l}{j} \|_1^4 \leq 2\tau k^4\ .$$
\end{claim}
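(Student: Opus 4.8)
The plan is to bound $\|\bvec{l}{j}\|_1$ by expanding it as a sum over the $k$ rows, apply a power-mean inequality to the outer sum, and then invoke the $2\tau$-niceness of the fixed hyperedge $e$ one vertex at a time. Concretely, since $\vec{l}_i = \vec{w}_i - \vec{s}_i$ is supported off $C_{\tau}(\vec{w}_i)$, the vector $\bvec{l}{j} = (\bvec{l}{j}_1,\dots,\bvec{l}{j}_k)$ has
$$\|\bvec{l}{j}\|_1 \;=\; \sum_{i\in[k]} \|\bvec{l}{j}_i\|_1 \;=\; \sum_{i\in[k]} \sum_{\rr\in(\pi^{v_i,e})^{-1}(j)} |l_i^{(\rr)}|\ .$$
Applying the power-mean inequality (convexity of $t\mapsto t^4$) to the outer sum of $k$ nonnegative terms gives
$$\|\bvec{l}{j}\|_1^4 \;\le\; k^3 \sum_{i\in[k]} \Big(\sum_{\rr\in(\pi^{v_i,e})^{-1}(j)} |l_i^{(\rr)}|\Big)^4\ .$$

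Next I would sum this over $j\in[N]$ and interchange the order of summation, so that for each fixed $i$ the inner sum over $j$ is exactly the quantity governed by the niceness condition. Since $e$ is $2\tau$-nice, each vertex $v_i$ is $2\tau$-nice with respect to $e$, i.e.
$$\sum_{j\in[N]} \Big(\sum_{\rr\in(\pi^{v_i,e})^{-1}(j)} |l_i^{(\rr)}|\Big)^4 \;\le\; 2\tau\,\|\vec{l}_i\|_2^4\ .$$
Combining the two displays and then using $\sum_i b_i^2 \le \big(\sum_i b_i\big)^2$ for nonnegative $b_i$ together with the normalization $\sum_{i\in[k]}\|\vec{l}_i\|_2^2 = 1$ (imposed at the start of the proof of Lemma~\ref{lem:commoninf2}) yields
$$\sum_{j\in[N]}\|\bvec{l}{j}\|_1^4 \;\le\; 2\tau k^3 \sum_{i\in[k]} \|\vec{l}_i\|_2^4 \;\le\; 2\tau k^3 \Big(\sum_{i\in[k]}\|\vec{l}_i\|_2^2\Big)^2 \;=\; 2\tau k^3 \;\le\; 2\tau k^4\ ,$$
which is the claim.

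There is essentially no obstacle here: the statement is a direct consequence of the definition of $2\tau$-niceness plus two elementary convexity inequalities and the normalization. The only point requiring a little care is to note that replacing $\vec{w}_i$ by its regular part $\vec{l}_i$ (which is zero on $C_{\tau}(\vec{w}_i)$) leaves the $\ell_1$ mass attributed to each target label $j$ unchanged, so that the niceness inequality—stated in terms of $\vec{l}_v$—applies verbatim; and that the ensemble $\bvec{A}{j}$ omits the coordinates in $C_{\tau}(\vec{w}_i)$, on which $\vec{l}_i$ vanishes anyway, so nothing is lost. (In fact the argument gives the slightly stronger bound $2\tau k^3$; the weaker form $2\tau k^4$ stated in the claim is all that is needed downstream.)
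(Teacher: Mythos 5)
Your proof is correct and follows essentially the same route as the paper's: decompose $\|\bvec{l}{j}\|_1$ over the $k$ rows, apply a power-mean inequality, swap the order of summation, and invoke the $2\tau$-niceness of each vertex together with the normalization $\sum_i\|\vec{l}_i\|_2^2=1$. The only (immaterial) difference is that you use the sharp constant $k^3$ where the paper settles for $k^4$, which you correctly note still yields the claimed bound.
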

\begin{proof}
        Since $\| \bvec{l}{j} \|_1 = \sum_{i \in [k]}
        \|\bvec{l}{j}_i\|_1$, we can write
        \begin{align} \label{eq:nicetailbound}
                \sum_{j \in [N]} \| \bvec{l}{j} \|_1^4 \leq    \sum_{j
                \in N} k^4 \Big(\sum_{i \in [k]}  \| \bvec{l}{j}_i \|_1^4
                \Big) =   k^4 \sum_{i \in [k]}  \Big( \sum_{j
                \in [N]} \| \bvec{l}{j}_i \|_1^4
                \Big).
        \end{align}
        As $e = (v_1,\ldots,v_k)$ is a $2\tau$-nice hyperedge, we have
        $\sum_{j \in [N]} \| \bvec{l}{j}_i \|_1^4 \leq 2\tau \|
        \vec{l}_i \|_2^4$.  By normalization of $\vec{l}$, we know
        $\sum_{i \in [k]} \|\vec{l}_i\|_2^2 = 1$.  Substituting this into inequality (\ref{eq:nicetailbound}) we get the claimed bound.
\end{proof}

\subsubsection{Bounding the Number of Influential Coordinates (generalization of Lemma~\ref{lem:numinf})}\label{sec:ninf}

\begin{lemma}
        \label{lem:strongnuminf}
        Given a halfspace $h(\vec{y}) = \sgn(\sum_{i \in [k]}
        \iprod{{\vec{w}}_i,\vec{y}_i} -
        \theta)$ and $\rr \in [k]$ such that
        $|C_{\tau}(\vec{w}_\rr)| \geq t$ for $t =   \frac{1}{\tau^2} ( \lceil 4 k^2  \ln (2k)\rceil\lceil 4 \ln(1/\tau)\rceil  + \ln(1/\tau) + 10\ln d) = O(k^{29})$,  define $\tilde{h}(\vec{y}) =
        \sgn(\sum_{i \in [k]} \iprod{\tilde{\vec{w}}_i,\vec{y}_i} -
        \tilde{\theta})$ as follows:
        \begin{itemize}
                \item   $\tilde{\vec{w}}_\rr =
        \truncate(\vec{w}_\rr, \bb_t(\vec{w}_\rr))$ and
        $\tilde{\vec{w}}_{i} =
        \vec{w}_{i}$ for all $i \neq \rr$.
\item   $\tilde{\theta} =
        \theta - \E[\iprod{\vec{a}_\rr ,\vec{y}_\rr} | b=0]$, for $\vec{a} = \vec{w} - \tilde{\vec{w}}$.
        \end{itemize}
        Then,
        \begin{eqnarray*}
                \Big| \Ex_{\calE_e}[\tilde{h}(\vec{y}) | b =
                0] -
                \Ex_{\calE_e}[{h}(\vec{y}) | b=0] \Big| \leq \frac{1}{k^2}, & &          \Big|
                \Ex_{\calE_e}[\tilde{h}(\vec{y})|b=1] -
                \Ex_{\calE_e}[{h}(\vec{y})|b=1] \Big| \leq
                \frac{1}{k^2}.
        \end{eqnarray*}
\end{lemma}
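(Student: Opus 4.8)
The plan is to run the argument of Lemma~\ref{lem:numinf}, paying attention to three new features: the alphabet $M=7^{(J+1)k}$ is doubly exponential in $k$, the coordinates of $\vec y_{v_\rr}$ lying in a common fibre of $\pi^{v_\rr,e}$ are near-duplicates, and the threshold is genuinely shifted. Assume $\rr=1$ and reorder the coordinates of $\vec w_1$ so that $|w_1^{(1)}|\ge\cdots\ge|w_1^{(M)}|$; then $\bb_t(\vec w_1)=\{1,\dots,t\}$, the removed vector is $\vec a_1:=\vec w_1-\tilde{\vec w}_1=\truncate(\vec w_1,\{t+1,\dots,M\})$ (all other blocks of $\vec a$ vanish), and $\tilde\theta=\theta-\mu$ with $\mu:=\Ex[\iprod{\vec a_1,\vec y_1}\mid b=0]$. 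Since $\iprod{\vec a_1,\vec y_1}$ is a degree-one function and $\Ex[y_1^{(j)}\mid b=0]=\Ex[y_1^{(j)}\mid b=1]$ for every $j$ (by Lemma~\ref{lem:d01}, as the noise of Figure~\ref{fig:lcreduction} is added identically for $b\in\bits$), we also have $\mu=\Ex[\iprod{\vec a_1,\vec y_1}\mid b=1]$; hence the $b=0$ and $b=1$ bounds have word-for-word identical proofs and I treat $b=0$.

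Set $T=\lceil 4k^2\ln(2k)\rceil$, $g_i=1+i\lceil (4/\tau^2)\ln(1/\tau)\rceil$ for $0\le i\le T$, $G=\{g_0,\dots,g_T\}$ and $H=\bb_t(\vec w_1)\setminus G$. The choice of $t$ gives $g_T\le t$, and since $|C_\tau(\vec w_1)|\ge t$, Lemma~\ref{lem:st} yields $|w_1^{(g_{i+1})}|\le|w_1^{(g_i)}|/3$. Write $\iprod{\tilde{\vec w},\vec y}=\iprod{\vec w_1^{G},\vec y_1^{G}}+R(\vec y)$ with $R(\vec y)=\sum_{i\ne 1}\iprod{\vec w_i,\vec y_i}+\iprod{\vec w_1^{H},\vec y_1^{H}}$. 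A sign comparison shows that $h(\vec y)\ne\tilde h(\vec y)$ forces $|\iprod{\vec w_1^{G},\vec y_1^{G}}+R(\vec y)-\theta+\mu|\le|\iprod{\vec a_1,\vec y_1}-\mu|$, so with $\lambda:=\tfrac16|w_1^{(g_T)}|$,
\[
\Prx_{\calE_e}\bigl[h\ne\tilde h\mid b=0\bigr]\le\Prx_{\calE_e}\bigl[|\iprod{\vec a_1,\vec y_1}-\mu|>\lambda\mid b=0\bigr]+\Prx_{\calE_e}\bigl[|\iprod{\vec w_1^{G},\vec y_1^{G}}+R-\theta+\mu|\le\lambda\mid b=0\bigr].
\]

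For the first probability, split $\vec a_1=\vec m_1+\vec l_1$ with $\vec m_1=\truncate(\vec w_1,\{t+1,\dots,c_\tau\})$ and $\vec l_1=\truncate(\vec w_1,\{c_\tau+1,\dots,M\})$, where $c_\tau=|C_\tau(\vec w_1)|$. By Lemma~\ref{lem:st} the quantities $\sigma_j$ decay geometrically for $j\le c_\tau$, giving $\|\vec m_1\|_1\le\frac{2}{\tau^2}\sigma_{t+1}$, $\|\vec l_1\|_2\le\sigma_{t+1}$ and $\sigma_{t+1}\le(\sqrt{1-\tau^2})^{\,t-g_T}|w_1^{(g_T)}|/\tau$; the extra $10\ln d$ in $t$ makes this at most $d^{-4}|w_1^{(g_T)}|$, so the middle part contributes $|\iprod{\vec m_1,\vec y_1}-\Ex[\iprod{\vec m_1,\vec y_1}\mid b=0]|\le 2\|\vec m_1\|_1\le\frac1{24}|w_1^{(g_T)}|$ \emph{deterministically}. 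For the far part, group the coordinates of $\vec y_1$ and $\vec l_1$ by the fibre $m=\pi^{v_1,e}(j)\in[N]$ into which they project: because distinct $[N]$-columns of $\cD_b^N$ are independent and the per-coordinate noise coins are independent, the blocks $\iprod{\truncate(\vec l_1,(\pi^{v_1,e})^{-1}(m)),\vec y_1}$, $m\in[N]$, are \emph{mutually independent}, each of absolute value at most $\Lambda_m:=\sum_{j\in(\pi^{v_1,e})^{-1}(m)}|l_1^{(j)}|$. Using $|(\pi^{v_1,e})^{-1}(m)|\le d$ and Cauchy--Schwarz, $\sum_m\Lambda_m^2\le d\|\vec l_1\|_2^2$, whence $\Varx_{\calE_e}\!\bigl(\iprod{\vec l_1,\vec y_1}\mid b=0\bigr)\le d\|\vec l_1\|_2^2\le d^{-7}|w_1^{(g_T)}|^2$ and Chebyshev gives $\Prx[|\iprod{\vec l_1,\vec y_1}-\Ex[\iprod{\vec l_1,\vec y_1}\mid b=0]|>\frac1{12}|w_1^{(g_T)}|]\le 144\,d^{-7}\le\frac1{3k^2}$. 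Adding the middle and far estimates, the first probability is $\le\frac1{3k^2}$.

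For the second probability, condition on $b=0$ and on every bit of the sample except the $|G|=T+1$ coordinates $\vec y_1^{G}$ (equivalently on $\vec x$ and on all noise coins outside $G$); then $R$ and $\mu$ are fixed while each $y_1^{(g_i)}$ is an independent copy of $x_1^{(\pi^{v_1,e}(g_i))}$ with probability $1-1/k^2$ and an independent uniform bit otherwise. Since $(w_1^{(g_0)},\dots,w_1^{(g_T)})$ decreases geometrically by a factor of $3$, Lemma~\ref{lem:geometricsmallball1}---whose base distribution is arbitrary, so the correlations among the $x_1^{(\pi^{v_1,e}(g_i))}$ are harmless---bounds the conditional probability that $\iprod{\vec w_1^{G},\vec y_1^{G}}$ falls in the length-$\tfrac13|w_1^{(g_T)}|$ window by $(1-\tfrac1{2k^2})^{T+1}\le(2k)^{-2}$. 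Averaging over the conditioning and combining with the previous paragraph, $\Prx_{\calE_e}[h\ne\tilde h\mid b=0]\le\frac1{3k^2}+\frac1{4k^2}<\frac1{k^2}$, and the $b=1$ bound is identical.

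The only real difficulty over Lemma~\ref{lem:numinf} is the first probability. Because $M$ is doubly exponential in $k$ one cannot bound $\iprod{\vec a_1,\vec y_1}$ via $\|\vec a_1\|_1$ (nor make it small by enlarging $t$), which is precisely why the threshold must now be shifted by $\mu$. The fix is to peel $\vec a_1$ into the ``middle'' band $\{t+1,\dots,c_\tau\}$---handled deterministically by the geometric decay of the $\sigma_j$'s and the $10\ln d$ term in $t$---and the ``far'' tail $\{c_\tau+1,\dots,M\}$, where grouping along the fibres of $\pi^{v_1,e}$ splits $\iprod{\vec l_1,\vec y_1}$ into independent blocks of size $\le d$, so that the bounded-preimage property $|(\pi^{v_1,e})^{-1}(\cdot)|\le d$ of the smooth Label Cover instance already forces a tiny variance. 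Everything else is bookkeeping, and the shift $\tilde\theta=\theta-\mu$ is exactly what makes $h$ and $\tilde h$ disagree only when $\iprod{\tilde{\vec w},\vec y}-\theta$ is within $|\iprod{\vec a_1,\vec y_1}-\mu|$ of $-\mu$.
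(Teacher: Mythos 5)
Your proof is correct and follows the paper's argument essentially verbatim: the same threshold shift by the conditional mean $\mu$, the same geometric subsequence $G$ of $\bb_t(\vec{w}_1)$ fed into Lemma~\ref{lem:geometricsmallball1} after conditioning on everything outside $G$, and the same use of independence across the fibres of $\pi^{v_1,e}$ together with $|(\pi^{v_1,e})^{-1}(\cdot)|\le d$ to control the deviation of the removed part by Chebyshev. The only (harmless) deviation is that you split $\vec{a}_1$ into a middle band handled deterministically via $\ell_1$ and geometric decay plus a far tail handled by Chebyshev, whereas the paper applies a single Chebyshev to all of $\vec{a}_1$ (Claims~\ref{claim:chebyshev} and~\ref{claim:geometric}), using the geometric decay only to bound $\|\vec{a}_1\|_2$.
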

\begin{proof}

                It is easy to see that the matching moments condition implies that
                \[
                         \E_{\calE_e}[\iprod{\vec{a}_\rr ,\vec{y}_\rr} | b=0] =  \E_{\calE_e}[\iprod{\vec{a}_\rr ,\vec{y}_\rr} | b=1].
                \]
                Let us show the inequality for the case $b = 0$, the
        other inequality can be derived in an identical way. Let $\calE_{e,0}$ denote distribution $\calE_{e}$ conditioned on $b=0$. Without loss
        of generality, we may assume that $\rr=1$ and  $|w_{1}^{(1)}| \geq
        |w_{1}^{(2)}| \ldots \geq |w_{1}^{(M)}|$.  In particular,
        this implies $\bb_{t}(\vec{w}_1) = \{1,\ldots,t\}$.
        Define
        \begin{eqnarray*}
                \mu_{\rr} = \E_{\calE_{e,0}}[\iprod{\vec{a}_{\rr} ,
                \vec{y}_{\rr}}],  & &
                \bsup{\mu}{i}_\rr = \E_{\calE_{e,0}}[\iprod{\bvec{a}{i}_\rr ,                \bvec{y}{i}_\rr}].
        \end{eqnarray*}
            Let us set $T = \lceil 4k^2 \ln (2k) \rceil$ and define the subset $G = \{g_1,\ldots,g_T\}$ of $\bb_{t}(\vec{w}_1)$ as follows:
            $$G = \{g_i \ |\  g_i = 1 + i \lceil(4/\tau^2)
        \ln(1/\tau)\rceil,  0\leq i \leq T \}.$$
        Therefore, by Lemma~\ref{lem:st}, $|w_1^{(g_i)}|$ is a geometrically decreasing
        sequence such that $|w_1^{(g_{i+1})}|\leq |w_1^{(g_i)}|/3$.  Let $H = \bb_t(\vec{w}_1)\setminus G$.
        Fix the following notation:
        \begin{align*}
        \vec{w}_1^{G} = \truncate(\vec{w}_1,G), & & \vec{w}_1^{H} =
        \truncate(\vec{w}_1,H), & & \vec{w}_1^{>t} =
        \truncate(\vec{w}_1,\{t+1,\ldots,n\}).
        \end{align*}
        Similarly, define the vectors $\vec{y}_1^G,\vec{y}_1^H,
        \vec{y}_1^{>t}$.  By definition, we have $\vec{a}_1 = \vec{w}_1^{>t}$.
        Rewriting the halfspace functions $h(\vec{y}),
        \tilde{h}(\vec{y}) $ :
        \begin{align*}  h(\vec{y}) &= \sgn \Big(\sum_{i =2}^k
                \iprod{\vec{w}_i,\vec{y}_i} + \iprod{\vec{w}_1^{G} , \vec{y}_1^G} +
        \iprod{\vec{w}_1^{H} , \vec{y}_1^H} + \iprod{\vec{a}_1 ,
        \vec{y}_1^{>t}} -
        \theta  \Big), \\
        \tilde{h}(\vec{y}) &= \sgn \Big(\sum_{i =2}^k
        \iprod{\vec{w}_i,\vec{y}_i} + \iprod{\vec{w}_1^{G} , \vec{y}_1^G} +
        \iprod{\vec{w}_1^{H} , \vec{y}_1^H}
        +\mu_1- \theta  \Big).
        \end{align*}

        By Claim \ref{claim:chebyshev} below, with probability at most
        $\frac{1}{d}=\frac{1}{4^k}$, we have $|\iprod{\vec{a}_1,\vec{y}_1} - \mu_1|
        \geq d^4 \|\vec{a}_1\|_2$.  Suppose $|\iprod{\vec{a}_1,\vec{y}_1} - \mu_1|
        < d^4 \|\vec{a}_1\|_2$, then Claim
        \ref{claim:geometric} below gives $|\iprod{\vec{a}_1,\vec{y}_1} - \mu_1|
        < 1/d^6 |w_{1}^{(g_T)}| < \frac{1}{3} |w_{1}^{(g_T)}|$.  Thus, we can write
        \begin{eqnarray*}
        \Pr_{\calE_{e,0}} \Big[h(\vec{y}) \neq \tilde{h}(\vec{y})\Big]
        \leq \Pr_{\calE_{e,0}} \Big[ \iprod{\vec{w}_1^{G} , \vec{y}_1^G}  \in
        [\theta'-\frac{1}{3}|w_{1}^{(g_T)}|,
        \theta'+\frac{1}{3}|w_{1}^{(g_T)}|]  \Big] + \frac{1}{4^k}.
        \end{eqnarray*}
        where $\theta' = -\sum_{i =2}^k
        \iprod{\vec{w}_i,\vec{y}_i} -
        \iprod{\vec{w}_1^{H} , \vec{y}_1^H}
        -\mu_1+ \theta$.
         For any fixing of the value of $\theta' \in \R$,
        induces a certain distribution on $\vec{y}_1^G$.  However,
        the $\frac{1}{k^2}$ noise introduced in $\vec{y}_1^G$ is
        completely independent.  This corresponds to the setting of
        Lemma \ref{lem:geometricsmallball1}, and hence we can bound the above probability by $(1-1/(2k^2))^{T} + 1/4^k \leq (1-1/(2k^2))^{4k^2 \ln{(2k)}} + 1/4^k \leq 1/k^2$.
\end{proof}

\begin{claim} \label{claim:chebyshev}
                        $$\Pr_{\calE_{e,0}}\Big[|\iprod{\vec{a}_1,\vec{y}_1} - \mu_1| \geq d^4
                \|\vec{a}_1\|_2 \Big] \leq \frac{1}{d}.$$
\end{claim}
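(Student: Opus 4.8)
The plan is to derive this from Chebyshev's inequality, so the whole task reduces to a good upper bound on the variance of $\iprod{\vec{a}_1,\vec{y}_1}$ under $\calE_{e,0}$. Recall that $\vec{a}_1 = \vec{w}_1^{>t} = \truncate(\vec{w}_1,\{t+1,\ldots,M\})$ (after reordering so that $\bb_t(\vec{w}_1) = \{1,\ldots,t\}$) and that, by definition, $\mu_1 = \E_{\calE_{e,0}}[\iprod{\vec{a}_1,\vec{y}_1}]$ is exactly the mean of the random variable in question. Thus Chebyshev gives $\Pr_{\calE_{e,0}}[|\iprod{\vec{a}_1,\vec{y}_1}-\mu_1| \ge d^4\|\vec{a}_1\|_2] \le \Var_{\calE_{e,0}}[\iprod{\vec{a}_1,\vec{y}_1}] / (d^8\|\vec{a}_1\|_2^2)$, and it suffices to show $\Var_{\calE_{e,0}}[\iprod{\vec{a}_1,\vec{y}_1}] \le d\,\|\vec{a}_1\|_2^2$; this already yields a bound of $1/d^7$, comfortably below the required $1/d$.

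For the variance bound I would exploit the fiber structure of the reduction in Figure~\ref{fig:lcreduction}. Writing $\pi = \pi^{v_1,e}$, group the inner product by fibers of $\pi$: $\iprod{\vec{a}_1,\vec{y}_1} = \sum_{i\in[N]} Z_i$ with $Z_i = \sum_{j\in\pi^{-1}(i)} a_1^{(j)} y_{v_1}^{(j)}$. In the reduction, each coordinate $y_{v_1}^{(j)}$ is a function of the column $\vec{x}^{(\pi(j))}$ of a sample drawn from $\cD_0^N$ together with an independent noise bit. Since the columns of $\cD_0^N$ are drawn independently and the noise bits are independent, the random variables $Z_1,\ldots,Z_N$ are mutually independent, so $\Var_{\calE_{e,0}}[\iprod{\vec{a}_1,\vec{y}_1}] = \sum_{i\in[N]} \Var[Z_i] \le \sum_{i\in[N]} \E[Z_i^2]$. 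Because $y_{v_1}^{(j)}\in\{0,1\}$ we have $\E[Z_i^2] \le \big(\sum_{j\in\pi^{-1}(i)} |a_1^{(j)}|\big)^2$, and then Cauchy--Schwarz together with the bounded fiber size $|\pi^{-1}(i)| \le d$ gives $\big(\sum_{j\in\pi^{-1}(i)} |a_1^{(j)}|\big)^2 \le d\sum_{j\in\pi^{-1}(i)} |a_1^{(j)}|^2$. Summing over $i\in[N]$ telescopes to $d\,\|\vec{a}_1\|_2^2$, completing the estimate.

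The only genuinely delicate point is the independence across fibers, i.e., that coordinates $y_{v_1}^{(j)}$ lying in distinct fibers $\pi^{-1}(i)$ are independent; this is precisely where the product structure of $\cD_b^N$ (independent columns) and of the added noise is used, and it is what lets the variance scale with $d$ rather than with the much larger ambient dimension $M$. Everything after that is a two-line application of Cauchy--Schwarz and a collapsing sum. I note in particular that this claim uses neither the smoothness of the \klabelcover instance nor the $2\tau$-niceness of the edge $e$; only the boundedness of $\vec{y}$ and the bound $|\pi^{-1}(i)|\le d$ from Theorem~\ref{thm:sml} enter.
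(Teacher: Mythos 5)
Your proposal is correct and follows essentially the same route as the paper: decompose $\iprod{\vec{a}_1,\vec{y}_1}$ along the fibers $(\pi^{v_1,e})^{-1}(i)$, use independence across fibers (coming from the independent columns of $\cD_0^N$ and independent noise) to write the variance as a sum over fibers, bound each fiber's contribution by $d$ times its $\ell_2$ mass via the fiber-size bound $|\pi^{-1}(i)|\le d$, and finish with Chebyshev to get roughly $1/d^7\le 1/d$. In fact you spell out the within-fiber Cauchy--Schwarz step and the independence justification that the paper leaves implicit.
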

\begin{proof}
% Conditioned on any fixing for the values of $\vec{y}_1^G$, $\vec{y}_1^H$ and $\vec{y}_i$ for $2 \leq i \leq k$,
%  We claim:
% \[
%        \Var_{\calE_{e,0}}\big(\iprod{\vec{a}_1,\vec{y}_1} \big) \leq  td \|\vec{a}_1\|^2_2 +
%         d \|\vec{a}_1\|^2_2 \leq 2td \|\vec{a}_1\|^2_2.
% \]
Write $[M]$ as the union of  disjoint sets $R_1 \cup R_2 \cup \cdots \cup R_{N}$ where $R_i = (\pi^{v_1,e})^{-1}(i)$.
%There are at most $t$ sets such that $R_i\cap \bb_t(\vec{w}_1) \ne \emptyset$ and, by the property of our \klabelcover instance, there are at most $td$ indices in those sets.
% Let $U_1 = \cup_{R_i\cap \bb_t(\vec{w}_1) \ne \emptyset} R_i$ and let
% $U_2 = [M]\setminus U_1$.
%  It is easy to see that $\vec{y}_1^{U_1}$ is independent
% of $\vec{y}_1^{U_2}$ and therefore
% \[
%  \Var_{\calE_{e,0}}\big(\iprod{\vec{a}_1,\vec{y}_1} - \mu_1\big) =  \Var_{\calE_{e,0}}\big(\iprod{\vec{a}_1^{U_1},\vec{y}_1^{U_1}} \big) +  \Var_{\calE_{e,0}}\big(\iprod{\vec{a}_1^{U_2},\vec{y}_1^{U_2}}\big).
%   \]
% The variance of $\iprod{\vec{a}_1^{U_1},\vec{y}_1^{U_1}}$ is at most $$\|\vec{a}_1^{U_1}\|_1^2\leq td\|\vec{a}_{1}^{U_1}\|^2_2 \leq td\|\vec{a}_{1}\|^2_2.$$
%Notice that $U_2$ is  the union of all the $R_i$'s that do not intersect
%with $\bb_t(\vec{w}_1)$ . Further for any $i,j \in [N]$
Notice  every $R_i$ has size at most $d$,
therefore $$\Var_{\calE_{e,0}}\big(\iprod{\vec{a}_1,\vec{y}_1} \big)  = \sum_{i \in [N]} \Var_{\calE_{e,0}}\big(\iprod{\vec{a}_1^{R_i},\vec{y}_1^{R_i}} \big) \leq
\sum_{i \in [N]} d\|\vec{a}_{1}^{R_i}\|_2^2
= d\|\vec{a}_{1}\|^2_2 .$$

% Overall, we have
% \[
%         \Var_{\calE_{e,0}}\big(\iprod{\vec{a}_1,\vec{y}_1} \big)  \leq  td \|\vec{a}_1\|^2_2 +
%         d \|\vec{a}_1\|^2_2 \leq 2td \|\vec{a}_1\|^2_2.
% \]
By applying Chebyshev's inequality (Th.~\ref{thm:chebyshev}), we have
\[
\Pr_{\calE_{e,0}}\left[|\iprod{\vec{a}_1,\vec{y}_1} - \mu_1| \geq d^4\|\vec{a}_1\|_2\right]\leq \frac{2d}{d^8} \leq \frac{1}{d}.
\]
\end{proof}

        \begin{claim} \label{claim:geometric}   By the choice of the parameters $T$ and $t$,
        $$\|\vec{a}_1\|_2 \leq \frac{1}{d^{10}} |w_{1}^{(g_T)}|  .$$
        \end{claim}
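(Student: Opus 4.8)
\textbf{Proof proposal for Claim~\ref{claim:geometric}.}
The plan is to read off the bound from Lemma~\ref{lem:st}, i.e.\ from the fact that below the critical index the tail $\ell_2$-norms of $\vec{w}_1$ decay geometrically. Recall that in the proof of Lemma~\ref{lem:strongnuminf} we have already reordered the coordinates so that $|w_1^{(1)}|\ge|w_1^{(2)}|\ge\cdots\ge|w_1^{(M)}|$; hence $\bb_t(\vec{w}_1)=\{1,\ldots,t\}$, the vector $\vec{a}_1=\vec{w}_1^{>t}=\truncate(\vec{w}_1,\{t+1,\ldots,M\})$ is supported on $\{t+1,\ldots,M\}$, and $\|\vec{a}_1\|_2=\sigma_{t+1}$ in the notation $\sigma_s^2=\sum_{j\ge s}|w_1^{(j)}|^2$ of Definition~\ref{def:cti}. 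The hypothesis $|C_\tau(\vec{w}_1)|\ge t$ says the $\tau$-critical index $c_\tau$ of $\vec{w}_1$ satisfies $c_\tau\ge t$.

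The first step is to record that $g_T<t$. From $g_T=1+T\lceil(4/\tau^2)\ln(1/\tau)\rceil$ with $T=\lceil 4k^2\ln(2k)\rceil$, the crude estimate $\lceil(4/\tau^2)\ln(1/\tau)\rceil\le\frac1{\tau^2}\lceil 4\ln(1/\tau)\rceil+1$ and the definition of $t$ give
\[
        t-g_T\ \ge\ \frac1{\tau^2}\bigl(\ln(1/\tau)+10\ln d\bigr)-(T+1)\ >\ 0 ,
\]
so in particular $g_T<t\le c_\tau$. Now apply Lemma~\ref{lem:st} (which is available on the whole range $[g_T,t]$ since the critical index exceeds $t-1$): taking $i=g_T,\ j=t$ yields $\sigma_t\le(\sqrt{1-\tau^2})^{\,t-g_T}\sigma_{g_T}$, and taking $i=j=g_T$ yields $\sigma_{g_T}\le|w_1^{(g_T)}|/\tau$. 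Combining these with $\|\vec{a}_1\|_2=\sigma_{t+1}\le\sigma_t$ gives
\[
        \|\vec{a}_1\|_2\ \le\ \frac{(\sqrt{1-\tau^2})^{\,t-g_T}}{\tau}\,|w_1^{(g_T)}| .
\]

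It remains to check that the prefactor is at most $d^{-10}$. Using $(\sqrt{1-\tau^2})^{\,t-g_T}\le e^{-\tau^2(t-g_T)/2}$ together with the lower bound on $\tau^2(t-g_T)$ above, and plugging in $\tau=k^{-13}$ and $d=4^k$ (so that the correction $\tau^2(T+1)=O(k^{-24}\ln k)$ is negligible), one simplifies $\tau^{-1}(\sqrt{1-\tau^2})^{\,t-g_T}$ and obtains the claimed bound; this is precisely the arithmetic hidden behind the phrase ``by the choice of the parameters $T$ and $t$''. I expect this last plug-in — juggling the two ceilings in the definitions of $t$ and $G$ and verifying that the resulting exponent is large enough to absorb the $\tau^{-1}$ factor — to be the only fiddly point, since conceptually the claim is an immediate consequence of Lemma~\ref{lem:st} once $g_T<t\le c_\tau$ is established. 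I would also note that, unlike the analogous estimate in Lemma~\ref{lem:numinf}, one must argue purely with $\|\vec{a}_1\|_2$ and cannot route through $\|\vec{a}_1\|_1\le\sqrt{|\mathrm{supp}(\vec{a}_1)|}\,\|\vec{a}_1\|_2$, because $\vec{a}_1$ may be supported on as many as $M$ coordinates and $\ln M$ is far too large to be allowed to enter $t$.
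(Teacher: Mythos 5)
Your route is the same as the paper's: both apply Lemma~\ref{lem:st} on the range between $g_T$ and (roughly) $t$, which is legitimate because $|C_{\tau}(\vec{w}_1)|\ge t$ makes the critical index exceed $t-1$, and then conclude by the choice of $t$; your only cosmetic deviation is taking $j=t$ and using $\|\vec{a}_1\|_2=\sigma_{t+1}\le\sigma_t$ where the paper bounds $\sigma_{t+1}$ directly, and you are in fact more careful than the paper about the ceilings and about checking $g_T<t$. One caveat about the computation you defer: it does \emph{not} deliver the stated constant $d^{-10}$. From $t-g_T\ge\tau^{-2}(\ln(1/\tau)+10\ln d)-(T+1)$ one gets $(\sqrt{1-\tau^2})^{\,t-g_T}\le e^{-\tau^2(t-g_T)/2}\approx\sqrt{\tau}\,d^{-5}$, so your prefactor $\tau^{-1}(\sqrt{1-\tau^2})^{\,t-g_T}$ is about $\tau^{-1/2}d^{-5}=k^{6.5}\,4^{-5k}$, which is larger than $4^{-10k}$: the square root halves the available exponent (costing a factor $d^{5}$), and the $\tau^{-1}$ leaves a residual $\tau^{-1/2}$. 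This is not really your error, since the paper's own display has the same slack --- it writes $\tau$ in the numerator where Lemma~\ref{lem:st} gives $\tau^{2}$, and even then its last inequality only reaches $|w_1^{(g_T)}|^2\ge d^{10}\|\vec{a}_1\|_2^2$, i.e.\ $\|\vec{a}_1\|_2\le d^{-5}|w_1^{(g_T)}|$ rather than $d^{-10}|w_1^{(g_T)}|$. The honest bound $\|\vec{a}_1\|_2\le O(\tau^{-1/2})\,d^{-5}\,|w_1^{(g_T)}|$ is all that is actually used: in Lemma~\ref{lem:strongnuminf} one only needs $d^4\|\vec{a}_1\|_2\le |w_1^{(g_T)}|/3$, and $k^{6.5}4^{-k}\ll 1/3$ for large $k$. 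So your proposal is faithful to the paper's argument; just do not claim the exponent $10$ as the outcome of the plug-in --- either weaken the constant in the claim or note explicitly that the computation gives $d^{-5}$ up to $\mathrm{poly}(k)$ factors, which suffices downstream.
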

        \begin{proof}
        By Lemma \ref{lem:st},
 \[ |w_1^{(g_T)}|^2 \geq \frac{\tau}{(1-\tau^2)^{t-g_T})} \|a_1\|^2_2 \geq \frac{\tau}{(1-\tau^2)^{\frac{1}{\tau^2} (\ln(1/\tau) + 10\ln d) }} \|\vec{a}_1\|^2_2 \geq d^{10} \|\vec{a}_1\|_2^2.
\]
       \end{proof}

\subsubsection{Proof of Soundness}\label{sec:all}
Recall that we chose $\tau = 1/k^{13}$ and $t = O(k^{29})$.
\begin{lemma} \label{lem:mainsoundness}
Fix a hyperedge $e$ which is $2\tau$-nice.        If for all $i \neq j \in [k]$,
        $\pi^{v_i,e}\big(\bb_{t}(\vec{w}_i)\big) \cap
        \pi^{v_j,e}\big(\bb_{t}(\vec{w}_j)\big) = \emptyset$ then the probability that halfspace
        $h(\vec{y})$ agrees with a random example from $\calE_e$ is
        at most $\frac{1}{2}+ O(\frac{1}{k})$.

\end{lemma}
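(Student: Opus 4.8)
The plan is to follow the proof of Theorem~\ref{thm:testsoundness} almost verbatim, replacing its two ingredients with the smooth-\klabelcover generalizations Lemma~\ref{lem:commoninf2} and Lemma~\ref{lem:strongnuminf}. Writing $h(\vec{y}) = \sgn(\sum_{i\in[k]}\iprod{\vec{w}_i,\vec{y}_i}-\theta)$ on the support of $\calE_e$, the probability that $h$ agrees with a random example from $\calE_e$ equals
\[
\tfrac12+\tfrac12\Big(\Ex_{\calE_e}[h(\vec{y})\mid b=1]-\Ex_{\calE_e}[h(\vec{y})\mid b=0]\Big),
\]
so it suffices to show that $\big|\Ex_{\calE_e}[h\mid b=0]-\Ex_{\calE_e}[h\mid b=1]\big| = O(1/k)$.

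I would then split on the set $I = \{\rr\in[k] : |C_\tau(\vec{w}_\rr)|\ge t\}$. If $I=\emptyset$, then $C_\tau(\vec{w}_\rr) = \bb_{|C_\tau(\vec{w}_\rr)|}(\vec{w}_\rr)\subseteq \bb_t(\vec{w}_\rr)$ for every $\rr$, so the hypothesis $\pi^{v_i,e}(\bb_t(\vec{w}_i))\cap\pi^{v_j,e}(\bb_t(\vec{w}_j))=\emptyset$ for all $i\ne j$ forces the stronger-looking $\pi^{v_i,e}(C_\tau(\vec{w}_i))\cap\pi^{v_j,e}(C_\tau(\vec{w}_j))=\emptyset$; since $e$ is $2\tau$-nice, Lemma~\ref{lem:commoninf2} applies directly and gives the claimed $O(1/k)$ bound. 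If $I\ne\emptyset$, I would pass to an auxiliary halfspace $h'$ by applying Lemma~\ref{lem:strongnuminf} once for each $\rr\in I$ in turn, each time replacing $\vec{w}_\rr$ by $\truncate(\vec{w}_\rr,\bb_t(\vec{w}_\rr))$ and adjusting the threshold as prescribed there. Since each such step changes $\Ex_{\calE_e}[\,\cdot\mid b=0]$ and $\Ex_{\calE_e}[\,\cdot\mid b=1]$ by at most $1/k^2$ and there are at most $k$ steps, $\big|\Ex_{\calE_e}[h'\mid b]-\Ex_{\calE_e}[h\mid b]\big|\le 1/k$ for $b\in\{0,1\}$. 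Truncating a block to its top $t$ coordinates leaves $\bb_t$ unchanged, so $h'$ still satisfies the disjointness hypothesis; and each modified block now has support (hence critical index) of size at most $t$, so its regular tail $\vec{l}_\rr$ vanishes and $e$ remains $2\tau$-nice for $h'$. Hence $h'$ falls under the first case, giving $\big|\Ex_{\calE_e}[h'\mid b=0]-\Ex_{\calE_e}[h'\mid b=1]\big| = O(1/k)$, and the triangle inequality with the previous display finishes the proof.

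Essentially all of the analytic work has already been done in Lemmas~\ref{lem:commoninf2} and~\ref{lem:strongnuminf}, which we are free to assume, so this final step is largely bookkeeping. The one point requiring genuine care is the reduction from the case $I\ne\emptyset$ to the case $I=\emptyset$: one must check that replacing $\vec{w}_\rr$ by its top-$t$ truncation cannot create a new collision $\pi^{v_\rr,e}(C_\tau(\vec{w}'_\rr))\cap\pi^{v_j,e}(C_\tau(\vec{w}'_j))\neq\emptyset$ and cannot destroy the $2\tau$-niceness of $e$; both follow from the fact that such a truncation fixes the set $\bb_t$ and deletes exactly the small, regular tail of the coordinate block, but this is the one spot where the argument is not purely formal.
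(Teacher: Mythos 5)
Your proof is correct and follows the paper's own argument for this lemma essentially verbatim: the same reformulation of the agreement probability as $\tfrac12+\tfrac12\bigl(\Ex_{\calE_e}[h\mid b=1]-\Ex_{\calE_e}[h\mid b=0]\bigr)$, the same case split on $I=\{\rr:|C_\tau(\vec{w}_\rr)|\ge t\}$, with Lemma~\ref{lem:commoninf2} handling $I=\emptyset$ and at most $k$ applications of Lemma~\ref{lem:strongnuminf} (costing $1/k$ total) reducing the case $I\ne\emptyset$ to it. Your explicit verification that truncation preserves both the disjointness hypothesis and the $2\tau$-niceness (since $\bb_t$ is unchanged and the regular tail of a truncated block vanishes) is a detail the paper leaves implicit, but it is correct and does not constitute a different approach.
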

\begin{proof}
The proof is similar to the proof of Theorem \ref{thm:testsoundness}.
Define $I = \{\rr \mid C_{\tau}(\w_\rr) > t \}$. We divide the problem into
the following two cases.
\begin{enumerate}
        \item $I = \emptyset$; i.e., for all $i \in [k]$, $C_{\tau}(\w_i) \leq t$. Then for any $i \neq j \in [k]$,                  $\bb_{t}(\vec{w}_i) \cap
    \bb_{t}(\vec{w}_j) = \emptyset$ implies $C_{\tau}(\vec{w}_i) \cap
    C_{\tau}(\vec{w}_j) = \emptyset$. By Lemma \ref{lem:commoninf2}, we have
        \[
           \Big|  \Ex_{\calE_e}[{h}(\vec{y}) | b=0] -  \Ex_{\calE_e}[{h}(\vec{y}) | b=1] \Big| \leq
               O\Big(\frac{1}{k}\Big).
        \]
\item
$I\ne \emptyset$. Then for all $\rr\in I$, we set $\tilde{\vec{w}}_\rr
= \truncate(\vec{w}_\rr, \bb_t(\vec{w}_\rr))$ and define a new
halfspace $h'$ by replacing $\vec{w}_\rr$ with $\tilde{\vec{w}}_\rr$ in $h$. Since such replacements occur at most $k$ times and, by
Lemma \ref{lem:strongnuminf}, every replacement changes the output of the
halfspace on at most $\frac{1}{k^2}$ fraction of examples from $\calE_e$, we can bound the overall
change by $k\times \frac{1}{k^2} = \frac{1}{k}$. That  is
   \begin{eqnarray}\label{eqn:diff1}
            \Big| \Ex_{ \calE_{e,0}}[h'(\vec{y})] -
            \Ex_{ \calE_{e,0}}[{h}(\vec{y})] \Big| \leq
            {\frac{1}{k}}, & &          \Big|
            \Ex_{ \calE_{e,1}}[h'(\vec{y})] -
            \Ex_{\calE_{e,1}}[{h}(\vec{y})] \Big| \leq
            {\frac{1}{k}}.
    \end{eqnarray}
    For the halfspace $h'$ and for all $\rr \in [k]$, we have
    $|C_{\tau}(\tilde{\vec{w}}_\rr)| \leq t$, thus reducing to Case 1.
    Therefore
        \begin{equation}\label{eqn:diff2}
         \Big| \Ex_{\calE_{e,o}}[h'(\vec{y})] -
        \Ex_{\calE_{e,1}}[h'(\vec{y})] \Big| \leq
       O\Big(\frac{1}{k}\Big).
     \end{equation}

Combining (\ref{eqn:diff1}) and (\ref{eqn:diff2}), we get
\[
         \Big| \Ex_{\calE_{e,0}}[h(\vec{y})] -
      \Ex_{\calE_{e,1}}[h(\vec{y})] \Big| \leq
     O\Big(\frac{1}{k}\Big).
   \]
\end{enumerate}
In other words, the probability that halfspace
        $h(\vec{y})$ agrees with a random example from $\calE_e$ is
        at most $\frac{1}{2}+ O(\frac{1}{k})$.
\end{proof}
We first recall the soundness statement:
\begin{proposition}
        If $\cL$ is not a $2k^2 2^{-\gamma k}$-weakly satisfiable instance of smooth \klabelcover,
                then there is no halfspace that agrees with a random example from $\calE$ with probability more than                 $\frac{1}{2} + \frac{1}{\sqrt{k}}$.
\end{proposition}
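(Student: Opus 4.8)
The plan is to prove the contrapositive by a list‑decoding argument, mirroring the soundness analysis of the Unique‑Games reduction in Section~\ref{sec:hardness} but driven by the ``niceness'' machinery just developed. Suppose $h(\vec{y}) = \sgn(\sum_{v \in V}\iprod{\vec{w}_v,\vec{y}_v}-\theta)$ agrees with a random example from $\calE$ with probability more than $\frac12+\frac1{\sqrt k}$. Writing $p_e$ for the probability that $h$ agrees with a random example from $\calE_e$, we have $\Ex_e[p_e] > \frac12 + \frac1{\sqrt k}$, so by a Markov‑type averaging argument at least a $\frac1{2\sqrt k}$ fraction of the hyperedges $e$ satisfy $p_e \ge \frac12 + \frac1{2\sqrt k}$; call these \emph{good}. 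By Lemma~\ref{lem:vertexnice} all but an $O(1/k)$ fraction of hyperedges are $2\tau$‑nice, so for $k$ large enough at least a $\frac1{4\sqrt k}$ fraction of hyperedges are simultaneously good and $2\tau$‑nice.

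Next I would invoke Lemma~\ref{lem:mainsoundness} in contrapositive form: for every hyperedge $e=(v_1,\dots,v_k)$ that is $2\tau$‑nice and good (so $p_e \ge \frac12+\frac1{2\sqrt k} > \frac12 + O(1/k)$ once $k$ is large), there must exist $i\neq j\in[k]$ with $\pi^{v_i,e}(\bb_t(\vec{w}_{v_i})) \cap \pi^{v_j,e}(\bb_t(\vec{w}_{v_j})) \neq \emptyset$, where $t = O(k^{29})$ is the parameter from Lemma~\ref{lem:strongnuminf}. This is precisely the statement that list‑decoding succeeds on $e$: the sets $\bb_t(\vec{w}_v)$, each of size at most $t$, serve as the candidate label lists for the vertices.

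I would then define the randomized labeling $\Lambda$ that assigns to each vertex $v$ an independent, uniformly random label from $\bb_t(\vec{w}_v)$. For any good $2\tau$‑nice hyperedge $e$, pick a pair $(r_i,r_j)$ with $r_i\in\bb_t(\vec{w}_{v_i})$, $r_j\in\bb_t(\vec{w}_{v_j})$ and $\pi^{v_i,e}(r_i)=\pi^{v_j,e}(r_j)$ (such a pair exists by the previous step); the probability over $\Lambda$ that $\Lambda(v_i)=r_i$ and $\Lambda(v_j)=r_j$ is at least $1/t^2$, and in that event $\Lambda$ weakly satisfies $e$. Hence the expected fraction of hyperedges weakly satisfied by $\Lambda$ is at least $\frac1{4\sqrt k}\cdot\frac1{t^2} = \Omega(1/\poly(k))$, so some fixed labeling weakly satisfies at least that many hyperedges.

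Finally I would close by a parameter comparison. Since $t = O(k^{29})$ is polynomial in $k$, the bound $\Omega(1/\poly(k))$ is still only polynomially small, whereas the soundness gap $2k^2 2^{-\gamma k}$ of the smooth \klabelcover\ instance (Theorem~\ref{thm:sml} with $u=k$, $\gamma$ an absolute constant) is exponentially small in $k$; thus for all sufficiently large $k$ we have $\Omega(1/\poly(k)) > 2k^2 2^{-\gamma k}$, and recalling $k=\Theta(1/\eps^2)$ this holds once $\eps$ is a small enough constant. Therefore $\cL$ would be $2k^2 2^{-\gamma k}$‑weakly satisfiable, contradicting the hypothesis. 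The only real content is assembling the pieces in the correct order; the single point that genuinely needs care is verifying that the polynomial loss $1/t^2$ from list‑decoding does not overwhelm the exponentially small soundness of smooth Label Cover — which is exactly why it is essential that the critical‑index truncation parameter $t$ in Lemma~\ref{lem:strongnuminf} is $\poly(k)$ and, in particular, independent of $M$ and $N$.
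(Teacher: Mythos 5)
Your proposal is correct and follows essentially the same route as the paper's proof: the same averaging argument to isolate good hyperedges, discarding the $O(1/k)$ fraction that are not $2\tau$-nice, invoking Lemma~\ref{lem:mainsoundness} in contrapositive to get intersecting projected lists $\pi^{v_i,e}(\bb_t(\vec{w}_{v_i})) \cap \pi^{v_j,e}(\bb_t(\vec{w}_{v_j})) \neq \emptyset$, and the random labeling from $\bb_t(\vec{w}_v)$ succeeding with probability $1/t^2$, yielding $\Omega(1/\poly(k))$ weak satisfaction which exceeds $2k^2 2^{-\gamma k}$. The only difference is cosmetic: you spell out the final parameter comparison slightly more explicitly than the paper does.
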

\begin{proof}
        The proof is by contradiction. We can define the following
        labeling strategy: for each vertex $v$, uniformly randomly
        pick a label from $\bb_t(\w_{v})$. We know that the size of
        $\bb_t(\w_{v_i})$ is $t=O(k^{29})$.

Suppose there exists a halfspace
that agrees with a random example from $\calE$ with probability more than                 $\frac{1}{2} + \frac{1}{\sqrt{k}}$. Then by an averaging argument,
for at least $\frac{1}{2\sqrt{k}}$-fraction of the hyperedges $e$,
$h(\y)$ agrees with a random example from $\calE_e$ with probability
at least $\frac12 + \frac{1}{2\sqrt{k}}$. We refer to these edges as {\em good}.

Since there is at most $O(1/k)$-fraction of the hyperedges that are {\it not} $2\tau$-nice we know that at least
$\frac{1}{4\sqrt{k}}$-fraction of the hyperedges are $2\tau$-nice and {\em good}. By Lemma \ref{lem:mainsoundness}, for each $2\tau$-nice and {\em good} hyperedge $e$ there exist two vertices $v_i,v_j \in e$ such that $\pi^{v_i,e}(\bb_t(\vec{w}_{i}))$ and $\pi^{v_j,e}(\bb_t(\vec{w}_j))$ intersect. Then there is a $\frac{1}{t^2}$ probability that the labeling strategy we defined will weakly satisfy hyperedge $e$.

Overall this strategy is expected to weakly satisfy at least $\frac{1}{4\sqrt{k}}\frac{1}{t^2} =
\Omega(\frac{1}{k^{59}})$ fraction of the hyperedges.  This is a contradiction since $\cL$ is not $\frac{2k^2}{2^{\gamma k}}$-weakly satisfiable.
\end{proof}

\bibliographystyle{abbrv}
\bibliography{learning}

\appendix

\section*{\Large \textbf{Appendix}}
\section{Probabilistic Inequalities}
In the discussion below we will make use of the following well-known inequalities.
\begin{theorem} (Hoeffding's Inequality)\label{thm:hof}
                Let $x^{(1)},\ldots,x^{(n)}$ be independent real random variables such that $x^{(i)} \in [a^{(i)},b^{(i)}]$.  Then the sum of these variables $S = \sum_{i=1}^{n}x^{(i)} $ satisfies
                \[
                        \Pr[|S-\E[S]|\geq nt]\leq 2 e^{-\frac{n^2t^2}{\sum_{i=1}^{n}(b^{(i)}-a^{(i)})^2}} .
                \]
\end{theorem}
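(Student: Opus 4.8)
The plan is to prove this by the standard exponential‑moment (Chernoff) method. First I would reduce to a one‑sided tail and to centered variables: set $Y_i = x^{(i)} - \E[x^{(i)}]$, so that $\E[Y_i]=0$, each $Y_i$ ranges over an interval of length $\ell_i := b^{(i)}-a^{(i)}$ (which contains $0$, since $\E[Y_i]=0$), and $S-\E[S]=\sum_{i=1}^n Y_i$. It then suffices to bound $\Pr[\sum_i Y_i \ge nt]$: applying the same estimate to the variables $-Y_i$ controls the lower tail, and a union bound produces the factor $2$.

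The key ingredient is \emph{Hoeffding's lemma}: if $Y$ is mean zero and supported in an interval of length $\ell$, then $\E[e^{sY}] \le e^{s^2\ell^2/8}$ for every $s>0$. I would prove it by convexity: writing $Y\in[a,b]$ with $b-a=\ell$ and $a\le 0\le b$, the convexity of $y\mapsto e^{sy}$ gives $e^{sy}\le \tfrac{b-y}{b-a}e^{sa}+\tfrac{y-a}{b-a}e^{sb}$ on $[a,b]$; taking expectations and using $\E[Y]=0$ yields $\E[e^{sY}]\le (1-p)e^{sa}+pe^{sb}=e^{\varphi(s)}$ where $p=\tfrac{-a}{b-a}\in[0,1]$ and $\varphi(s)=sa+\log\!\big((1-p)+pe^{s\ell}\big)$. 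A direct computation shows $\varphi(0)=0$, $\varphi'(0)=0$, and $\varphi''(s)=\ell^2\,u(s)(1-u(s))\le \ell^2/4$ for the logistic quantity $u(s)=\tfrac{pe^{s\ell}}{(1-p)+pe^{s\ell}}\in(0,1)$ (using $x(1-x)\le 1/4$). Taylor's theorem with remainder then gives $\varphi(s)\le s^2\ell^2/8$.

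With the lemma in hand, for any $s>0$ Markov's inequality applied to $e^{s\sum_i Y_i}$ together with independence gives
\[
\Pr\Big[\sum_{i} Y_i \ge nt\Big]\ \le\ e^{-snt}\,\E\Big[e^{s\sum_i Y_i}\Big]\ =\ e^{-snt}\prod_{i=1}^n \E[e^{sY_i}]\ \le\ \exp\!\Big(-snt+\tfrac{s^2}{8}\sum_{i=1}^n \ell_i^2\Big).
\]
Choosing $s = 4nt/\sum_i \ell_i^2$ to minimize the exponent yields $\Pr[\sum_i Y_i\ge nt]\le \exp\!\big(-2n^2t^2/\sum_i \ell_i^2\big)$; combining with the symmetric bound for the lower tail gives $\Pr[|S-\E[S]|\ge nt]\le 2\exp\!\big(-2n^2t^2/\sum_i \ell_i^2\big)$, which is stronger than, hence implies, the stated inequality $2\exp\!\big(-n^2t^2/\sum_i \ell_i^2\big)$.

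The only non‑routine step is the moment‑generating‑function estimate of Hoeffding's lemma — concretely, the calculus argument bounding $\varphi''$; everything else is bookkeeping. One could even sidestep the sharp constant entirely, since a cruder bound $\E[e^{sY}]\le e^{s^2\ell^2/2}$ already suffices to recover the (weaker) inequality exactly as stated in the theorem.
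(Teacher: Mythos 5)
Your proof is correct, and in fact proves more than what is stated: the Chernoff/moment-generating-function route with Hoeffding's lemma gives the tail bound $2\exp\bigl(-2n^2t^2/\sum_i(b^{(i)}-a^{(i)})^2\bigr)$, with the sharp factor $2$ in the exponent, which dominates the stated bound. There is nothing to compare against on the paper's side: Theorem~\ref{thm:hof} is listed in the appendix among well-known probabilistic inequalities and is invoked without proof. The individual steps of your argument all check out --- the centering $Y_i = x^{(i)}-\E[x^{(i)}]$ keeps the interval length $\ell_i=b^{(i)}-a^{(i)}$ and places $0$ inside it, the convexity bound and the computation $\varphi(0)=\varphi'(0)=0$, $\varphi''(s)=\ell^2 u(s)(1-u(s))\le \ell^2/4$ correctly yield Hoeffding's lemma $\E[e^{sY}]\le e^{s^2\ell^2/8}$, and the optimization $s=4nt/\sum_i\ell_i^2$ gives the exponent $-2n^2t^2/\sum_i\ell_i^2$ (implicitly for $t>0$, which is the only nontrivial regime). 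One small remark worth noting: the sharper constant you obtain is not merely a bonus, since the paper's application of this theorem in the proof of Lemma~\ref{lem:spread} actually uses the bound with the factor $2$ in the exponent (the expression $2e^{-2n^2t^2/\sum_i (w^{(i)})^4}$), so your version is the one that literally supports that step, whereas the theorem as stated would force a slightly weaker (but still sufficient) estimate there.
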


\begin{theorem} (Berry-Esseen Theorem)\label{thm:berry}
        Let $x_1,x_2,\ldots,x_n$ be i.i.d. random unbiased $\{-1,1\}$ variables. Also assume that $\sum_{i=1}^{n}c_i^2 = 1$ and $\max_{i}\{ |c_i|\}\leq \alpha$. Let $g$ denote a unit Gaussian variable $N(0,1)$. Then for any $t\in \R$,
\[
        \left|\Pr\left[\sum c_{i} x_{i} \leq t \right] - \Pr[g\leq t]\right| \leq \alpha .
\]
\end{theorem}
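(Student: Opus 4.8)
The stated inequality is the classical Berry--Esseen theorem, specialized to a weighted sum of i.i.d.\ symmetric $\pm 1$ variables, so the cleanest route is to invoke the general theorem; since the hypotheses are so restrictive one can also give a short self-contained Fourier argument. I would begin with the one-line reduction to the standard form. Writing $S = \sum_i c_i x_i$, we have $\E[S] = 0$, $\E[S^2] = \sum_i c_i^2 = 1$, and the Lyapunov ratio $\rho := \sum_i \E\,|c_i x_i|^3 = \sum_i |c_i|^3 = \sum_i |c_i|\cdot c_i^2 \le \big(\max_i |c_i|\big)\sum_i c_i^2 = \alpha$. Hence the Lyapunov form of Berry--Esseen for sums of independent (not necessarily identically distributed) summands, $\sup_t\big|\Pr[S\le t] - \Pr[g\le t]\big| \le C\rho$, gives exactly the claim, since the best known absolute constant satisfies $C < 1$ (and in any case $\alpha = \max_i|c_i| \le \big(\sum_i c_i^2\big)^{1/2} = 1$, so the bound is never vacuous). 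Since the paper lists this as a ``well-known inequality,'' I would present the citation as the primary justification.

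For a self-contained argument I would use Esseen's smoothing lemma. Let $F$ be the CDF of $S$ and $\Phi$ the standard normal CDF, so $\hat F(\xi) = \prod_i \cos(c_i\xi)$ and $\hat\Phi(\xi) = e^{-\xi^2/2}$. Esseen's inequality gives, for every $T > 0$,
\[
\sup_t\big|F(t) - \Phi(t)\big| \;\le\; \frac{1}{\pi}\int_{-T}^{T}\left|\frac{\hat F(\xi) - \hat\Phi(\xi)}{\xi}\right|d\xi \;+\; \frac{C_0}{T}
\]
for an absolute constant $C_0$. The plan is to take $T = 1/\alpha$, so that $|c_i\xi| \le 1$ for all $i$ and all $|\xi| \le T$, and then use the two elementary estimates $|\cos u| \le e^{-u^2/2}$ and $|\cos u - e^{-u^2/2}| \le u^4$, valid for $|u|\le 1$ by Taylor expansion (the quadratic terms of $\cos u$ and $e^{-u^2/2}$ coincide and the cubic terms vanish --- this is exactly where the symmetry of the $x_i$ is used). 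A telescoping of the two products then yields, for $|\xi| \le T$, the bound $|\hat F(\xi) - \hat\Phi(\xi)| \le e^{1/2}\big(\sum_i c_i^4\big)\,\xi^4\, e^{-\xi^2/2} \le e^{1/2}\,\alpha^2\,\xi^4\, e^{-\xi^2/2}$, using $\sum_i c_i^4 \le \big(\max_i c_i^2\big)\sum_i c_i^2 = \alpha^2$. Consequently the integral above is $O(\alpha^2)$ (the factor $\xi^4 e^{-\xi^2/2}/|\xi|$ is absolutely integrable), while the smoothing term is $C_0\alpha$; adding up gives $\sup_t|F(t) - \Phi(t)| = O(\alpha)$, which is the asserted statement up to the value of the constant (the literal constant $1$ is supplied by the sharp classical bound above).

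There is no real obstacle here --- it is a textbook result --- but the one point worth flagging is the choice of the truncation level $T \asymp 1/\alpha$. The characteristic function $\hat F$ of the lattice-supported variable $S$ does \emph{not} decay as $\xi \to \infty$ (for $c_i \equiv 1/\sqrt n$ one has $\hat F(\pi\sqrt n) = \pm 1$), so one cannot integrate to infinity, and the residual smoothing error $O(1/T) = O(\alpha)$ is precisely what pins the rate at $\alpha$; this is tight, as the binomial case already shows. For the same reason a naive Lindeberg/replacement argument against a mollified indicator would only give a rate polynomial in $\alpha$ with a worse exponent, so the Fourier route (or a direct citation) is the one to take.
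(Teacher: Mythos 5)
Your proposal is correct, and its primary route matches the paper's treatment: the paper states this Berry--Esseen bound in its appendix of ``well-known inequalities'' without proof, exactly as you do by reducing to the Lyapunov form (third-moment sum $\sum_i|c_i|^3\le\alpha$, absolute constant $<1$). Your supplementary Esseen-smoothing sketch, including the honest caveat that it yields $O(\alpha)$ rather than the literal constant $1$ and the remark about truncating at $T\asymp 1/\alpha$ because of the lattice character of $S$, is also sound, but goes beyond anything the paper attempts.
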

\begin{theorem}
 \label{thm:chebyshev}
 (Chebyshev's Inequality) Let $X$ be a random variable with expected value $u$ and variance $\sigma^2$. Then for any real number $t>0$,
        \[
                \Pr[|X-\mu| \geq t \cdot \sigma] \leq 1/t^2.
        \]
\end{theorem}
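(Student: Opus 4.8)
The plan is to derive Chebyshev's inequality from Markov's inequality, applied to the nonnegative random variable $(X-\mu)^2$. First I would record Markov's inequality: if $Y\ge 0$ is a random variable and $a>0$, then $a\cdot\Pr[Y\ge a]\le \Ex[Y\cdot\mathbf{1}[Y\ge a]]\le \Ex[Y]$, hence $\Pr[Y\ge a]\le \Ex[Y]/a$. If $\Var(X)=\sigma^2=\infty$ the claimed bound is vacuous, and if $\sigma=0$ then $X=\mu$ almost surely, so the probability on the left-hand side is $0$; thus we may assume $0<\sigma<\infty$.

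Next I would set $Y=(X-\mu)^2$, which is nonnegative with $\Ex[Y]=\Var(X)=\sigma^2$, and apply Markov's inequality with the threshold $a=t^2\sigma^2>0$. This yields $\Pr[(X-\mu)^2\ge t^2\sigma^2]\le \sigma^2/(t^2\sigma^2)=1/t^2$. The one elementary observation left is that the event $\{(X-\mu)^2\ge t^2\sigma^2\}$ equals the event $\{|X-\mu|\ge t\sigma\}$, since $t,\sigma>0$ and $s\mapsto s^2$ is strictly increasing on $[0,\infty)$. Rewriting the event in its second form gives exactly $\Pr[|X-\mu|\ge t\sigma]\le 1/t^2$, as desired.

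There is no genuine obstacle here, since this is a textbook consequence of Markov's inequality; the only things deserving a line of care are the degenerate cases $\sigma\in\{0,\infty\}$ and the identification of the two events. If a fully self-contained argument is preferred, one can inline Markov's inequality: $\sigma^2=\Ex[(X-\mu)^2]\ge \Ex\big[(X-\mu)^2\,\mathbf{1}[|X-\mu|\ge t\sigma]\big]\ge t^2\sigma^2\,\Pr[|X-\mu|\ge t\sigma]$, and dividing through by $t^2\sigma^2$ gives the bound. This is the same computation, just without naming the intermediate lemma.
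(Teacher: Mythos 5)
Your proof is correct: the paper states Chebyshev's inequality in the appendix as a standard fact without proof (it is only used as a black box in Claim~\ref{claim:chebyshev}), and your derivation via Markov's inequality applied to $(X-\mu)^2$ is the standard argument and fully adequate. One small slip in your degenerate-case remark: if $\sigma=0$ then $X=\mu$ almost surely, so the event $\{|X-\mu|\geq t\sigma\}=\{|X-\mu|\geq 0\}$ has probability $1$, not $0$; as literally stated the inequality then fails for $t>1$, so the right fix is simply to note that the statement presupposes $0<\sigma<\infty$ (or to use a strict inequality in the event), after which your main argument, which divides by $t^2\sigma^2$, goes through verbatim.
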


\section{Proof of Lemma \ref{lem:spread} }\label{sec:lem23}

Recall that each $y^{(i)}$ is generated by the following manner:
\begin{equation}\label{eqn:nd}
        y^{(i)} = \begin{cases}  x^{(i)}&  \text{ with
                          probability }  1- \gamma \\
                          \text{random bit} & \text{ with
                          probability } \gamma.
          \end{cases}
        \end{equation}
Let us define a random  vector  $\z\in \bits^n$ based on $\y$. For $\y$ generated, if $y^{(i)}$ is generated as a copy of $x^{(i)}$ in (\ref{eqn:nd}),
then $\z^{(i)} =0$; if $y^{(i)}$ is generated as a  random bit in (\ref{eqn:nd}),
then $z^{(i)}=1$. Let us write $S = \sum_{i=1}^{n}w^{(i)} y^{(i)}$.
Our proof is based on two claims.
\begin{claim}\label{lem:bigvar} For a $\tau$-regular vector $\vec{w}$, $\Pr[\sum_{i=1}^n |w^{(i)}|^2 z^{(i)}\geq \gamma/2] \geq 1-2e^{-\frac{\gamma^2}{2\tau^2}}.$
\end{claim}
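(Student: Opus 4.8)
\textbf{Proof proposal for Claim~\ref{lem:bigvar}.}
The plan is to view $\sum_{i=1}^n |w^{(i)}|^2 z^{(i)}$ as a sum of bounded independent random variables and to apply a Hoeffding-type concentration inequality. First I would record the two structural facts we need about the $z^{(i)}$: they are mutually independent (the replacement in the definition of $\y$ is performed independently in each coordinate), and each $z^{(i)}$ is a Bernoulli variable with $\Pr[z^{(i)} = 1] = \gamma$. Writing $S = \sum_{i=1}^n |w^{(i)}|^2 z^{(i)}$, linearity of expectation together with the normalization $\sum_i |w^{(i)}|^2 = 1$ gives $\E[S] = \gamma$, so the claim is precisely the statement that $S$ does not fall below half of its mean except with probability at most $2e^{-\gamma^2/(2\tau^2)}$.

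Next I would extract the role of $\tau$-regularity. Since $\vec w$ is $\tau$-regular and $\|\vec w\|_2 = 1$, every coordinate obeys $|w^{(i)}| \le \tau\sigma_1 = \tau$, and therefore $\sum_{i=1}^n \bigl(|w^{(i)}|^2\bigr)^2 = \sum_i |w^{(i)}|^4 \le \tau^2 \sum_i |w^{(i)}|^2 = \tau^2$. Each summand $|w^{(i)}|^2 z^{(i)}$ lies in the interval $[0,\,|w^{(i)}|^2]$, so the quantity $\sum_i |w^{(i)}|^4$ is exactly the sum of squared interval lengths that enters the denominator of the exponent in Hoeffding's inequality, and it is at most $\tau^2$.

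Finally I would invoke Hoeffding's inequality (Theorem~\ref{thm:hof}, in its one-sided lower-tail form, which carries the usual factor $2$ in the exponent) with deviation parameter $\gamma/2$:
\[
  \Pr\bigl[S \le \gamma/2\bigr] = \Pr\bigl[\E[S] - S \ge \gamma/2\bigr] \le \exp\!\left(-\frac{2(\gamma/2)^2}{\sum_i |w^{(i)}|^4}\right) \le \exp\!\left(-\frac{\gamma^2}{2\tau^2}\right),
\]
so that $\Pr[S \ge \gamma/2] \ge 1 - e^{-\gamma^2/(2\tau^2)} \ge 1 - 2e^{-\gamma^2/(2\tau^2)}$, which is the claim. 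I do not expect a genuine obstacle here: the only point that requires care is matching the parameters of the concentration bound, namely using the per-coordinate ranges $[0,\,|w^{(i)}|^2]$ and the regularity bound $\sum_i |w^{(i)}|^4 \le \tau^2$ (rather than the crude $\sum_i |w^{(i)}|^4 \le 1$), since it is exactly this that makes the tail meaningful when $\tau$ is small.
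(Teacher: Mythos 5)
Your proposal is correct and is essentially the paper's own argument: both apply Hoeffding's inequality to the summands $|w^{(i)}|^2 z^{(i)} \in [0,|w^{(i)}|^2]$ with mean $\E[S]=\gamma$ and the regularity bound $\sum_i |w^{(i)}|^4 \le \tau^2$, with deviation $\gamma/2$. The only cosmetic difference is that you use the one-sided tail (and then relax $1-e^{-\gamma^2/(2\tau^2)}$ to $1-2e^{-\gamma^2/(2\tau^2)}$), whereas the paper invokes the two-sided bound directly.
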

\begin{claim}\label{lem:smallball} For a $\tau$-regular vector $\vec{w}$, given any $a'< b'\in \R$ and any
        fixing of $z^{(1)},z^{(2)},\ldots,z^{(n)}$, if $\sum_{i=1}^n (w^{(i)})^2 z^{(i)} = \sigma^2 > 0$, then
$\Pr[S \in [a',b']] \leq         \frac{2|b'-a'|}{\sigma}+\frac{2\tau}{\sigma}.$
\end{claim}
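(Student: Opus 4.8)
\textbf{Proof proposal for Claim~\ref{lem:smallball}.}

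The plan is to condition on enough randomness so that $S=\sum_{i}w^{(i)}y^{(i)}$ becomes a fixed constant plus a weighted sum of independent unbiased $\{-1,1\}$ variables, and then invoke the Berry--Esseen theorem (Theorem~\ref{thm:berry}). Since the fixing of $z^{(1)},\dots,z^{(n)}$ is given, set $Z_0=\{i:z^{(i)}=0\}$ and $Z_1=\{i:z^{(i)}=1\}$. Because the coins deciding which coordinates get re-randomized are independent of $\vec x$, conditioning on $\vec z$ leaves $\vec x\sim\cD$; on $Z_0$ we have $y^{(i)}=x^{(i)}$, and on $Z_1$ we have $y^{(i)}=u^{(i)}$ for i.i.d.\ uniform bits $u^{(i)}\in\{0,1\}$ that are independent of $\vec x$. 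Thus $S=\sum_{i\in Z_0}w^{(i)}x^{(i)}+\sum_{i\in Z_1}w^{(i)}u^{(i)}$. Now I would further condition on the values $(x^{(i)})_{i\in Z_0}$; call the (constant) value of the first sum $c$. Writing $u^{(i)}=\tfrac12(1+\xi^{(i)})$ with $\xi^{(i)}\in\{-1,1\}$ uniform and independent, the event $S\in[a',b']$ (under this conditioning) is exactly the event $\sum_{i\in Z_1}w^{(i)}\xi^{(i)}\in I$ for an interval $I$ of length $2|b'-a'|$, obtained by an affine shift and scaling by $2$.

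Next, set $\sigma^2=\sum_{i\in Z_1}(w^{(i)})^2>0$ and $c_i=w^{(i)}/\sigma$ for $i\in Z_1$, so that $\sum_{i\in Z_1}c_i^2=1$ and, since $\vec w$ is $\tau$-regular (with $\|\vec w\|_2=1$ in the setting of Lemma~\ref{lem:spread}), $\max_{i\in Z_1}|c_i|\le\tau/\sigma$. Writing $I'=\{x/\sigma:x\in I\}$, which has length $2|b'-a'|/\sigma$, Theorem~\ref{thm:berry} gives that the CDF of $\sum_{i\in Z_1}c_i\xi^{(i)}$ is within $\tau/\sigma$ of that of a standard Gaussian $g$, so applying this at both endpoints of $I'$ yields
\[
\Pr\!\left[\sum_{i\in Z_1}w^{(i)}\xi^{(i)}\in I\right]=\Pr\!\left[\sum_{i\in Z_1}c_i\xi^{(i)}\in I'\right]\le \Pr[g\in I']+\frac{2\tau}{\sigma}\le \frac{2|b'-a'|}{\sigma}+\frac{2\tau}{\sigma},
\]
using that the standard Gaussian density is bounded by $1/\sqrt{2\pi}\le 1$, so $\Pr[g\in I']\le |I'|$. (If $\tau/\sigma\ge\tfrac12$ the right-hand side is $\ge 1$ and the bound is trivial, so there is nothing to check in that case.)

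Finally, the displayed bound holds for \emph{every} fixing of $(x^{(i)})_{i\in Z_0}$, so averaging over $\vec x\sim\cD$ (with $\vec z$ still fixed) yields $\Pr[S\in[a',b']]\le \tfrac{2|b'-a'|}{\sigma}+\tfrac{2\tau}{\sigma}$, as claimed. I expect no real obstacle here; the one point requiring care is that, after conditioning on $\vec z$ and on the $Z_0$-coordinates, the leftover randomness is a genuine independent Rademacher sum with $\ell_2$ mass exactly $\sigma^2$ and with each coefficient small relative to $\sigma$ — and it is precisely $\tau$-regularity that supplies the bound $\max_{i\in Z_1}|w^{(i)}|\le\tau$ needed to make the Berry--Esseen error term $\tau/\sigma$.
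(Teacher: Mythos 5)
Your proposal is correct and follows essentially the same route as the paper: split $S$ into the part over the re-randomized coordinates ($z^{(i)}=1$) and the part copied from $\vec x$, treat the latter as a constant shift (you do this by conditioning, the paper by independence, which is the same argument), convert the uniform $\{0,1\}$ bits to Rademacher variables, and apply Berry--Esseen with the $\tau$-regularity bound $\max_i|w^{(i)}|\le\tau$ plus the fact that a standard Gaussian lands in an interval of length $\lambda$ with probability at most $\lambda$. No gaps.
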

Given the above two claims are correct, define event $V$ to be $\{\sum_{i=1}^n(w^{(i)})^2 z^{(i)} \geq \frac{\gamma}{2}\}$ and use $\mathbf{1}_{[a,b]}(x):\R\to \{0,1\}$ to denote the indicator function of whether $x$ falls into interval $[a,b]$.
\[
\begin{split}\Pr[S\in [a,b]] = \Ex[\mathbf{1}_{[a,b]}(S)]  =\Pr[V] \Ex[\mathbf{1}_{[a, b]} (S)\mid V] + \Pr[\neg{V}] \Ex[\mathbf{1}_{[a, b]} (S) \mid \neg{V}]
\end{split}
\]
By Claim \ref{lem:bigvar},  \[\Pr[\neg{V}] \Ex[\mathbf{1}_{[a, b]} (S) \ |\  \neg{V}] \leq \Pr[\neg V]\leq  2e^{-\frac{\gamma^2}{2\tau^2}}.\] By Claim \ref{lem:smallball},
\[\Pr[V] \Ex[\mathbf{1}_{[a, b]} (S)\mid V] \leq  \frac{4(b-a)}{\sqrt{\gamma}} + \frac{4\tau}{\sqrt{\gamma}}.\]

Overall,
\[
        \Pr\left[S\in [a,b]\right] \leq \frac{4(b-a)}{\sqrt{\gamma}} +\frac{4\tau}{\sqrt{\gamma}} + 2e^{-\frac{\gamma^2}{2\tau^2}}.
\]

It remains to verify Claim \ref{lem:bigvar} and Claim \ref{lem:smallball}.

To prove Claim \ref{lem:bigvar}, we need to apply the Hoeffding's inequality (see Theorem \ref{thm:hof}).

Notice that  $(w^{(i)})^2 z^{(i)} \in [0,(w^{(i)})^2]$ and applying Hoeffding's Inequality, we know

        \[
                \Pr\left[\left|\sum_{i=1}^n(w^{(i)})^2 z^{(i)} -  \Ex\left[\sum_{i=1}^n(w^{(i)})^2 z^{(i)} \right]\right|\geq   nt \right] \leq 2e^{\frac{-2n^2t^2}{\sum_{i=1}^{n}(w^{(i)})^4}}.
        \]

We know $\Ex[\sum_{i=1}^n(w^{(i)})^2 z^{(i)} ] = \gamma  $ and  $\sum_{i=1}^{n}((w^{(i)})^2)^2 \leq \max_{i}\left\{(w^{(i)})^2\right\} \sum_{i=1}^n(w^{(i)})^2 \leq \tau^2$.  If we take $nt = {\gamma/2}$, we have

\[
        \Pr\left[\left|\sum_{i=1}^n(w^{(i)})^2 z^{(i)} - \gamma\right| \geq  \frac{\gamma}{2}\right] \leq 2 e^{-\frac{\gamma^2}{2\tau^2}}.
\]

Therefore, with probability at least $1-2 e^{-\frac{\gamma^2}{2\tau^2}}$, $\sum_{i=1}^n(w^{(i)})^2 z^{(i)} \geq \frac{\gamma}{2}$.

To prove  Claim \ref{lem:smallball}, we need use Berry-Esseen Theorem (See Theorem \ref{thm:berry}).
Let us split $S$ into two parts: $S' = \sum_{z_i=1} w_i y_i $ and $S'' = \sum_{z_i=0} w_i y_i $. Since $S = S'+S''$ and $S'$ is independent of $S''$, it suffices to  show that  $\Pr\left[S' \in [a',b']\right] \leq      \frac{2|b'-a'|}{\sqrt{\sigma}}+\frac{2\tau}{\sigma} $ for any $a',b'\in \R$.
Define $y'^{(i)} = 2y^{(i)}-1$ and note that $y'^{(i)}$ a $\{-1,1\}$ variable. By rewriting $S'$ using this definition, we have $$S' = \sum_{z^{(i)}=1} w^{(i)} y^{(i)} = \sum_{z^{(i)}=1} w^{(i)}\frac{1+y'^{(i)}}{2}.$$

Then \begin{equation}\label{eqn:br}\Pr\left[S' \in [a',b']\right] = \Pr\left[\sum_{z^{(i)} = 1} w^{(i)} y'^{(i)} \in [a'',b'']\right],
\end{equation}  where $a'' = 2a'-\sum_{z^{(i)} = 1} w^{(i)}$ and $b'' =  2b'-{\sum_{z^{(i)}=1} w^{(i)}} $.
We can further rewrite the above term as
 \begin{multline*}
\Pr\left[\sum_{z^{(i)}=1} w^{(i)}y'^{(i)}\leq b''\right] - \Pr\left[\sum_{z^{(i)}=1} w^{(i)}y'^{(i)}\leq a''\right]\\=
\Pr\left[\sum_{z^{(i)} = 1} \frac{w^{(i)} y'^{(i)}}{\sqrt{\sum_{z^{(i)}=1}(w^{(i)})^2}} \leq \frac{b''}{{\sqrt{\sum_{z^{(i)}=1}(w^{(i)})^2}}}\right] -\Pr\left[\sum_{z^{(i)} = 1} \frac{w^{(i)} y'^{(i)}}{ {\sqrt{\sum_{z^{(i)}=1}(w^{(i)})^2}}}\leq \frac{a''}{     {\sqrt{\sum_{z^{(i)}=1}(w^{(i)})^2}}}\right].\end{multline*}

We can now apply Berry-Esseen's theorem. Notice that for all the $i$ such that $z^{(i)}=1$, $y'^{(i)}$ is distributed as an independent unbiased random $\{-1,1\}$ variable. Also $\max_{z^{(i)} =1} \frac{|w^{(i)}|}{\sqrt{\sum_{z^{(i)}=1} (w^{(i)})^2}}\leq \frac{\tau}{\sqrt{\sum_{z^{(i)}=1} (w^{(i)})^2}}.$

By Berry-Esseen's theorem, we know that expression (\ref{eqn:br}) is bounded by $$\Pr\left[N(0,1) \leq \frac{b''}{{\sqrt{\sum_{z^{(i)}=1}(w^{(i)})^2}}}\right] -\Pr\left[N(0,1)\leq \frac{a''}{         {\sqrt{\sum_{z^{(i)}=1}(w^{(i)})^2}}}\right] +\frac{2\tau}{\sqrt{\sum_{z^{(i)}=1}(w^{(i)})^2}}.$$
Using the fact that a unit Gaussian variable falls in any interval of length $\lambda$ with probability at most $\lambda$ and noticing that  $b''-a'' = 2(b'-a')$, we can bound the above quantity by
\[
\frac{2|b'-a'|}{\sqrt{\sum_{z^{(i)}=1}(w^{(i)})^2}}+\frac{2\tau}{\sqrt{\sum_{z^{(i)}=1}(w^{(i)})^2}} = \frac{2|b-a|}{\sigma} + \frac{2\tau}{\sigma}.
\]

\section{Proof of Invariance Principle (Th.~\ref{thm:invariance})}\label{sec:invar}
We restate our version of the invariance principle here for convenience.
\paragraph{Theorem \ref{thm:invariance} restated}{ {(Invariance Principle)}
        Let $\cA = \{\bvec{A}{1},\ldots,\bvec{A}{R}\},
        \cB = \{\bvec{B}{1},\ldots,\bvec{B}{R}\}$ be families of ensembles of random
        variables with $\bvec{A}{i} =
        \{a^{(i)}_1,\ldots,a^{(i)}_{k_i}\}$ and $\bvec{B}{i} =
        \{b^{(i)}_1,\ldots,b^{(i)}_{k_i}\}$, satisfying the following
        properties:
        \begin{itemize}\itemsep=0ex
                \item For each $i \in [R]$, the random variables in
                        ensembles $(\bvec{A}{i},\bvec{B}{i})$ have
                        matching moments up to degree $3$.  Further
                        all the random variables in $\cA$ and $\cB$ are
                        bounded by $1$.
                \item The ensembles $\bvec{A}{i}$ are all independent
                        of each other, similarly the ensembles
                        $\bvec{B}{i}$ are independent of each other.
        \end{itemize}
        Given a set of vectors $\vec{l} = \{\bvec{l}{1},\ldots,\bvec{l}{R}\}
        (\bvec{l}{i} \in \R^{k_i})$, define the linear function
        $\vec{l} :
        \R^{k_1} \times \cdots \times \R^{k_R} \to \R$ as
        $$\vec{l}(\vec{x}) = \sum_{i\in[R]} \iprod{
                \bvec{l}{i}, \bvec{x}{i}} $$
        Then for a $K$-bounded function $\Psi : \R
        \to \R$ we have
        \begin{equation}\label{eqn:smf1} \left|  \Ex_{\cA}\left[\Psi\Big(\vec{l}(\cA) -\theta \Big)\right] -
                \Ex_{\cB}\left[\Psi\Big( \vec{l}(\cB)
                -\theta \Big)\right] \right| \leq K \sum_{i \in [R]}
                \|\bvec{l}{i}\|_1^4 .\end{equation}
        for all $\theta > 0$. Further, define the spread function
        $c(\alpha)$ corresponding to the ensembles $\cA,\cB$ and the linear function
        $\vec{l}$ as follows,
        \begin{align*}
        (\textbf{Spread Function: }&)  \text{For  } 1/2> \alpha >0
        \text{, let  }\\
        c(\alpha)  = \max\big(& \sup_{\theta} \Pr_{\cA}  \Big[ \vec{l}(\cA) \in [\theta - \alpha,
        \theta+\alpha] \Big],  \quad \sup_{\theta} \Pr_{\cB} \Big[
        \vec{l}(\cB) \in [\theta - \alpha,
        \theta+\alpha] \Big]\big)
        \end{align*}
        then for all $\tilde{\theta}$,
        \begin{eqnarray}\label{eqn:sgn1}
                \Big|
                \Ex_{\cA}\left[\sgn\left(\vec{l}(\cA)-\tilde{\theta}\right)\right] -
                & \Ex_{\cB}\left[\sgn\left( \vec{l}(\cB)
                -\tilde{\theta}\right)\right]
                  \Bigg|  \leq
                   O\Big(\frac{1}{\alpha^4}\Big) \sum_{i \in [R]}
                \|\bvec{l}{i}\|_1^4 + 2c(\alpha  ).
        \end{eqnarray}
\begin{proof}
Let us prove equation (\ref{eqn:smf1}) first. Let $\calX_{i} = \{\bvec{B}{1},\ldots,\bvec{B}{i-1},\bvec{B}{i}, \bvec{A}{i+1},\ldots,\bvec{A}{R}\}$.

We know that
\begin{align*}
        \Ex_{\cA}[\Psi(\vec{l}(\cA) -\theta )] - \Ex_{\cB}[\Psi(\vec{l}(\cB) -\theta )]
        & =  \Ex_{\calX_{0}}[\Psi(\vec{l}(\calX_{0}) -\theta )]  - \Ex_{\calX_{R}}[\Psi(\vec{l}(\calX_{R}) -\theta )]
        \\& = \sum_{i=1}^{R}  \Ex_{\calX_{i-1}}[\Psi(\vec{l}(\calX_{i-1}) -\theta )] -    \Ex_{\calX_i}[\Psi(\vec{l}(\calX_i) -\theta )].
\end{align*}
Therefore, it suffices to prove  \begin{equation} \label{eqn:suf}\big|\Ex_{\calX_{i-1}}[\Psi(\vec{l}(\calX_{i-1}) -\theta )] - \Ex_{\calX_i}[\Psi(\vec{l}(\calX_i) -\theta )] \big|\leq K\|\bvec{l}{i}\|_1^4 . \end{equation}

Let $\calY_{i} = \{\bvec{B}{1},\ldots,\bvec{B}{i-1}, \bvec{A}{i+1},\ldots,\bvec{A}{R}\}$ and we have $\calX_{i} = \{\calY_{i},\bvec{B}{i}\}$ and $\calX_{i-1}=\{\calY_{i},\bvec{A}{i}\}$.
Then
\begin{multline}\label{eqn:bsum}
\Ex_{\calX_{i-1}}[\Psi(\vec{l}(\calX_{i-1}) -\theta )] -  \Ex_{\calX_{i}}[\Psi(\vec{l}(\calX_{i}) -\theta )]
=\Ex_{\calY_{i}}\left[\Ex_{\bvec{A}{i}}[\Psi(\vec{l}(\calX_{i-1}) -\theta )] -  \Ex_{\bvec{B}{i}}[\Psi(\vec{l}(\calX_{i}) -\theta )]\right].
\end{multline}

Notice that
\[
\vec{l}(\calX_{i-1})-\theta = \iprod{ \bvec{l}{i}, \bvec{A}{i}} + \sum_{1\leq j\leq  i-1} \iprod{ \bvec{l}{j}, \bvec{B}{j}}  +  \sum_{i+1\leq j\leq  R} \iprod{ \bvec{l}{j}, \bvec{A}{j}} -\theta
\]
and
\[
\vec{l}(\calX_{i})-\theta = \iprod{ \bvec{l}{i}, \bvec{B}{i}} + \sum_{1\leq j\leq  i-1} \iprod{ \bvec{l}{j}, \bvec{B}{j}}  +  \sum_{i+1\leq j\leq  R} \iprod{ \bvec{l}{j}, \bvec{A}{j}} -\theta.
\]

Take $\theta' =  \sum_{1\leq j\leq  i-1} \iprod{ \bvec{l}{j}, \bvec{B}{j}} +  \sum_{i+1\leq j\leq  R} \iprod{ \bvec{l}{j}, \bvec{A}{j}}  -\theta$,
We can further rewrite equation (\ref{eqn:bsum}) as
\begin{equation}\label{eqn:rep}
\Ex_{\calY_{i}}\big[\Ex_{\bvec{A}{i}}[\Psi(\iprod{ \bvec{l}{i}, \bvec{A}{i}} +\theta' )] -  \Ex_{\bvec{B}{i}}[\Psi(\iprod{ \bvec{l}{i}, \bvec{B}{i}} +\theta')]\big].
\end{equation}

Using the Taylor expansion of $\Psi$, we have that the inner expectation of equation (\ref{eqn:rep}) is equal to
\begin{align}\nonumber
         \big|\Ex_{\bvec{A}{i}}[\Psi(\theta') + \Psi'(\theta')  \iprod{ \bvec{l}{i}, \bvec{A}{i}} +   \frac{\Psi''(\theta')}{2} (\iprod{ \bvec{l}{i}, \bvec{A}{i}})^2
        + \frac{\Psi'''(\theta')}{6} (\iprod{ \bvec{l}{i}, \bvec{A}{i}})^3 + \frac{\Psi''''(\delta_1)}{24}  (\iprod{ \bvec{l}{i}, \bvec{A}{i}})^4] \\-
        \Ex_{\bvec{B}{i}}[\Psi(\theta') + \Psi'(\theta')  \iprod{ \bvec{l}{i}, \bvec{B}{i}} +   \frac{\Psi''(\theta')}{2} (\iprod{ \bvec{l}{i}, \bvec{B}{i}})^2
        + \frac{\Psi'''(\theta')}{6} (\iprod{ \bvec{l}{i}, \bvec{B}{i}})^3 + \frac{\Psi''''(\delta_2)}{24}  (\iprod{ \bvec{l}{i}, \bvec{B}{i}})^4] \big|.
        \label{eqn:inner}
\end{align}
for some $\delta_1,\delta_2 \in \R$.

Using the fact that $\bvec{A}{i}$ and $\bvec{B}{i}$ have matching moments up to degree $3$, we can upper bound equation (\ref{eqn:inner}) by
\[
        \left|\Ex_{\bvec{A}{i}}[\frac{\Psi''''(\delta_1)}{24}  (\iprod{ \bvec{l}{i}, \bvec{A}{i}})^4] -      \Ex_{\bvec{B}{i}}[\frac{\Psi''''(\delta_2)}{24}  (\iprod{ \bvec{l}{i}, \bvec{B}{i}})^4] \right|\leq \frac{K}{12} |\bvec{l}{i}|_1^4.\]
In the last inequality, we use the fact that $\Psi$ is $K$-bounded and $\iprod{ \bvec{l}{i}, \bvec{A}{i}} \leq \|\bvec{l}{i}\|_1, \iprod{ \bvec{l}{i}, \bvec{B}{i}} \leq \|\bvec{l}{i}\|_1$ since all random variables in $\cA,\cB$ are bounded by $1$.

Overall, we bound the inner expectation of equation (\ref{eqn:rep}) by  $\frac{K}{12} \|\bvec{l}{i}\|_1^4$. This implies equation (\ref{eqn:rep}) and therefore equation (\ref{eqn:suf}) is bounded by
 $\frac{K}{12} \|\bvec{l}{i}\|_1^4$, establishing equation (\ref{eqn:smf1}).

To prove equation (\ref{eqn:sgn1}), we need to use the following lemma.
\begin{lemma} (\cite{MosselOO05}, Lemma 3.21)
                There exists an absolute constant $C$ such that $\forall 0<\lambda<\frac12$, there exists  $\frac{C}{\lambda^4}$-bounded function $\Phi_{\lambda}:\R\to [0,1]$ which approximates the $\sgn(x)$ function in the following sense: $\Phi_{\lambda}(t) = 1$ for all $t>\lambda$; $\Phi_\lambda(t) = 0$ for  $t<-\lambda$.
\end{lemma}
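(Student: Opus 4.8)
The statement is the standard fact that the $\{0,1\}$‑valued Heaviside step function $\sgn$ can be smoothed out over a window of width $\Theta(\lambda)$ at a ``cost'' of $\lambda^{-4}$ in the fourth derivative. The plan is to take $\Phi_\lambda$ to be a \emph{mollification} of $\sgn$: convolve the Heaviside function with a fixed smooth bump rescaled to have support in $[-\lambda,\lambda]$. Concretely, I would fix once and for all a $C^\infty$ function $\psi_0\from\R\to\R$ with $\psi_0\ge 0$, $\mathrm{supp}(\psi_0)\subseteq[-1,1]$, and $\int_\R\psi_0 = 1$ (for instance the usual bump $\psi_0(t)=c\exp(-1/(1-t^2))$ on $(-1,1)$, extended by $0$, with $c$ normalizing the mass), and set the absolute constant $C\defeq\normi{\psi_0'''}<\infty$.

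\textbf{Steps in order.} First, rescale: define $\psi_\lambda(t)\defeq \tfrac1\lambda\,\psi_0(t/\lambda)$, so that $\psi_\lambda\ge 0$, $\mathrm{supp}(\psi_\lambda)\subseteq[-\lambda,\lambda]$, and $\int_\R\psi_\lambda=1$. Second, define $\Phi_\lambda(t)\defeq\int_{-\infty}^{t}\psi_\lambda(s)\,ds$; since $\psi_\lambda$ is smooth with compact support, $\Phi_\lambda$ is $C^\infty$ on all of $\R$, so in particular its fourth derivative exists everywhere. Third, check the three qualitative properties: $\Phi_\lambda$ is a partial integral of a nonnegative density of total mass $1$, hence $\Phi_\lambda(t)\in[0,1]$ for all $t$; for $t>\lambda$ the whole support of $\psi_\lambda$ lies to the left of $t$, so $\Phi_\lambda(t)=\int_\R\psi_\lambda=1$; and for $t<-\lambda$ the support lies to the right, so $\Phi_\lambda(t)=0$. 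Fourth, bound the fourth derivative: differentiating the integral, $\Phi_\lambda' = \psi_\lambda$, hence $\Phi_\lambda'''' = \psi_\lambda''' $, and by the chain rule $\psi_\lambda'''(t) = \lambda^{-4}\,\psi_0'''(t/\lambda)$, so $\normi{\Phi_\lambda''''}\le C\lambda^{-4}$. Thus $\Phi_\lambda$ is $\tfrac{C}{\lambda^4}$‑bounded in the sense of the paper, with $C$ independent of $\lambda$, completing the proof.

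\textbf{Main obstacle.} There is no deep difficulty here; the only thing requiring care is the scaling bookkeeping. One must choose $\psi_\lambda$ so that its support is contained in $[-\lambda,\lambda]$ (so that the entire transition of $\Phi_\lambda$ from $0$ to $1$ is confined to that interval, as the statement demands), and one must track that each differentiation of the rescaled bump contributes a factor $\lambda^{-1}$, so that four differentiations of $\Phi_\lambda$ produce exactly the claimed $\lambda^{-4}$ while the leftover constant $\normi{\psi_0'''}$ stays absolute. If one prefers to avoid invoking the existence of a $C^\infty$ bump, the same argument goes through verbatim with $\psi_\lambda$ replaced by an explicit piecewise‑polynomial density of degree at least $3$ supported on $[-\lambda,\lambda]$ (so that $\Phi_\lambda$ is a piecewise polynomial of degree at least $4$ whose fourth derivative is a bounded step function), which makes every estimate fully elementary.
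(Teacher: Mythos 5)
Your proof is correct, and it is the standard mollification argument; the paper itself does not prove this statement but cites it from \cite{MosselOO05} (Lemma 3.21), whose proof is of the same convolution/smoothing flavor. The main construction (integrating a $C^\infty$ bump $\psi_\lambda$ of mass $1$ supported on $[-\lambda,\lambda]$, so that $\Phi_\lambda''''=\psi_\lambda'''$ and the scaling gives exactly $\lambda^{-4}\normi{\psi_0'''}$) checks out against the paper's definition of $K$-boundedness. One minor caveat about your parenthetical ``fully elementary'' alternative: if $\Phi_\lambda''''$ is a genuine step function with jumps, then $\Phi_\lambda'''$ has corners and the fourth derivative fails to exist at the breakpoints, which technically violates the paper's requirement that fourth-order derivatives exist everywhere; to make that variant work you would need the density to be $C^3$ across the breakpoints (e.g.\ a higher-degree spline), or simply stay with the smooth bump as in your main argument.
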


By the above lemma, we can find a $\frac{C}{\alpha^4}$-bounded function $\Phi_{\alpha}$ such that  $\Phi_{\alpha}(\vec{l}(\cA)-\theta)$ is equal to
$\sgn(\vec{l}(\cA) -\theta)$ except when $\vec{l}(\cA)\in [\theta-\alpha,\theta+\alpha]$ and  $\Phi_{\alpha}(\vec{l}(\cB)-\theta)$ is equal to  $\sgn(\vec{l}(\cB) -\theta)$ except when $\vec{l}(\cB) \in [\theta-\alpha,\theta+\alpha]$. Also for any $x\in \R$, $|\sgn(x) - \Phi_\alpha(x)|\leq 1$ as $\sgn(x)$ and $\Phi_\alpha(x)$ are both in $[0,1]$.

Overall, we have
\begin{multline*}
        \left|  \Ex_{\cA}\left[\sgn\Big(\vec{l}(\cA)-\theta\Big)\right] -
        \Ex_{\cB}\left[\sgn\Big( \vec{l}(\cB) -\theta \Big)\right] \right| \leq \left|     \Ex_{\cA}\left[\sgn\Big(\vec{l}(\cA)-\theta\Big)\right] -       \Ex_{\cA}\left[\Phi_{\alpha}\Big(\vec{l}(\cA)-\theta\Big)\right] \right|\\ + \left | \Ex_{\cA}\left[\Phi_{\alpha}\Big(\vec{l}(\cA)-\theta\Big)\right]  - \Ex_{\cB}\left[\Phi_{\alpha}\Big( \vec{l}(\cB) -\theta \Big)\right]\right| + \left |\Ex_{\cB}\left[\Phi_{\alpha}\Big( \vec{l}(\cB) -\theta \Big)\right] -
        \Ex_{\cB}\left[\sgn\Big( \vec{l}(\cB) -\theta \Big)\right] \right| \\\leq \frac{C}{\alpha^4} \sum_{i \in [R]}
        \|\bvec{l}{i}\|_1^4 + 2 c(\alpha).
\end{multline*}
\end{proof}

\section{Hardness of Smooth \klabelcover}\label{sec:sm}
First we state the bipartite smooth Label Cover given by Khot \cite{Khot03}.
Our reduction is similar to the one in $\cite{GKS07J}$ but in addition requires proving the smoothness property.

\begin{definition} A Label Cover problem $\cL(G(W,V,E),M,N,\{\pi^{v,w}|(w,w)\in E\})$ consists of a  bipartite graph $G(V,W,E)$ with bipartition $V$ and $W$. $M,N$ are two positive integers such that $M>N$. There are  projection functions $\pi^{v,w}: [M]\to [N]$ associated with each edge $(w,v) \in E$ where $v\in V, w\in W$. All vertices in $W$ have the same degree (i.e., $W$-side regular). For any labeling $\Lambda: V\to [M]$ and $\Lambda: W\to [N]$, an edge is said to be satisfied if $\pi^{v,w}(\Lambda(v)) = \Lambda(w)$. We define $Opt(\cL)$ to be the maximum fraction of edges satisfied by any labeling.
\end{definition}
\begin{theorem}
There is an absolute constant $\gamma >0$ such that for all integer parameters $u$ and $J$, it is NP-hard to distinguish the following two cases: A Label Cover problem $\cL(G(W,V,E),N,M,\{\pi^{v,w} | (w,v)\in E\})$ with $M = 7^{(J+1)u}$ and $N= 2^{u}7^{Ju}$ having
        \begin{itemize}
                \item $Opt(\cL) = 1$ or
                \item $Opt(\cL)\leq 2^{-2\gamma u}$.
        \end{itemize}
In addition, the Label Cover has the following properties:
\begin{itemize}
        \item for each $\pi^{v,w}$ and any $i \in [N]$, we have $|(\pi^{v,w})^{-1}(i)|\leq 4^u$;
        \item for a fixed vertex $w$ and a randomly picked neighbor $v$ of $w$,
        \[
                \forall i,j\in [M], \Pr[\pi^{v,w}(i) = \pi^{v,w}(j)] \leq 1/J.
        \]

\end{itemize}
\end{theorem}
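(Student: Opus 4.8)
The plan is to reduce from a gap version of 3SAT through a ``smoothened'' parallel repetition, in the spirit of Khot \cite{Khot03} and close to the reduction of \cite{GKS07J}; the one genuinely new ingredient is the smoothness check (alternatively one could just cite Khot's smooth Label Cover, but it is convenient to have it in exactly the stated form). By the PCP theorem there is a universal $\eps_0>0$ for which it is NP-hard to distinguish satisfiable 3SAT-5 formulas $\phi$ (exactly $3$ literals per clause, each variable in exactly $5$ clauses) from ones that are at most $(1-\eps_0)$-satisfiable; the standard clause--variable two-prover game $\calG$ on $\phi$ --- the verifier sends a uniform clause $C$ to prover $1$ and a uniform variable $x\in C$ to prover $2$, prover $1$ answers one of the $7$ satisfying assignments of $C$, prover $2$ answers a bit, accept iff consistent --- then has $\val(\calG)=1$ in the first case and $\val(\calG)\le 1-\eps_0/3$ in the second. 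Now build the bipartite Label Cover $\cL$ as follows: a vertex $v\in V$ is an ordered $(J+1)u$-tuple $(C_1,\dots,C_{(J+1)u})$ of clauses, and a label of $v$ is a tuple $(\sigma_1,\dots,\sigma_{(J+1)u})$ of satisfying assignments, one $\sigma_\ell$ for each $C_\ell$, so there are exactly $7^{(J+1)u}=M$ labels. A neighbor $w$ is formed by picking a uniform size-$u$ subset $S\subseteq[(J+1)u]$ and, for each $\ell\in S$, a uniform variable $x_\ell\in C_\ell$; a label of $w$ records a bit for each $x_\ell$ ($\ell\in S$) and a satisfying assignment of $C_\ell$ for each $\ell\notin S$, giving exactly $2^u7^{Ju}=N$ labels, and $\pi^{v,w}(\sigma)=\big((\sigma_\ell(x_\ell))_{\ell\in S},(\sigma_\ell)_{\ell\notin S}\big)$. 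Edges are sampled by drawing $v$ with i.i.d.\ uniform clauses and then $w$ as above; every $w$ has exactly $5^u$ neighbors, so $\cL$ is $W$-regular.

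Completeness is immediate: a satisfying assignment $A$ of $\phi$ gives the labeling $v\mapsto(A|_{C_1},\dots,A|_{C_{(J+1)u}})$ and $w$ the matching values, satisfying every edge, so $\Opt(\cL)=1$. For soundness, note that $\{(C_\ell,x_\ell):\ell\in S\}$ is distributed as $u$ i.i.d.\ questions of $\calG$, that any labeling of $\cL$ restricts coordinate-wise to a pair of strategies for the $u$-fold repetition $\calG^{\otimes u}$, and that satisfying an $\cL$-edge forces winning all $u$ of those coordinates (the $Ju$ ``fully revealed'' clauses only add constraints and cannot help the provers). Hence $\Opt(\cL)\le\val(\calG^{\otimes u})$, and by Raz's parallel repetition theorem $\val(\calG^{\otimes u})\le(1-\eps_0/3)^{\Omega(u)}\le 2^{-2\gamma u}$ for an absolute constant $\gamma>0$ when $\phi$ is at most $(1-\eps_0)$-satisfiable. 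With completeness this is the asserted dichotomy.

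It remains to verify the two structural properties. The $d$-to-$1$ bound: a $w$-label fixes $\sigma_\ell$ for $\ell\notin S$ and $\sigma_\ell(x_\ell)$ for $\ell\in S$, and for each $\ell\in S$ at most $4$ satisfying assignments of $C_\ell$ have the prescribed value on $x_\ell$ ($4$ if that value already satisfies $C_\ell$, at most $3$ otherwise), so at most $4^u=d$ labels of $v$ lie over a given $w$-label. Smoothness: fix $v$ and two distinct labels $\sigma\neq\sigma'$ and let $\mathrm{Diff}=\{\ell:\sigma_\ell\neq\sigma'_\ell\}\neq\emptyset$; over a random neighbor $w$ (a random $u$-subset $S$ and random $x_\ell\in C_\ell$), $\pi^{v,w}(\sigma)=\pi^{v,w}(\sigma')$ forces $\sigma_\ell=\sigma'_\ell$ for every $\ell\notin S$, i.e.\ $\mathrm{Diff}\subseteq S$, and $\Pr[\mathrm{Diff}\subseteq S]=\binom{(J+1)u-|\mathrm{Diff}|}{u-|\mathrm{Diff}|}\big/\binom{(J+1)u}{u}\le(J+1)^{-|\mathrm{Diff}|}\le 1/J$. (Here ``fixed vertex'' is read as the $[M]$-side vertex, matching Theorem~\ref{thm:sml}.)

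The technically heaviest step is the soundness, which goes through Raz's parallel repetition theorem and needs the verification that the $Ju$ extra coordinates cannot help the provers and that the $u$ checked coordinates marginally form an i.i.d.\ product of the base game. The genuinely new point beyond \cite{GKS07J} is the shape of the construction itself: one must make the ``checked'' part of each $v$ a \emph{random} $u$-subset of its $(J+1)u$ clauses, so that two distinct labels --- which necessarily disagree on some clause --- are separated by the projection unless that clause falls into the random subset, an event of probability at most $1/(J+1)$. Balancing this (which wants the checked part small) against soundness (which wants the checked part to still carry a full $u$-fold repetition) is exactly what the parameters $M=7^{(J+1)u}$, $N=2^u7^{Ju}$, $d=4^u$ are engineered to achieve.
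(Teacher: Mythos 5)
You should know that the paper does not actually prove this statement: it is quoted as Khot's bipartite smooth Label Cover \cite{Khot03}, and the paper's own work in this appendix is only the reduction from it to the $k$-uniform version (Theorem~\ref{thm:sml}). So there is no internal proof to compare against; your proposal is, in essence, a correct reconstruction of exactly the cited construction, with the same parameters ($M=7^{(J+1)u}$, $N=2^u7^{Ju}$, preimage bound $4^u$, $W$-degree $5^u$ from 3SAT-5 regularity). The completeness argument, the $4^u$-to-$1$ bound (at most $4$ satisfying assignments of a $3$-clause once one variable's value is fixed), and the smoothness computation $\Pr[\mathrm{Diff}\subseteq S]=\binom{(J+1)u-|\mathrm{Diff}|}{u-|\mathrm{Diff}|}/\binom{(J+1)u}{u}\le (J+1)^{-|\mathrm{Diff}|}\le 1/J$ all check out, and your reading of the smoothness condition with the roles of $v$ and $w$ swapped (fixed $[M]$-side vertex, random neighbor) is indeed the form the paper uses in proving Theorem~\ref{thm:sml}, so you handled that typo correctly.

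The one step you assert rather than prove is the soundness inequality $\Opt(\cL)\le\val(\calG^{\otimes u})$; ``the $Ju$ fully revealed clauses only add constraints and cannot help the provers'' bundles two claims, and the second one is exactly the point that needs an argument (revealing correlated side information can in general change the value of a game). The standard fix, which you should write out: first drop the equality constraints on coordinates outside $S$ (this only increases the value); then condition on $S$ and on $(C_\ell)_{\ell\notin S}$. Since all $(J+1)u$ clauses are i.i.d.\ and the variables are drawn only inside $S$, this conditioned data is independent of the checked pairs $((C_\ell,x_\ell))_{\ell\in S}$, and under any fixing of it prover~1's input is determined by $(C_\ell)_{\ell\in S}$, prover~2's input by $(x_\ell)_{\ell\in S}$, and the relaxed acceptance predicate is precisely that of $\calG^{\otimes u}$ on those coordinates. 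Hence every conditional success probability of the labeling-induced strategies is at most $\val(\calG^{\otimes u})$, and averaging over the conditioning gives the bound; Raz's parallel repetition theorem (constant answer alphabet) then yields $\Opt(\cL)\le 2^{-2\gamma u}$ for an absolute $\gamma>0$ as you say. With that paragraph added, plus a one-line remark that for fixed constants $u,J$ the construction has size $m^{O(Ju)}$ and is therefore polynomial-time, your proof is complete and matches the construction the paper cites.
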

Below we prove Theorem \ref{thm:sml}.
\begin{proof}

Given an instance of bipartite Label Cover $\cL(G(V,W,E),M,N,\{\pi^{v,w}|(w,v)\in E\})$, we can convert it to a smooth \klabelcover instance $\cL'$ as  follows. The vertex set of $\cL'$ is $V$ and we generate the hyperedge set $E'$ and projections associated with the hyperedges in the following way:
\begin{enumerate}
        \item pick a vertex $w\in W$;
        \item pick a $k$-tuple of $w$'s neighbors $v_1,\ldots, v_k$ and add a hyperedge $e = (v_1,\ldots,v_k)$ to $E'$ with projections $\pi^{v_i,e} = \pi^{v_i,w}$ for each $i \in [k]$.
\end{enumerate}

\paragraph{Completeness:} If $Opt(\cL) = 1$, then there exists a labeling $\Lambda$ such that for every edge $(w,v)\in E$, $\pi^{v,w}(\Lambda(v)) =\Lambda(w)$. We can simply take the restriction of labeling $\Lambda$ on $V$ for the smooth \klabelcover instance $\cL'$.  For any hyperedge $e = (v_1,v_2,\ldots,v_k)$ generated by $w\in W$, we know $\pi^{v_i,e}(\Lambda(v_i)) = \Lambda(w) =\pi^{v_j,e}(\Lambda(v_j))$ for any $i,j\in [k]$.

\paragraph{Soundness:} If $Opt(\cL) \leq 2^{-2\gamma u}$, then we can weakly satisfy at most $2k^2 2^{-\gamma u}$-fraction of the hyperedges in $\cL'$.
This can be proved via contrapositive argument. Suppose there is a labeling strategy $\Lambda$ (defined on $V$) for the smooth \klabelcover that weakly satisfies $\alpha \geq 2k^2 2^{-\gamma u}$ fraction of the hyperedges. Extend the labelling to $W$ as follows: For each vertex $w \in W$ and a neighbor $v \in V$, let $\pi_{v,w}(\Lambda(v))$ be the label {\it recommended} by $v$ to $w$.    Simply assign for every vertex $w \in W$, the label most {\it recommended} by its neighbours.

By the fact that $\Lambda$ weakly satisfies $\alpha$-fraction of hyperedges in $\cL'$, we know that if we pick a vertex $w$ and randomly pick two of its neighbors $v_1,v_2$ then
\[
        \Pr\left[\pi^{v_1,w}(\Lambda(v_1)) = \pi^{v_2,w}(\Lambda(v_2))\right] \geq \frac{\alpha}{{k\choose 2}} \geq \frac{2\alpha}{k^2}.
\]
  By an averaging argument, at least $\frac{\alpha}{k^2}$-fraction of the vertices $w\in W$, will have the following property: among all the possible pairs of $w$'s neighbors, at least $\frac{\alpha}{k^2}$-fraction of pairs {\it recommend} the same label for $w$.  Let us call such a $w$ to be a {\it nice}.
It is easy to see that for every {\it nice} $w$, the most {\it recommended} label is actually recommended by at least $\frac{\alpha}{k^2}$ fraction of its neighbours.  Hence, the extended labelling satisfies at least $\alpha/k^2$ fraction of edges incident at each {\it nice} $w \in W$.  Using $W$-side regularity, we conclude that the extended labelling satisfies  $\frac{\alpha^2}{k^4} = 4 \cdot 2^{-2\gamma u}$-fraction the edges of $\cL$ -- a contradiction.

\paragraph{Smoothness of $\cL'$:} For any given vertex $v$ in $\cL'$, we want so show that if we randomly pick an hyperedge $e'$ containing $v$, then for the projection
$\pi^{v,e}$ as defined in $\cL'$,
\[
        \forall i,j\in [M], \Pr [\pi^{v,e'}(i) = \pi^{v,e'}(j)] \leq \frac{1}{J}.
\]

To see this, notice that all vertices in $W$ have the same degree; picking a
projection $\pi^{v,e'}$ using the above procedure is the same as randomly picking a neighbor $w$ of $v$ and using the projection $\pi^{v,w}$ defined in $\cL$. Therefore,
\[
      \forall i,j\in [M], \Pr[\pi^{v,e'}(i) = \pi^{v,e'}(j)  = \Pr [\pi^{v,w}(i) = \pi^{v,w}(j)] \leq \frac{1}{J}.
\]
\end{proof}

\end{document}